\newcommand{\attr}{\textup{Attr}}
\newcommand{\infin}{\textup{inf}}
\newcommand{\gamea}{\mathcal{G}}
\newcommand{\pa}{\textup{0}}
\newcommand{\pb}{\textup{1}}
\newcommand{\play}{\textup{Play}}
\newcommand{\hist}{\textup{Hist}}
\newcommand{\path}{\textup{Path}}
\newcommand{\st}{\textup{st}}
\newcommand{\val}{\textup{val}}
\newcommand{\dist}{\textup{dist}}
\newcommand{\nextstate}{\textup{next}}
\newcommand{\nats}{\mathbb{N}}
\newcommand{\np}{\textsc{NP}}
\newcommand{\conp}{\textsc{co-NP}}
\newcommand{\up}{\textsc{UP}}
\newcommand{\coup}{\textsc{co-UP}}
\newcommand{\mycut}[1]{}
\newsavebox{\@brx}
\newcommand{\llangle}[1][]{\savebox{\@brx}{\(\m@th{#1\langle}\)}%
  \mathopen{\copy\@brx\kern-0.5\wd\@brx\usebox{\@brx}}}
\newcommand{\rrangle}[1][]{\savebox{\@brx}{\(\m@th{#1\rangle}\)}%
  \mathclose{\copy\@brx\kern-0.5\wd\@brx\usebox{\@brx}}}
\begin{document}

\title{Winning Cores in Parity Games}

\author{Steen Vester}

\institute{Technical University of Denmark, Kgs. Lyngby, Denmark\\ \email{stve@dtu.dk}}

\maketitle

\begin{abstract}

We introduce the novel notion of winning cores in parity games and develop a deterministic polynomial-time under-approximation algorithm for solving parity games based on winning core approximation. Underlying this algorithm are a number properties about winning cores which are interesting in their own right. In particular, we show that the winning core and the winning region for a player in a parity game are equivalently empty. Moreover, the winning core contains all fatal attractors but is not necessarily a dominion itself. Experimental results are very positive both with respect to quality of approximation and running time. It outperforms existing state-of-the-art algorithms significantly on most benchmarks.
\end{abstract}

\section{Introduction}

Solving parity games \cite{EJ91} is an important problem of both theoretical and practical interest. It is known to be in $\np \cap \conp$ \cite{EJS01} and $\up \cap \coup$ \cite{Jur98} but in spite of the development of many different algorithms (see e.g. \cite{Zie98,Jur00,VJ00,JPZ06,Sch07}), frameworks for benchmarking such algorithms \cite{FL09,Kei14} and families of parity games designed to expose the worst-case behaviour of existing algorithms \cite{Jur00,Fri09,Fri11} it has remained an open problem whether a polynomial-time algorithm exists.

Various problems for which polynomial-time algorithms are not known can been reduced in polynomial time to the problem of solving parity games. Among these are model-checking of the propositional $\mu$-calculus \cite{Koz83,EL86,Sti95}, the emptiness problem for parity automata on infinite binary trees \cite{Mos84,EJS01} and solving boolean equation systems \cite{Mal97}. For relations to other problems in logic and automata theory, see e.g. \cite{GTW02}.

Some of the most notable algorithms from the litterature of solving parity games include Zielonkas algorithm \cite{Zie98} using $O(n^d)$ time, the small progress measures algorithm \cite{Jur00} using $O(d \cdot m \cdot (n/d)^{d/2})$ time, the strategy improvement algorithm \cite{VJ00} using $O(n\cdot m \cdot 2^m)$ time, the big step algorithm \cite{Sch07} using $O(m\cdot n^{d/3})$ time and the dominion decomposition algorithm \cite{JPZ06} using $O(n^{\sqrt{n}})$ time. Here, $n$ is the number of states in the game, $m$ is the number of transitions and $d$ is the maximal color of the game.

The main contributions of this paper are to introduce the novel concept of \emph{winning cores} in parity games and develop a fast deterministic polynomial-time under-approximation algorithm for solving parity games based on properties of winning cores. Two different, but equivalent, definitions of winning cores are given both of which are used to show a number of interesting properties. One is based on the new notion of \emph{consecutive dominating sequences}.

We perform an investigation of winning cores and show that the winning core of a player is always a subset of the winning region of the player and more importantly that the winning core of a player is empty if and only if the winning region of the player is empty. A result of \cite{DKT12} then implies that emptiness of the winning core of a player can be decided in polynomial time if and only if parity games can be solved in polynomial time. We further show that the winning cores for the two players contain all fatal attractors \cite{HKP13,HKP14} and show some recursive properties of winning cores which are similar in nature to the properties of winning regions that form the basis of the recursive algorithms in \cite{Zie98,JPZ06,Sch07} for solving parity games.

We further show that winning cores are not necessarily dominions \cite{JPZ06} which is interesting on its own. To the knowledge of the author no meaningful subsets of the winning regions have been characterized in the litterature which were not dominions. As such, several of the existing algorithms for solving parity games are based on finding dominions, e.g. \cite{Zie98,JPZ06,Sch07}. However, it was recently shown in \cite{GLMMO15} that there is no algorithm which decides if there exists a dominion with at most $k$ states in time $n^{o(\sqrt{k})}$ unless the exponential-time hypothesis fails. Thus, going beyond dominions could very well be important in the search for a polynomial-time algorithm for solving parity games. Winning cores provide a viable direction for this search.

Next, we show the existence of memoryless optimal strategies for games with a certain type of prefix-dependent objectives using a result of \cite{GZ05}. Based on this we provide a decreasing sequence of sets of states which converges to the winning core in at most $n$ steps. It is also shown that winning cores can be computed in polynomial time if and only if parity games can be solved in polynomial time and that winning core computation is in $\up \cap \coup$ by a reduction to solving parity games.

The correctness of the under-approximation algorithm relies on fast convergence of the sequence mentioned above. It uses $O(d \cdot n^2 \cdot (n+m))$ time and $O(d+n+m)$ space. It is an under-approximation algorithm in the sense that it returns subsets of the winning regions for the two players.

The algorithm has been implemented in OCaml on top of the \textsc{PgSolver} framework \cite{FL09} and experiments have been carried out both to test the quality of the approximations as well as the practical running times. The experimental results are very positive as it solved all games from the benchmark set of \textsc{PgSolver} completely and solved a very high ratio of randomly generated games completely. Further, on most of the benchmark games it outperformed the existing state-of-the-art algorithms significantly and solved games with more than $10^7$ states. The algorithm also performed very well compared to the best existing partial solver for parity games \cite{HKP13,HKP14} both with respect to quality of approximation and running time.

Chapter 2 contains preliminary definitions and Chapter 3 introduces consecutive dominating sequences. In Chapter 4 winning cores are introduced and a number of properties about them are presented. In Chaper 5 the computational complexity of computing winning cores is analyzed. In Chapter 6 the approximation algorithm is presented and Chapter 7 contains experimental results. Finally, Chapter 8 contains concluding remarks.

\section{Preliminaries}

A \emph{parity game} \cite{EJ91} is played by two players called player $\pa$ and player $\pb$. It is played in a finite transition system where the states are partitioned into states that player $\pa$ controls and states that player $\pb$ controls. Further, each state is colored with a natural number. The game is played by placing a token in an initial state $s_0$. The player controlling the current state must choose a successor state to move the token to while respecting the transition relation. Then the player controlling the successor state chooses a new successor state and so on indefinitely. We require that the transition relation is total and thus the play is always an infinite sequence of states. Player $\pa$ wins if the greatest color occuring infinitely often along the play is even and player $\pb$ wins if the greatest color occuring infinitely often is odd.

\subsection{Basic definitions}

More formally we define parity games as follows.

\begin{definition}
 A parity game is a tuple $\gamea = (S, S_\pa, S_\pb, R, c)$ such that
 \begin{itemize}
  \item $S$ is a finite set of states
  \item $S_\pa$ and $S_\pb$ partitions $S$. That is, $S_\pa \cup S_\pb = S$ and $S_\pa \cap S_\pb = \emptyset$  
  \item $R \subseteq S \times S$ is the transition relation which is total  
  \item $c: S \mapsto \{1,...,d\}$ is a coloring function specifying a color for each state  
 \end{itemize}
\end{definition}

\begin{example}
\label{ex:game}
 A simple example of a parity game can be seen in Figure \ref{fig:first_game}. Circle states are in $S_0$ and square states in $S_1$. The values drawn inside states are colors. There is an arrow from state $s$ to state $t$ if $(s,t) \in R$.
\begin{figure}
 \begin{center}
 \begin{tikzpicture}

\tikzstyle{every node}=[ellipse, draw=black,
                        inner sep=0pt, minimum width=9pt, minimum height=9pt]

\draw (2,7) node [rectangle, label = below right: $s_0$] (s0) {$2$};    
\draw (3,7) node [label = below right: $s_1$] (s1) {$1$};
\draw (1,7) node [label = below right: $s_2$] (s2) {$5$};
\draw (0,7) node [label = below right: $s_3$] (s3) {$4$};
\draw (4,7) node [label = below right: $s_4$] (s4) {$3$};

\path[->] (s0) edge [bend left] (s1) {};
\path[->] (s1) edge [bend left] (s0) {};
\path[->] (s0) edge (s2) {};
\path[->] (s2) edge (s3) {};
\path[->] (s1) edge (s4) {};
\path[->] (s3) edge [loop left] (s3) {};
\path[->] (s4) edge [loop right] (s4) {};

\end{tikzpicture}
\end{center}
\caption{Example of a parity game.}
\label{fig:first_game}
\end{figure}
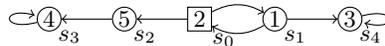
 \end{example}

 For the rest of this section and Section \ref{sec:dom} and \ref{sec:core} we fix a parity game $\gamea = (S, S_0, S_1, R, c)$ with colors in $\{1,...,d\}$ and $n$ states. 
 
 For a set $X$ we denote by $X^*, X^+$ and $X^\omega$ the sets of finite, non-empty finite and infinite sequences of elements in $X$ respectively. For a sequence $\rho = s_0 s_1 ...$ we let $\rho_{\geq i} = s_i s_{i+1} ...$, $\rho_{> i} = s_{i+1} s_{i+2} ...$, $\rho_{\leq i} = s_0 s_1 ... s_i$ and $\rho_i = s_i$. A \emph{play} is an infinite sequence $\rho = s_0 s_1 ... \in S^\omega$ that respects the transition relation. That is, $(s_i, s_{i+1}) \in R$ for all $i \geq 0$. A \emph{history} is a finite, non-empty prefix of a play. A \emph{path} is either a history or a play. The set of plays,  histories and paths in $\gamea$ are denoted $\play(\gamea)$, $\hist(\gamea)$ and $\path(\gamea)$ respectively. The set of plays, histories and paths with initial state $s_0$ are denoted $\play(\gamea, s_0)$, $\hist(\gamea,s_0)$ and $\path(\gamea, s_0)$ respectively.
 
 For a path $\rho = s_0 s_1 ...$ in $\gamea$ we define $c(\rho) = c(s_0) c(s_1) ...$ and a set $P$ of paths define $c(P) = \{c(\rho) \mid \rho \in P\}$. For a sequence $\pi = e_0 e_1 ...$ let $\infin(\pi) = \{e \mid \textup{there exists infinitely many } i \textup{ s.t. } e = e_i \}$. We define the \emph{parity objective} $\Omega_j$ for player $j \in \{\pa, \pb \}$ by
$$\Omega_j = \{\pi \in \nats^\omega \mid \exists k . \forall i \geq 0 . \pi_i \leq k \wedge \max (\infin(\pi)) \equiv j \textup{ (mod } 2) \}$$
where $\nats$ is the set of non-negative integers. Further, let
$$\Lambda_j = \{\pi \in \Omega_j \mid \max_{i > 0} \pi_i \equiv j \textup{ (mod } 2) \}$$
Note that the initial element of the sequence is not counted in the definition of $\Lambda_j$.

A \emph{strategy} for player $j$ is a partial function $\sigma_j: \hist(\gamea) \rightarrow S$ defined for histories $s_0 ... s_k$ such that $s_k \in S_j$ with the requirement that $(s_k, \sigma_j(s_0 ... s_k)) \in R$. A \emph{memoryless strategy} for player $j$ is a strategy $\sigma$ such that $\sigma(h) = \sigma(h')$ whenever the last state of $h$ and the last state of $h'$ are the same. With slight abuse of notation we write $\sigma(s) = s'$ when the memoryless strategy $\sigma$ chooses successor state $s'$ for all histories $h$ ending in $s$.

A play (resp. history) $s_0 s_1 ...$ (resp. $s_0 ... s_k$) is \emph{compatible} with strategy $\sigma_j$ for player $j$ if $\sigma_j(s_0 ... s_i) = s_{i+1}$ for each $i \geq 0$ (resp. $0 \leq i < k$) such that $s_i \in S_j$. The set of plays (resp. histories) compatible with strategy $\sigma_j$ is denoted $\play(\gamea, \sigma_j)$ (resp. $\hist(\gamea, \sigma_j)$). The subsets where we restrict to plays, histories and paths with initial state $s_0$ are denoted $\play(\gamea, s_0, \sigma_j)$, $\hist(\gamea, s_0, \sigma_j)$ and $\path(\gamea, s_0, \sigma_j)$ respectively.

We say that $\sigma_j$ is a \emph{winning strategy} for player $j$ from state $s_0$ if $c(\play(\gamea, s_0, \sigma_j)) \subseteq \Omega_j$. When such a strategy exists we call $s_0$ a \emph{winning state} for player $j$. We write $W_j(\gamea)$ for the set of winning states of player $j$ in $\gamea$. Since parity games are memoryless determined \cite{EM79,EJ91} we have $W_0(\gamea) \cup W_1(\gamea) = S$ and $W_0(\gamea) \cap W_1(\gamea) = \emptyset$. Further, 
there is a memoryless strategy for player $j$ that is winning from every $s \in W_j(\gamea)$.

\subsection{Restricted parity games}

We define the \emph{restricted parity game} $\gamea \restriction S' = (S', S'_\pa, S'_\pb, R', c')$ for a subset $S' \subseteq S$ by
\begin{itemize}
 \item $S'_j = S' \cap S_j$ for $j \in \{\pa, \pb\}$
  
 \item $R' = R \cap (S' \times S')$
 
 \item $c'(s) = c(s)$ for every $s \in S'$
 
\end{itemize}

Intuitively, the restricted parity game $\gamea \restriction S'$ is the same as $\gamea$ where all states not in $S'$ are removed and all transitions $(s,s')$ with either $s$ or $s'$ not in $S'$ are removed. Note that the restricted parity game is only a well-defined parity game when $R'$ is total.

\subsection{Attractor sets}

The notion of an \emph{attractor set} is well-known \cite{Zie98} and is the set of states from which a player $j$ can ensure reaching a set of target states.

\begin{definition}
 The attractor set $\attr_j(\gamea, T)$ for a target set $T \subseteq S$ and a player $j$ is the limit of the sequence $\attr^i_j(\gamea, T)$ where
\\
\\
\begin{tabular}{>{\hspace{-0.5pc}}l l<{\hspace{-2pt}}}
$\attr^0_j(\gamea, T) $ & $= T$ \\
$\attr^{i+1}_j(\gamea, T) $ & $= \attr^i_j(\gamea, T)$\\
& $\cup \{s \in S_j \mid \exists t . (s,t) \in R \wedge t \in \attr^i_j(\gamea, T) \}$ \\
& $\cup \{s \in S_{1-j} \mid \forall t . (s,t) \in R \Rightarrow t \in \attr^i_j(\gamea, T) \}$\\
\end{tabular}

\end{definition}

The attractor set and a memoryless strategy to ensure reaching the target from this set can be computed in time $O(n+m)$ in a game with $n$ states and $m$ transitions \cite{dAHK98}. The positive attractor $\attr^+_j(\gamea, T)$ is the set of states from which player $j$ can ensure reaching $T$ in at least 1 step. Formally,
\\
\\
\begin{tabular}{l l}
 $\attr^+_j(\gamea,T) = $ & $\attr_j(\gamea,$ \\
 & $\{s \in S_j \mid \exists t \in T . (s,t) \in R\}$ \\
 & $\cup \{s \in S_{1-j} \mid \forall t \in S. (s,t) \in R \Rightarrow t \in T\})$
\end{tabular}

\subsection{$j$-closed sets and dominions}

A subset $S' \subseteq S$ of states in a parity game is called $j$-closed if

\begin{enumerate}
 \item For every $s \in S_{1-j} \cap S'$ there exists no $t \in S \setminus S'$ such that $(s,t) \in R$
 
 \item For every $s \in S_j \cap S'$ there exists $t \in S'$ such that $(s,t) \in R$
\end{enumerate}

Thus, a set of states is $j$-closed if and only if player $j$ can force the play to stay in this set of states.

A $j$-dominion \cite{JPZ06} for player $j$ is a set $T \subseteq S$ such that from every state $s \in T$ player $j$ has a strategy $\sigma$ such that $c(\play(\gamea, s, \sigma)) \subseteq \Omega_j$ and $\play(\gamea, s, \sigma) \subseteq T^\omega$. That is, from every state in a $j$-dominion, player $j$ can ensure to win while keeping the play inside the $j$-dominion. Thus, a $j$-dominion is $j$-closed.

\begin{proposition}[\cite{JPZ06}]
 $W_j(\gamea)$ is a $j$-dominion.
\end{proposition}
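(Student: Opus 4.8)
The plan is to exploit memoryless determinacy of parity games together with the prefix-independence of the parity objective $\Omega_j$. By the memoryless determinacy result cited just before the statement, there is a memoryless strategy $\sigma$ for player $j$ that is winning from every state in $W_j(\gamea)$; that is, $c(\play(\gamea, s, \sigma)) \subseteq \Omega_j$ for all $s \in W_j(\gamea)$. The goal is to show that this single $\sigma$ also witnesses that $W_j(\gamea)$ is a $j$-dominion, which reduces to proving that every play compatible with $\sigma$ starting in $W_j(\gamea)$ stays inside $W_j(\gamea)$.

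The core observation is that following $\sigma$ never leaves $W_j(\gamea)$ in a single step. First I would record that $\Omega_j$ is prefix-independent: it constrains only $\max(\infin(\pi))$, and deleting a finite prefix does not change which colors occur infinitely often, so $\pi \in \Omega_j$ iff $\pi_{\geq 1} \in \Omega_j$. Now fix $s \in W_j(\gamea)$ and consider any successor $t$ that can be reached from $s$ while following $\sigma$, namely $t = \sigma(s)$ if $s \in S_j$, or any $t$ with $(s,t) \in R$ if $s \in S_{1-j}$. I claim $t \in W_j(\gamea)$. The two-state path $s\,t$ is a prefix of some play compatible with $\sigma$, and because $\sigma$ is memoryless its choices from $t$ onward depend only on the current state; hence $\play(\gamea, t, \sigma)$ coincides exactly with the set of $\sigma$-continuations of $s\,t$. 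Each such continuation is the tail $\rho_{\geq 1}$ of a play $\rho \in \play(\gamea, s, \sigma)$, and since $c(\rho) \in \Omega_j$ and $\Omega_j$ is prefix-independent we get $c(\rho_{\geq 1}) \in \Omega_j$. Therefore $\sigma$ is winning from $t$, so $t \in W_j(\gamea)$.

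With this single-step claim in hand, a straightforward induction on the length of the prefix shows that every play $\rho \in \play(\gamea, s, \sigma)$ with $s \in W_j(\gamea)$ satisfies $\rho \in W_j(\gamea)^\omega$. Combined with the fact that $\sigma$ is winning, i.e. $c(\play(\gamea, s, \sigma)) \subseteq \Omega_j$, this is precisely the definition of $W_j(\gamea)$ being a $j$-dominion, witnessed by $\sigma$.

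I expect the only delicate point to be the single-step claim, and specifically the interplay between two ingredients that must both be invoked: prefix-independence of $\Omega_j$, so that truncating the initial state $s$ does not change who wins the tail, and memorylessness of $\sigma$, so that ``playing $\sigma$ afresh from $t$'' genuinely coincides with ``continuing a $\sigma$-play that has reached $t$'', ruling out any hidden dependence on the history $s\,t$. Without memorylessness the argument would risk circularity, since one could not identify the plays in $\play(\gamea, t, \sigma)$ with tails of plays from $s$; it is exactly the combination of these two facts that makes $t \in W_j(\gamea)$ follow rather than being assumed.
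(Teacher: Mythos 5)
The paper does not actually prove this proposition---it is imported directly from \cite{JPZ06}---so there is no in-paper argument to compare against, but your proof is correct and is the standard one. Your two ingredients are exactly right: memoryless determinacy (stated in the paper's preliminaries) gives a single memoryless $\sigma$ winning from all of $W_j(\gamea)$, and combining memorylessness (so that $\play(\gamea, t, \sigma)$ coincides with the tails of $\sigma$-plays passing through $t$) with prefix-independence of $\Omega_j$ (which is precisely the paper's Lemma~\ref{lem:suffix_winning}, so you need not reprove it) yields the one-step closure of $W_j(\gamea)$ under $\sigma$, from which the dominion property, $\play(\gamea,s,\sigma) \subseteq W_j(\gamea)^\omega$ with $c(\play(\gamea,s,\sigma)) \subseteq \Omega_j$, follows by your induction.
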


\begin{proposition}[\cite{JPZ06}]
\label{prop:remove_winning}
 Let $V \subseteq W_j(\gamea)$, $V' = \attr_j(\gamea, V)$ and $\gamea' = \gamea \restriction (S \setminus V')$. Then $W_j(\gamea) = V' \cup W_j(\gamea')$ and $W_{1-j}(\gamea) = W_{1-j}(\gamea')$.
\end{proposition}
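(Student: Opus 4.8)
The plan is to exploit the standard fact that the complement of a $j$-attractor is $(1-j)$-closed, and then to transfer winning strategies between $\gamea$ and $\gamea'$ in both directions, using that parity objectives depend only on the colors seen infinitely often and are therefore prefix-independent. Write $S' = S \setminus V'$, so that $\gamea' = \gamea \restriction S'$. First I would check that $\gamea'$ is a genuine (total) parity game and that $S'$ is $(1-j)$-closed. If some $s \in S_j \cap S'$ had a successor in $V' = \attr_j(\gamea, V)$, then $s$ itself would lie in the attractor, contradicting $s \in S'$; hence every successor of such an $s$ lies in $S'$. If some $s \in S_{1-j} \cap S'$ had all of its successors in $V'$, then again $s$ would be in the attractor; hence $s$ has a successor in $S'$. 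Totality of $R'$ follows, and these two properties are exactly the definition of $S'$ being $(1-j)$-closed, so player $1-j$ can always keep the play inside $S'$.

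Next I would record that $V' \subseteq W_j(\gamea)$: from any state of the attractor player $j$ can force the token into $V$, and since $V \subseteq W_j(\gamea)$ player $j$ then follows a winning strategy from the state of $V$ actually reached; by prefix-independence the finite attracting prefix does not affect the winner. The heart of the argument is the second equality $W_{1-j}(\gamea) = W_{1-j}(\gamea')$. For the inclusion $\supseteq$, a strategy for player $1-j$ winning in $\gamea'$ keeps the play in $S'$ (which is $(1-j)$-closed) and is still a legal strategy in $\gamea$; since player $j$ cannot leave $S'$ from a $j$-state of $S'$ either, every resulting $\gamea$-play coincides with a winning $\gamea'$-play. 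For $\subseteq$, note that $V' \subseteq W_j(\gamea)$ forces $W_{1-j}(\gamea) \subseteq S'$; a memoryless winning strategy for player $1-j$ on $W_{1-j}(\gamea)$ keeps every compatible play inside the $(1-j)$-dominion $W_{1-j}(\gamea) \subseteq S'$, so it is already a strategy in $\gamea'$ and remains winning there.

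Finally I would deduce the first equality. For $\supseteq$, the inclusion $V' \subseteq W_j(\gamea)$ was established above, and for $W_j(\gamea') \subseteq W_j(\gamea)$ player $j$ plays his $\gamea'$-winning strategy while the play stays in $S'$; the only way to leave $S'$ is a move by player $1-j$ into $V'$, after which player $j$ wins from $V'$, so prefix-independence makes the combined strategy winning. For $\subseteq$, by memoryless determinacy of $\gamea'$ we have $S' \setminus W_j(\gamea') = W_{1-j}(\gamea')$, which equals $W_{1-j}(\gamea)$ by the second equality; hence any $s \in W_j(\gamea) \cap S'$ lies outside $W_{1-j}(\gamea)$, therefore outside $W_{1-j}(\gamea')$, and so in $W_j(\gamea')$, while any $s \in W_j(\gamea) \cap V'$ is trivially in $V'$.

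The main obstacle is the careful bookkeeping of how strategies cross between $\gamea$ and $\gamea'$: ensuring that the only escape from $S'$ is a $(1-j)$-move into $V'$ (handled by the attractor complement being $(1-j)$-closed) and invoking prefix-independence so that gluing a finite prefix onto a winning continuation preserves the winner. The determinacy shortcut for the $\subseteq$ direction of the first equality is what lets me avoid re-deriving a winning strategy for player $j$ on $S'$ by hand, since it is cleanly implied by the second equality together with determinacy of $\gamea'$.
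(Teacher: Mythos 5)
Your proof is correct. The paper does not prove this proposition at all --- it is imported verbatim from \cite{JPZ06} --- so there is no in-paper argument to compare against, and what you give is precisely the standard argument from that literature: the complement of the $j$-attractor is $(1-j)$-closed (which makes $\gamea'$ total and lets strategies for player $1-j$ transfer unchanged), prefix-independence of the parity objective (Lemma \ref{lem:suffix_winning}) licenses gluing the attractor prefix and the strategy switch when player $1-j$ escapes into $V'$, the dominion property of the winning region gives $W_{1-j}(\gamea) \subseteq W_{1-j}(\gamea')$, and memoryless determinacy of $\gamea'$ cleanly closes the last inclusion $W_j(\gamea) \setminus V' \subseteq W_j(\gamea')$.
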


Many of the existing algorithms for solving parity games work by finding a dominion $D$ for some player $j$ and then apply Proposition \ref{prop:remove_winning} to remove the states in $\attr_j(\gamea, D)$ and recursively solve the smaller resulting game. This includes Zielonkas algorithm \cite{Zie98}, the dominion decomposition algorithm \cite{JPZ06} and the big step algorithm \cite{Sch07}. The algorithm we present in this paper also applies this proposition, but the winning cores which we search for are not necessarily dominions.

\section{Dominating Sequences}
\label{sec:dom}
We say that a path $\rho = s_0 s_1 ...$ with at least one transition is \emph{0-dominating} if the color $e = \max \{c(s_i) \mid i > 0 \}$ is even and \emph{1-dominating} if it is odd. Note that we do not include the color of the first state of the sequence.

\begin{figure}
\begin{center}
\begin{tikzpicture}

\tikzstyle{every node}=[ellipse, draw=black,
                        inner sep=0pt, minimum width=9pt, minimum height=9pt]

\draw (0,7) node [label=below left: $s_0$] (s0) {$1$};    
\draw (0.6,7) node [label=below left: $s_1$] (s1) {$4$};
\draw (1.2,7) node [label=below left: $s_2$] (s2) {$3$};
\draw (1.8,7) node [label=below left: $s_3$] (s3) {$4$};
\draw (2.4,7) node [label=below left: $s_4$] (s4) {$3$};

\draw (3.6,7) node [label=below left: $t_0$] (t0) {$6$};    
\draw (4.2,7) node [label=below left: $t_1$] (t1) {$2$};
\draw (4.8,7) node [label=below left: $t_2$] (t2) {$3$};
\draw (5.4,7) node [label=below left: $t_3$] (t3) {$2$};
\draw (6.0,7) node [label=below left: $t_4$] (t4) {$3$};
\draw (6.6,7) node [label=below left: $t_5$] (t5) {$2$};
\draw (7.2,7) node [label=below left: $t_6$] (t6) {$3$};
\draw (7.8,7) node [draw=none] (t7) {$...$};

\path[->] (s0) edge (s1) {};
\path[->] (s1) edge (s2) {};
\path[->] (s2) edge (s3) {};
\path[->] (s3) edge (s4) {};

\path[->] (t0) edge (t1) {};
\path[->] (t1) edge (t2) {};
\path[->] (t2) edge (t3) {};
\path[->] (t3) edge (t4) {};
\path[->] (t4) edge (t5) {};
\path[->] (t5) edge (t6) {};
\path[->] (t6) edge (t7) {};

\path[-] (-0.3,6.5) edge [line width = 2pt] (0.9,6.5) {};
\path[-] (0.3,6.3) edge [line width = 2pt] (2.1,6.3) {};

\path[-] (3.3,6.5) edge [line width = 2pt] (5.1,6.5) {};
\path[-] (4.5,6.3) edge [line width = 2pt] (6.3,6.3) {};
\path[-] (5.7,6.1) edge [line width = 2pt] (7.5,6.1) {};
\end{tikzpicture}
\end{center}
\caption{Consecutive $j$-dominating sequences illustrated by bold lines. Note the overlap of one state between sequences.}
\label{fig:dominating}
\end{figure}
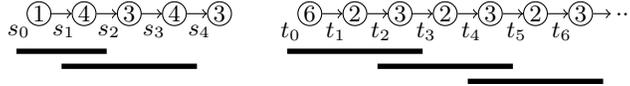

We say that a path $\rho$ begins with $k$ consecutive $j$-dominating sequences if there exists indices $i_0 < i_1 < ... < i_k$ with $i_0 = 0$ such that $\rho_{i_\ell} \rho_{i_\ell + 1} ... \rho_{i_{\ell+1}}$ is $j$-dominating for all $0 \leq \ell < k$. Similarly, a play $\rho$ begins with an infinite number of consecutive $j$-dominating sequences if there exists an infinite sequence $i_0 < i_1 < ...$ of indices with $i_0 = 0$ such that $\rho_{i_\ell} \rho_{i_\ell + 1} ... \rho_{i_{\ell + 1}}$ is $j$-dominating for all $\ell \geq 0$.

As examples, the sequence on the left in Figure \ref{fig:dominating} begins with two consecutive 0-dominating sequences $s_0 s_1$ and $s_1 s_2 s_3$ whereas the sequence to the right begins with only one 0-dominating sequence $t_0 t_1$, but not two consecutive 0-dominating sequences. Also, the sequence to the left does not begin with a 1-dominating sequence whereas the sequence to the right begins with an infinite number of consecutive 1-dominating sequences: $t_0 t_1 t_2$, $t_2 t_3 t_4$, $t_4 t_5 t_6$ etc.

We start with the following well-known lemma, stating that the winner of a play $\rho$ in a parity game is independent of a given finite prefix of the play.

\begin{lemma}
\label{lem:suffix_winning}
 Let $\rho$ be a play and $\rho'$ be a suffix of $\rho$. Then $c(\rho) \in \Omega_j$ if and only if $c(\rho') \in \Omega_j$.
\end{lemma}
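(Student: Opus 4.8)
The plan is to prove both directions of the biconditional by appealing to the definition of $\Omega_j$, which says a color sequence $\pi$ lies in $\Omega_j$ exactly when $\pi$ is bounded and $\max(\infin(\pi)) \equiv j \pmod 2$. The key observation is that prepending or deleting a finite prefix changes neither the boundedness of the sequence nor the set $\infin(\pi)$ of colors occurring infinitely often. Since $\rho'$ is a suffix of $\rho$, we can write $\rho = h \cdot \rho'$ for some finite history $h$, and correspondingly $c(\rho) = c(h) \cdot c(\rho')$.

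First I would establish the central claim that $\infin(c(\rho)) = \infin(c(\rho'))$. Writing $c(\rho) = x_0 x_1 \dots$ and $c(\rho') = y_0 y_1 \dots$ where $y_i = x_{i+|h|}$, a color $e$ appears infinitely often in $c(\rho)$ if and only if it appears infinitely often in $c(\rho')$: any index witnessing $e$ in $c(\rho)$ beyond position $|h|$ corresponds to an index in $c(\rho')$, and since only finitely many indices lie in the prefix, infinitely many occurrences in one sequence force infinitely many in the other. Thus $\infin(c(\rho)) = \infin(c(\rho'))$, and in particular $\max(\infin(c(\rho))) = \max(\infin(c(\rho')))$ whenever either maximum is defined.

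Next I would handle the boundedness condition. Since the color function maps into the finite set $\{1,\dots,d\}$, both $c(\rho)$ and $c(\rho')$ are automatically bounded by $d$, so the existential $\exists k . \forall i . \pi_i \leq k$ is satisfied for both sequences regardless. Hence membership in $\Omega_j$ reduces entirely to the condition $\max(\infin(\pi)) \equiv j \pmod 2$, and by the previous paragraph this parity condition holds for $c(\rho)$ if and only if it holds for $c(\rho')$. Combining these two facts yields $c(\rho) \in \Omega_j$ if and only if $c(\rho') \in \Omega_j$, as required.

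I do not anticipate a genuine obstacle here, as the statement is elementary once the definitions are unfolded; the result is flagged as well-known. The only point requiring mild care is the bookkeeping that shows $\infin$ is invariant under removing a finite prefix, namely the argument that finitely many dropped positions cannot affect which elements recur infinitely often. Since the colors live in a finite range, one could even dispense with the boundedness clause entirely, but I would still note explicitly that it is preserved so that the reduction to the parity condition is fully justified.
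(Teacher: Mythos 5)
Your proof is correct: the paper states this lemma as well-known and provides no proof of its own, and your argument --- that deleting a finite prefix leaves $\infin(c(\rho))$ unchanged, while boundedness is automatic because colors lie in the finite set $\{1,\dots,d\}$, so membership in $\Omega_j$ reduces to the parity of $\max(\infin(\cdot))$ --- is precisely the standard argument the paper leaves implicit. The only cosmetic point is that you could replace the hedge ``whenever either maximum is defined'' by noting that $\infin(c(\rho))$ is always nonempty (infinitely many positions, finitely many colors), so the maximum is defined unconditionally.
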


The following proposition shows that a play is winning for player $j$ if and only if it has a suffix that begins with an infinite number of consecutive $j$-dominating sequences.

\begin{proposition}
\label{prop:winning_iff_jdoms}
 Let $\rho$ be a play. Then $c(\rho) \in \Omega_j$ if and only if there is a suffix of $\rho$ that begins with an infinite number of consecutive $j$-dominating sequences.
\end{proposition}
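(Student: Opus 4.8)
The plan is to prove both implications directly from the definition of $\Omega_j$, using Lemma~\ref{lem:suffix_winning} to pass freely between $\rho$ and its suffixes. Throughout I write $m = \max(\infin(c(\rho)))$ for the greatest color occurring infinitely often; since colors lie in the finite set $\{1,\dots,d\}$ the boundedness clause of $\Omega_j$ is automatic, so $c(\rho) \in \Omega_j$ is equivalent to $m$ having the same parity as $j$.

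For the forward direction, assume $c(\rho) \in \Omega_j$, so $m \equiv j \pmod 2$. First I would pick $N$ large enough that $c(s_i) \le m$ for all $i \ge N$; this is possible because every color strictly greater than $m$ occurs only finitely often. Let $p_0 < p_1 < \cdots$ enumerate the (infinitely many) positions $\ge N$ at which the color $m$ occurs. I then take the suffix $\rho' = \rho_{\ge p_0}$ and set $i_\ell = p_\ell - p_0$, so that $i_0 = 0$. For each $\ell$, the segment $\rho'_{i_\ell}\cdots\rho'_{i_{\ell+1}}$ consists, apart from its first state, of states at positions $> N$, all of color $\le m$, and it ends at a state of color exactly $m$; hence its greatest color (excluding the first state) equals $m$, which has parity $j$, so the segment is $j$-dominating. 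This exhibits a suffix beginning with infinitely many consecutive $j$-dominating sequences.

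For the backward direction, by Lemma~\ref{lem:suffix_winning} it suffices to show that a suffix $\rho'$ beginning with infinitely many consecutive $j$-dominating sequences satisfies $c(\rho') \in \Omega_j$; I will show that its greatest infinitely-occurring color $M$ has parity $j$. Fix the witnessing indices $i_0 = 0 < i_1 < \cdots$; since they are strictly increasing they tend to infinity, so every position $t > 0$ lies in exactly one segment. Choose $K$ beyond which no color $> M$ appears. The key step is to locate a single segment that both lies entirely in the tail $(> K)$ and still witnesses $M$: only finitely many segments start before $K$ (as $i_\ell \to \infty$), and together they cover only finitely many positions, whereas $M$ occurs at infinitely many positions; hence some occurrence of $M$ sits in a segment whose starting index $i_\ell$ is $\ge K$. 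For such a segment every non-initial state has color $\le M$ while some state has color $M$, so its dominating color is exactly $M$; being $j$-dominating, this forces $M \equiv j \pmod 2$, i.e.\ $c(\rho') \in \Omega_j$, and the result follows.

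I expect the backward direction to be the main obstacle, and within it the bookkeeping of the previous paragraph: one must be careful that the guaranteed dominating color of a segment really equals $M$ rather than merely being bounded by it, which is exactly why the segment has to be pushed past the threshold $K$ where all larger colors have died out, while simultaneously still containing an occurrence of $M$. The forward direction, by contrast, is a routine construction.
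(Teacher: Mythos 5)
Your proof is correct and takes essentially the same route as the paper's: the forward direction is the paper's construction (cut the tail of the play at successive occurrences of the maximal infinitely-occurring color, past the last occurrence of any larger color), and the backward direction rests on the same key observation, that a $j$-dominating segment containing an occurrence of the maximal recurring color as a non-initial state pins down that color's parity. The only difference is organizational: the paper proves the backward direction by contradiction (assuming the maximal recurring color $e$ has parity $1-j$, it finds infinitely many segments each forcing a strictly larger color of parity $j$, then applies pigeonhole over the finitely many colors), whereas you argue directly by selecting one segment lying entirely past the threshold $K$ where all colors above $M$ have died out, so that its dominating color is exactly $M$ --- a slightly tighter bookkeeping of the same idea, with your localization step (finitely many segments start before $K$, covering finitely many positions) correctly handling the one subtlety, namely that the witnessing occurrence of $M$ must be non-initial in a segment that starts at or after $K$.
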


Next, we show a slightly surprising fact. A play begins with an infinite number of consecutive $j$-dominating sequences if and only if it is both winning for player $j$ and $j$-dominating. This means that we have two quite different definitions of the same concept. Both will be used to obtain results later in the paper.

\begin{proposition}
\label{prop:cons_iff_jdom}
 Let $\rho$ be a play. Then $\rho$ begins with an infinite number of consecutive $j$-dominating sequences if and only if $\rho$ is $j$-dominating and $c(\rho) \in \Omega_j$.
\end{proposition}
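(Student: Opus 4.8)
The plan is to prove the two implications separately, dispatching the easier direction with the tools already at hand and reserving the real work for the converse construction.

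For the forward direction, suppose $\rho$ begins with infinitely many consecutive $j$-dominating sequences, witnessed by indices $0 = i_0 < i_1 < i_2 < \cdots$. Since $\rho$ is a suffix of itself, Proposition~\ref{prop:winning_iff_jdoms} immediately gives $c(\rho) \in \Omega_j$, so only $j$-dominance of $\rho$ remains. Let $m = \max\{c(\rho_p) \mid p > 0\}$, which is well defined because colors lie in $\{1,\dots,d\}$. The key observation is that the ranges $(i_\ell, i_{\ell+1}]$, over which the successive segment maxima are taken, partition all positions $p > 0$ (here I use $i_0 = 0$ together with $i_\ell \to \infty$). Hence any position $k > 0$ with $c(\rho_k) = m$ lies in some range $(i_\ell, i_{\ell+1}]$; the maximum color of that segment (ignoring its first state) is then both $\geq m$ and $\leq m$, so it equals $m$, and since the segment is $j$-dominating this forces $m \equiv j \pmod 2$. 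Thus $\rho$ is $j$-dominating.

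For the converse, assume $\rho$ is $j$-dominating and $c(\rho) \in \Omega_j$, and let $M = \max(\infin(c(\rho)))$ be the largest color occurring infinitely often, so $M \equiv j \pmod 2$. The structural fact I rely on is that colors exceeding $M$ occur only finitely often; fix $N$ with $c(\rho_p) \leq M$ for all $p \geq N$. I would build the witnessing indices in two phases. First, carve out one long initial segment: choose $i_1 \geq N$ to be an occurrence of $M$ (possible since $M$ recurs infinitely often). Every position carrying a color exceeding $M$ is then $< N \leq i_1$, so the range $(0, i_1]$ contains all of them together with the occurrence of $M$ at $i_1$; hence the segment maximum equals $m = \max\{c(\rho_p) \mid p > 0\}$, which is $\equiv j$ by $j$-dominance, and $\rho_0 \cdots \rho_{i_1}$ is $j$-dominating. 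Second, beyond $i_1$ every color is at most $M$, so I take $i_2 < i_3 < \cdots$ to be successive occurrences of $M$; each range $(i_\ell, i_{\ell+1}]$ with $\ell \geq 1$ has maximum exactly $M \equiv j$, so every such segment is $j$-dominating. This yields the required infinite family.

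The main obstacle is precisely the first phase of the converse. One cannot simply take the \emph{first} $j$-dominating segment from each endpoint greedily: a finitely-occurring color of parity $1-j$ exceeding $M$, sitting near the start, can block every short initial segment, and if it appears immediately it pins the running maximum at a value $\equiv 1-j$ so that \emph{no} $j$-dominating segment begins there. The two-phase construction sidesteps this by deliberately stretching the initial segment past all such large colors (past $N$), after which each occurrence of $M$ closes off a clean $j$-dominating block. The remaining points — that the chosen ranges partition the positions and that the segment maxima are as claimed — are routine once this idea is in place.
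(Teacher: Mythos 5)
Your proof is correct, and while the forward direction matches the paper's (the paper also gets $c(\rho) \in \Omega_j$ from Proposition~\ref{prop:winning_iff_jdoms} and shows $j$-dominance by locating a maximal non-initial color inside one of the blocks --- the paper argues by contradiction, you argue directly, which is cosmetic), your converse takes a genuinely different route. The paper invokes Proposition~\ref{prop:winning_iff_jdoms} a second time, as a black box, to obtain a suffix of $\rho$ already decomposed into consecutive $j$-dominating blocks with breakpoints $i_0 < i_1 < \cdots$, and then repairs the prefix by merging: it picks the smallest $k$ with $\max\{c(\rho_i) \mid 0 < i \leq i_{k+1}\}$ equal to the greatest non-initial color $e$ of $\rho$ (which is $\equiv j \pmod 2$ by $j$-dominance) and replaces the initial blocks by the single block $\rho_{\leq i_{k+1}}$, keeping the remaining blocks unchanged. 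You instead rebuild the decomposition from scratch: bound the tail past all colors exceeding $M = \max(\infin(c(\rho)))$, stretch the first block to an occurrence of $M$ beyond that bound so it swallows the global non-initial maximum, and cut every later block at successive occurrences of $M$. Both constructions rest on the same two facts ($M \equiv j$ from $c(\rho) \in \Omega_j$, and the global non-initial maximum $\equiv j$ from $j$-dominance), and your breakpoint choice is in effect an inlined version of the paper's appendix proof of the $(\Rightarrow)$ direction of Proposition~\ref{prop:winning_iff_jdoms}, which likewise cuts at occurrences of the maximal recurring color. What the paper's route buys is brevity through reuse of an already-proved lemma; what yours buys is a self-contained argument in which the first block is pinned down by an explicit condition ($i_1 \geq N$ with $c(\rho_{i_1}) = M$) rather than a minimality condition over breakpoints of an abstractly given suffix decomposition. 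Your concluding observation that a greedy choice of the first $j$-dominating segment can fail is also accurate, and mirrors the paper's own cautionary example following Theorem~\ref{theo:n_jdoms}.
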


\begin{proof}
 $(\Leftarrow)$ Suppose that $\rho$ is $j$-dominating and $c(\rho) \in \Omega_j$. By Proposition \ref{prop:winning_iff_jdoms} there exists a suffix $\rho_{\geq \ell}$ of $\rho$ that begins with an infinite number of consecutive $j$-dominating sequences $\rho^0, \rho^1, ...$. Let $i_0 < i_1 < ...$ be the indices such that $\rho^\ell = \rho_{i_\ell} \rho_{i_\ell + 1} ... \rho_{i_{\ell + 1}}$. Since $\rho$ is $j$-dominating the greatest color $e$ of a non-initial state in $\rho$ satisfies $e \equiv j \textup{ (mod } 2)$. Let $k \geq 0$ be the smallest index such that $\max\{c(\rho_i) \mid 0 < i \leq i_{k+1}\} = e$. Now we have that $\rho$ begins with an infinite number of consecutive $j$-dominating sequences, namely $\rho_{\leq i_{k+1}}, \rho^{k+1}, \rho^{k+2} ...$.
 
 $(\Rightarrow)$ By Proposition \ref{prop:winning_iff_jdoms} we also have that if $\rho$ begins with an infinite number of consecutive $j$-dominating sequences $\rho^0, \rho^1, ...$ then $c(\rho) \in \Omega$. Now, suppose for contradiction that $\rho$ is not $j$-dominating. Then the largest color $e$ of a non-initial state in $\rho$ satisfies $e \equiv 1-j \textup{ (mod } 2)$. Let $i > 0$ be an index such that $c(\rho_i) = e$. Let $k$ be an index such that $\rho^k$ contains $\rho_i$ as a non-initial state. As $\rho^k$ is $j$-dominating there is a non-initial state in $\rho^k$ with a color $e' > e$ which gives a contradiction.
\end{proof}

As the two characterizations are equivalent, we will just write $c(\rho) \in \Lambda_j$ for the remainder of the paper when we know that either property is true of $\rho$.


\section{Winning cores}
\label{sec:core}
We define \emph{the winning core} $A_j(\gamea)$ for player $j$ in the parity game $\gamea$ as the set of states $s$ from which player $j$ has a strategy $\sigma$ such that $c(\play(\gamea, s, \sigma)) \subseteq \Lambda_j$. According to Proposition \ref{prop:cons_iff_jdom} we have two different characterizations of this set of plays. Both will be used in the following depending on the application.

\subsection{The winning core and the winning region}

First note that since $\Lambda_j \subseteq \Omega_j$ we have that every state in the winning core for player $j$ is a winning state for player $j$.

\begin{proposition}
\label{prop:core_subset}
 Let $\gamea$ be a parity game. Then $A_j(\gamea) \subseteq W_j(\gamea)$.
\end{proposition}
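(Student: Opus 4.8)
The plan is to unfold the two definitions and observe that the inclusion is an immediate consequence of the containment $\Lambda_j \subseteq \Omega_j$ of the underlying objectives. First I would fix an arbitrary state $s \in A_j(\gamea)$. By the definition of the winning core, there exists a strategy $\sigma$ for player $j$ such that $c(\play(\gamea, s, \sigma)) \subseteq \Lambda_j$. The whole argument then reduces to comparing the two sets of colour sequences $\Lambda_j$ and $\Omega_j$.

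The key step is to note that $\Lambda_j \subseteq \Omega_j$ holds directly by definition: recall that $\Lambda_j = \{\pi \in \Omega_j \mid \max_{i > 0} \pi_i \equiv j \ (\textup{mod } 2)\}$ is explicitly given as a subset of $\Omega_j$ obtained by imposing the extra condition on the maximal non-initial colour. Hence any element of $\Lambda_j$ is in particular an element of $\Omega_j$, with no further work required. Combining this with the previous step yields $c(\play(\gamea, s, \sigma)) \subseteq \Lambda_j \subseteq \Omega_j$, so $\sigma$ satisfies exactly the condition $c(\play(\gamea, s, \sigma)) \subseteq \Omega_j$ defining a winning strategy for player $j$ from $s$. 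Therefore $s$ is a winning state for player $j$, i.e. $s \in W_j(\gamea)$. Since $s$ was arbitrary, this gives $A_j(\gamea) \subseteq W_j(\gamea)$.

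I do not expect any genuine obstacle here: the statement is a direct set-theoretic consequence of the definitions, and the only thing being used is that a strategy witnessing membership in the winning core (via the stronger objective $\Lambda_j$) automatically witnesses membership in the winning region (via the weaker objective $\Omega_j$). The substantive content lies in the \emph{converse} direction — that emptiness of $A_j(\gamea)$ implies emptiness of $W_j(\gamea)$ — which is a much stronger claim and which I would expect to be established separately later in Section~\ref{sec:core}; the present proposition is essentially the easy half of that eventual equivalence.
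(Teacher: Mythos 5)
Your proof is correct and follows exactly the paper's argument: the paper dispenses with this proposition in one line, noting that since $\Lambda_j \subseteq \Omega_j$ (immediate from the definition of $\Lambda_j$ as a restricted subset of $\Omega_j$), any strategy witnessing $s \in A_j(\gamea)$ also witnesses $s \in W_j(\gamea)$. Your write-up merely spells out this same observation in full, and your closing remark correctly identifies Proposition~\ref{prop:empty_core} as the substantive converse half.
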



Next, we will show a more surprising fact: If the winning core for player $j$ is empty, then the winning region of player $j$ is empty as well. This is a very important property of winning cores.
 
\begin{proposition}
\label{prop:empty_core}
 Let $\gamea$ be a parity game. If $A_j(\gamea) = \emptyset$ then $W_j(\gamea) = \emptyset$.
\end{proposition}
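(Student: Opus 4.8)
The plan is to prove the contrapositive: assuming $W_j(\gamea) \neq \emptyset$, I will produce a state in $A_j(\gamea)$. First I would reduce to the case where player $j$ wins from \emph{every} state. Since $W_j(\gamea)$ is a $j$-dominion it is $j$-closed, so in $\gamea \restriction W_j(\gamea)$ player $1-j$ can never leave $W_j(\gamea)$ and player $j$ can choose to stay; hence any strategy witnessing $s \in A_j(\gamea \restriction W_j(\gamea))$ produces exactly the same plays in $\gamea$, giving $A_j(\gamea \restriction W_j(\gamea)) \subseteq A_j(\gamea)$. Thus it suffices to treat a game with $W_j(\gamea) = S$, and I would then induct on the number of states. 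The conceptual difficulty is that $c(\rho) \in \Omega_j$ only constrains the colors seen \emph{infinitely often}, whereas membership in $\Lambda_j$ (equivalently, by Proposition \ref{prop:cons_iff_jdom}, being winning and $j$-dominating) constrains the maximum over \emph{all} non-initial positions: a winning play may still pass through a transient larger color of parity $1-j$. The induction is driven by the largest color $e$ in $S$, split according to its parity.

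If $e \equiv j \pmod 2$, let $T = \{s : c(s) = e\}$ and consider $\attr^+_j(\gamea, T)$. If it is non-empty, pick $s_0$ in it and let player $j$ force a visit to $T$ in at least one step and then switch to a memoryless winning strategy (available everywhere since $W_j(\gamea) = S$): every resulting play visits color $e$ at a non-initial position, never exceeds $e$, and is winning on its suffix, so by Lemma \ref{lem:suffix_winning} it lies in $\Lambda_j$ and $s_0 \in A_j(\gamea)$. If instead $\attr^+_j(\gamea, T) = \emptyset$, then no $S_j$-state has an edge into $T$ and no $S_{1-j}$-state has all edges into $T$, so $\gamea \restriction (S \setminus T)$ is total, has fewer states, and player $j$ still wins everywhere (the memoryless winning strategy never enters $T$ from an $S_j$-state). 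The induction hypothesis gives a witness there, which I lift to $\gamea$ by following it while the play stays in $S \setminus T$ and switching to the global winning strategy the instant player $1-j$ escapes into $T$ — a visit to color $e$ that only helps, since $e \equiv j$.

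If $e \equiv 1-j \pmod 2$ the top color is bad for player $j$, and I would peel off $A = \attr_{1-j}(\gamea, T)$. One checks $A \neq S$, for otherwise player $1-j$ could force $T$ from everywhere and hence, iterating, infinitely often, contradicting that player $j$ wins everywhere. The complement $S \setminus A$ is $j$-closed, so $\gamea' = \gamea \restriction (S \setminus A)$ is total, has strictly fewer states, and avoids color $e$ entirely. The crucial point is that player $j$ still wins everywhere in $\gamea'$: if player $1-j$ won from some state there, then in $\gamea$ player $1-j$ could combine that subgame strategy with the rule ``as soon as player $j$ escapes into $A$, remain in $A$ and visit $T$ infinitely often'' — possible because from an $S_j$-state of the $(1-j)$-attractor player $j$ cannot leave $A$ — and would win in $\gamea$, a contradiction. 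Applying the induction hypothesis to $\gamea'$ and transferring the witness back through the $j$-closedness of $S \setminus A$ (exactly as in the reduction step) yields a state of $A_j(\gamea)$.

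I expect the main obstacle to be precisely this last verification in the $e \equiv 1-j$ case, namely that deleting the opponent's attractor to the top color leaves a game that player $j$ still wins everywhere; it is the one step where parity determinacy of the subgame and the closure properties of the attractor must be combined carefully. By contrast, once the transient-color phenomenon is isolated, the $e \equiv j$ case is essentially a direct attractor construction, and the base case $n = 1$ is immediate.
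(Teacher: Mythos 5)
Your Case $e \equiv j \pmod 2$ and your preliminary reduction to $W_j(\gamea) = S$ are both sound (the latter is exactly Lemma \ref{lem:jclosed} of the paper applied to the $j$-closed set $W_j(\gamea)$). But the pivotal claim in your Case $e \equiv 1-j \pmod 2$ --- that player $j$ still wins everywhere in $\gamea' = \gamea \restriction (S \setminus A)$ with $A = \attr_{1-j}(\gamea, T)$ --- is false, not merely under-justified. Counterexample with $j = 0$: take $z \in S_0$ with $c(z) = 1$ and edges to $z$ and $x$; $x \in S_1$ with $c(x) = 3$ and single edge to $y$; $y \in S_0$ with $c(y) = 2$ and only a self-loop. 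Player $0$ wins from every state (from $z$ he moves to $x$, and every play through $x$ ends looping on color $2$), so $W_0(\gamea) = S$; yet $T = \{x\}$, $A = \attr_1(\gamea, T) = \{x\}$, and in $\gamea'$ the state $z$ retains only its color-$1$ self-loop, so $z \in W_1(\gamea')$. Your justification fails on two counts. First, ``from an $S_j$-state of the $(1-j)$-attractor player $j$ cannot leave $A$'' is wrong for $S_j$-states in the base set $T$, which belong to the attractor irrespective of their outgoing edges. Second, even player $1-j$ cannot in general ``remain in $A$'': in the example, after the (trivial) attraction to $T$, the only edge from $x$ exits $A$. The genuine obstruction is re-entry: each excursion into $A$ buys one visit to $T$, but the play may then land in $W_j(\gamea')$, where player $1-j$'s subgame strategy gives no guarantee and he cannot force the play back into $A$; so ``visit $T$ infinitely often'' is not achievable.

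The inductive skeleton is nonetheless repairable in the standard McNaughton--Zielonka fashion, at the cost of strengthening the induction hypothesis to the full contrapositive (``$W_j \neq \emptyset$ implies $A_j \neq \emptyset$'' for all smaller games); this also absorbs your preliminary restriction step. In the case $e \equiv 1-j$ it then suffices to prove only $W_j(\gamea') \neq \emptyset$: if player $1-j$ won from \emph{every} state of $\gamea'$, combining a memoryless subgame winning strategy on $S \setminus A$ with the attractor strategy on $A$ \emph{is} sound --- every re-entry point into $S \setminus A$ is again winning for $1-j$ in the subgame, so either the play eventually stays in $\gamea'$ or it visits $T$ infinitely often --- contradicting $W_j(\gamea) = S \neq \emptyset$. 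The induction hypothesis applied to $\gamea'$ then yields $A_j(\gamea') \neq \emptyset$, and since $S \setminus A$ is $j$-closed the witness lifts into $A_j(\gamea)$ exactly as in your other case. Note finally that the paper avoids induction altogether: it fixes a memoryless winning strategy $\sigma$ and uses $A_j(\gamea) = \emptyset$ repeatedly to splice compatible plays, each contributing a $(1-j)$-dominating prefix, into a single play of $\play(\gamea, s, \sigma)$ that begins with infinitely many consecutive $(1-j)$-dominating sequences, contradicting Proposition \ref{prop:winning_iff_jdoms}; that argument sidesteps precisely the attractor re-entry difficulty on which your sketch founders.
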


\begin{proof}
 Let $A_j(\gamea) = \emptyset$. Suppose for contradiction that $W_j(\gamea) \neq \emptyset$. Then there exists $s \in W_j(\gamea)$ and a memoryless winning strategy $\sigma$ for player $j$ from $s$.
 
 Since $s \not \in A_j(\gamea)$ there exists $\rho \in \play(\gamea, s, \sigma)$ that does not begin with an infinite number of consecutive $j$-dominating sequences. However, since $c(\rho) \in \Omega_j$ there is a suffix of $\rho$ that begins with an infinite number of consecutive $j$-dominating sequences. Let $\ell_0 > 0$ be the smallest index such that $\rho_{\geq \ell_0}$ begins with an infinite number of consecutive $j$-dominating sequences. Then $\rho_{\leq \ell_0}$ is $(1-j)$-dominating, because otherwise $\rho$ would begin with an infinite number of consecutive $j$-dominating sequences.
 
 Since $A_j(\gamea) = \emptyset$ there exists $\rho' \in \play(\gamea, \rho_{\ell_0}, \sigma)$ that does not begin with an infinite number of consecutive $j$-dominating sequences. Since $\sigma$ is memoryless we have $\rho_{< \ell_0} \cdot \rho' \in \play(\gamea, s, \sigma)$ which means that $c(\rho') \in \Omega_j$ according to Lemma \ref{lem:suffix_winning}. Here, $\cdot$ is the concatenation operator. This implies that there is a suffix of $\rho'$ that begins with an infinite number of consecutive $j$-dominating sequences. Let $\ell_1 > 0$ be the smallest index such that $\rho'_{\geq \ell_1}$ begins with an infinite number of consecutive $j$-dominating sequences. Then $\rho'_{\leq \ell_1}$ is $(1-j)$-dominating, because otherwise $\rho'$ would begin with an infinite number of consecutive $j$-dominating sequences.
 
 Since $A_j(\gamea) = \emptyset$ there exists $\rho'' \in \play(\gamea, \rho'_{\ell_1},\sigma)$ that does not begin with an infinite number of consecutive $j$-dominating sequences. Since $\sigma$ is memoryless we have that $\rho_{< \ell_0} \cdot \rho'_{< \ell_1} \cdot \rho'' \in \play(\gamea, s, \sigma)$ which means that $c(\rho'') \in \Omega_j$ according to Lemma \ref{lem:suffix_winning}. We can continue this construction in the same way to obtain the play $\pi = \rho_{< \ell_0} \cdot \rho'_{< \ell_1} \cdot \rho''_{< \ell_2} \cdot ... $ which belongs to $\play(\gamea, s, \sigma)$. The construction is illustrated in Figure \ref{fig:inf_contradict}.
 
 \begin{figure}
\begin{center}
\begin{tikzpicture}

\tikzstyle{every node}=[ellipse, draw=black,
                        inner sep=0pt, minimum width=3pt, minimum height=3pt]

\draw (5,8) node [label=above: $s$] (s0) {};    
\draw (4.5,7) node [label=right: $\rho_{\ell_0}$] (s1) {};
\draw (4.8,6) node [label=right: $\rho'_{\ell_1}$] (s2) {};
\draw (5.6,5.4) node [label=above right: $\rho''_{\ell_2}$] (s3) {};

\draw (3.5,4) node [draw=none, minimum height=10pt] (end0) {...};
\draw (3.5,6) node [draw=none] (label0) {$\rho$};

\draw (4.5,4) node [draw=none, minimum height=10pt] (end1) {...};
\draw (4.5,5.5) node [draw=none] (label1) {$\rho'$};

\draw (5.5,4) node [draw=none, minimum height=10pt] (end2) {...};
\draw (5.3,4.7) node [draw=none] (label2) {$\rho''$};

\draw (6.6,5.2) node [draw=none, minimum height=10pt] (end3) {...};

\draw (s0) .. controls (4.7,7.8) and (4.7,7.4) .. (s1);
\draw (s1) .. controls (4.6,6.8) and (4.5,6.5) .. (s2);
\draw (s2) .. controls (5,5.6) .. (s3);
\draw (s3) .. controls (5.8,5.3) .. (end3);

\draw [dashed] (s1) .. controls (4,6.5) and (4,4.5) .. (end0) {};
\draw [dashed] (s2) .. controls (5,5.5) and (5,4.5) .. (end1) {};
\draw [dashed] (s3) .. controls (5.8,5.1)  .. (end2) {};

\end{tikzpicture}
\end{center}
\caption{Construction of a play $\pi \in \play(\gamea, s, \sigma)$ that begins with an infinite number of consecutive $(1-j)$-dominating sequences. $\pi$ is the solid path in the figure.}
\label{fig:inf_contradict}
\end{figure}
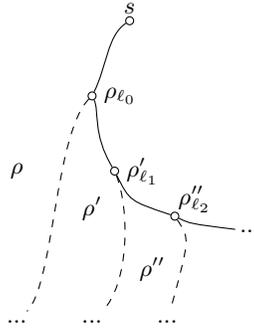

 Observe that $\pi$ begins with an infinite number of consecutive $(1-j)$-dominating sequences, namely $\rho_{\leq \ell_0}, \rho'_{\leq \ell_1}, \rho''_{\leq \ell_2}$,... which are all $(1-j)$-dominating. By Proposition \ref{prop:winning_iff_jdoms} we have $c(\pi) \in \Omega_{1-j}$. This is a contradiction since $\pi \in \play(\gamea, s, \sigma)$ and $c(\pi) \in \Omega_j$. Thus, $W_j(\gamea) = \emptyset$.
 \end{proof}

Proposition \ref{prop:core_subset} and \ref{prop:empty_core} give us the following result.

\begin{theorem}
\label{theo:empty_iff}
  Let $\gamea$ be a parity game. The winning core $A_j(\gamea)$ for player $j$ in $\gamea$ is empty if and only if the winning region $W_j(\gamea)$ for player $j$ in $\gamea$ is empty.
\end{theorem}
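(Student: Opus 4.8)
The plan is to establish the biconditional by proving each implication separately, and in fact each implication is handed to us directly by one of the two propositions immediately preceding the statement, so no genuinely new argument is needed.

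First I would handle the direction $W_j(\gamea) = \emptyset \Rightarrow A_j(\gamea) = \emptyset$. This is an immediate consequence of Proposition \ref{prop:core_subset}, which gives the inclusion $A_j(\gamea) \subseteq W_j(\gamea)$: a subset of the empty set is empty. Concretely, if $W_j(\gamea) = \emptyset$ and some state $s$ belonged to $A_j(\gamea)$, then $s \in A_j(\gamea) \subseteq W_j(\gamea) = \emptyset$, a contradiction.

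Second I would handle the converse $A_j(\gamea) = \emptyset \Rightarrow W_j(\gamea) = \emptyset$, which is exactly the content of Proposition \ref{prop:empty_core} and can simply be cited. Taken together, these two implications give that $A_j(\gamea)$ and $W_j(\gamea)$ are empty precisely under the same condition, which is the claimed equivalence.

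At the level of this theorem there is essentially no obstacle, since it is just the conjunction of the two propositions; the trivial inclusion supplies one direction and the harder Proposition \ref{prop:empty_core} supplies the other. All of the real difficulty is already discharged inside Proposition \ref{prop:empty_core}, whose proof assumes a memoryless winning strategy $\sigma$ witnessing some $s \in W_j(\gamea)$ and, using emptiness of the core together with memorylessness of $\sigma$, repeatedly restarts from states off the core to splice together a play in $\play(\gamea, s, \sigma)$ beginning with infinitely many consecutive $(1-j)$-dominating segments; Proposition \ref{prop:winning_iff_jdoms} then forces this play into $\Omega_{1-j}$, contradicting $c(\play(\gamea, s, \sigma)) \subseteq \Omega_j$. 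Thus the only thing I would verify carefully is that the two cited propositions are being applied to the same player $j$ and the same game $\gamea$, after which the theorem follows without further work.
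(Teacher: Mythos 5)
Your proposal is correct and matches the paper exactly: the paper derives Theorem \ref{theo:empty_iff} precisely by combining the inclusion $A_j(\gamea) \subseteq W_j(\gamea)$ from Proposition \ref{prop:core_subset} (giving $W_j(\gamea) = \emptyset \Rightarrow A_j(\gamea) = \emptyset$) with Proposition \ref{prop:empty_core} for the converse, offering no additional argument beyond those two citations. Your summary of the splicing construction inside Proposition \ref{prop:empty_core} is also faithful to the paper's proof of that proposition.
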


\begin{remark}
 As shown in \cite{DKT12} parity games can be solved in polynomial time if and only if it can be decided in polynomial time whether the winning region $W_j(\gamea) = \emptyset$ for player $j$. Thus, Theorem \ref{theo:empty_iff} implies that parity games can be solved in polynomial time if and only if emptiness of the winning core for player $j$ can be decided in polynomial time.
\end{remark}

In \cite{HKP13} the concept of a \emph{fatal attractor} is defined and used for partially solving parity games. A fatal attractor is a set $X$ of states colored $e \equiv j \textup{ (mod 2)}$ with the property that player $j$ can ensure that when the play begins in a state in $X$ then it will eventually reach $X$ again without having passed through any states with color greater then $e$ along the way. 
Player $j$ can thus force the play to begin with an infinite number of consecutive $j$-dominating sequences from states in $X$ by repeatedly forcing the play back to $X$ in this fashion.

\begin{proposition}
 Let $X$ be a fatal attractor for player $j$ in $\gamea$. Then $X \subseteq A_j(\gamea)$.
\end{proposition}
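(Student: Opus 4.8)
The plan is to show that from each state of $X$ player $j$ has a strategy under which every compatible play begins with an infinite number of consecutive $j$-dominating sequences; by the definition of the winning core together with Proposition \ref{prop:cons_iff_jdom}, this places the state in $A_j(\gamea)$.

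Let $e \equiv j \pmod{2}$ be the common color of the states in $X$, and fix an arbitrary $s \in X$. First I would unpack the fatal-attractor hypothesis into a concrete strategy $\sigma$: since player $j$ can force the play from any state of $X$ back to $X$ without visiting a state of color greater than $e$, player $j$ plays the corresponding forcing strategy, and each time the play reaches $X$ resets and replays the forcing strategy again. I would note that this strategy need not be memoryless, which is harmless since the winning core is defined via arbitrary strategies. The point to record is that along any play $\rho \in \play(\gamea, s, \sigma)$ the set $X$ is visited infinitely often; writing $0 = i_0 < i_1 < i_2 < \dots$ for the successive indices with $\rho_{i_\ell} \in X$, consecutive visits are distinct because ``eventually reach $X$ again'' forces at least one transition, so $i_{\ell+1} > i_\ell$.

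Next I would argue that each segment $\rho_{i_\ell} \rho_{i_\ell + 1} \dots \rho_{i_{\ell+1}}$ is $j$-dominating. By the fatal-attractor property, no state strictly between two consecutive visits has color exceeding $e$, and the endpoint $\rho_{i_{\ell+1}} \in X$ has color exactly $e$. Hence the greatest non-initial color of the segment equals $e \equiv j \pmod{2}$, which is precisely the definition of a $j$-dominating sequence. The indices $i_0 < i_1 < \dots$ with $i_0 = 0$ therefore witness that $\rho$ begins with an infinite number of consecutive $j$-dominating sequences, the overlap of one state at each $i_\ell$ matching the definition.

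Finally I would conclude: by the definition of $A_j(\gamea)$ via Proposition \ref{prop:cons_iff_jdom}, every $\rho \in \play(\gamea, s, \sigma)$ satisfies $c(\rho) \in \Lambda_j$, so $c(\play(\gamea, s, \sigma)) \subseteq \Lambda_j$ and hence $s \in A_j(\gamea)$. As $s \in X$ was arbitrary, $X \subseteq A_j(\gamea)$. I expect the only delicate point to be the first step, namely phrasing the repeated application of the forcing strategy so that $X$ is genuinely revisited infinitely often; the verification that each inter-visit segment is $j$-dominating is then immediate from the color bound $e$ together with the endpoint lying in $X$.
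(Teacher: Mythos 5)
Your proof is correct and follows exactly the argument the paper itself gives: the paragraph preceding the proposition already sketches this reasoning (repeatedly force the play back to $X$, so that each inter-visit segment has maximal non-initial color $e \equiv j \pmod{2}$ and the play begins with infinitely many consecutive $j$-dominating sequences), and the paper offers no further proof. Your write-up merely makes the sketch precise, including the two points the paper leaves implicit --- that each forcing phase takes at least one transition and that the resulting strategy may need memory, which is harmless for membership in $A_j(\gamea)$.
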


Note that the winning core $A_j(\gamea)$ for player $j$ need not be a $j$-dominion. In addition, neither does $\attr_j(\gamea, A_j(\gamea))$. Indeed, consider the parity game in Figure \ref{fig:not_dominion}. In this game, the winning core for player 0 is $A_0(\gamea) = \{s_0,s_3\}$ and also $\attr_0(\gamea, A_0(\gamea)) = A_0(\gamea)$. Clearly, this is not a 0-dominion as player 1 can force the play to go outside this set. Note also that this game has no fatal attractors. Thus, the winning core can contain more states than just the fatal attractors.

The property that the winning core for player $j$ is not necessarily a $j$-dominion is interesting as, to the knowledge of the author, no meaningful subsets of the winning region for player $j$ that are not necessarily $j$-dominions have been characterized in the litterature. Thus, many algorithms focus on looking for dominions which can be removed from the game, e.g. the algorithms from \cite{Zie98,JPZ06,Sch07}. However, it was recently shown in \cite{GLMMO15} that there is no algorithm which decides if there exists a dominion with at most $k$ states in time $n^{o(\sqrt{k})}$ unless the exponential-time hypothesis fails. This, along with Theorem \ref{theo:empty_iff} make winning cores very interesting objects for further study as they propose a fresh direction of research in solving parity games.

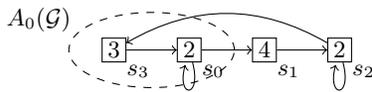
\begin{figure}
\begin{center}
\begin{tikzpicture}

\tikzstyle{every node}=[ellipse, draw=black,
                        inner sep=0pt, minimum width=9pt, minimum height=9pt]


\draw (3,3) node [rectangle, label = below right: $s_0$] (t0) {$2$};    
\draw (4,3) node [rectangle, label = below right: $s_1$] (t1) {$4$};
\draw (5,3) node [rectangle, label = below right: $s_2$] (t2) {$2$};
\draw (2,3) node [rectangle, label = below right: $s_3$] (t3) {$3$};
\draw (1,3.4) node [draw=none] (s4) {$A_0(\gamea)$};



\path[->] (t0) edge (t1) {};
\path[->] (t1) edge (t2) {};
\path[->] (t2) edge [bend right] (t3) {};
\path[->] (t3) edge (t0) {};
\path[->] (t0) edge [loop below] (t0) {};
\path[->] (t2) edge [loop below] (t2) {};

\draw (2.5,3) ellipse (1.1cm and 0.5cm) [dashed];

\end{tikzpicture}
\end{center}

\caption{A parity game where the winning core for player 0 is not a $0$-dominion}
\label{fig:not_dominion}
\end{figure}

In the remainder of this subsection we state some more interesting properties about winning cores which are similar in nature to the results on winning regions in parity games from \cite{Zie98} that form the basis of Zielonkas algorithm for solving parity games as well as optimized versions in \cite{JPZ06,Sch07}.

Let $\gamea = (S,S_0,S_1,R,c)$ be a parity game with largest color $d$. Let $k$ be the player such that $d \equiv k \textup{ (mod 2)}$. Let $S^d$ be the set of states with color $d$ and $U = \attr^+_{k}(\gamea, S^d)$. Let $\gamea' = \gamea \restriction (S \setminus U)$. We define $\gamea'' = \gamea \restriction (S \setminus \attr_{1-k}(\gamea, A_{1-k}(\gamea))$ as the parity game obtained from $\gamea$ by removing the set of states from which player $1-k$ can force the play to go to his winning core in $\gamea$.

\begin{proposition}
\label{prop:core_1_k}
 $A_{1-k}(\gamea) = A_{1-k}(\gamea')$
\end{proposition}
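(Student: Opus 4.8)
The plan is to prove the two inclusions separately, resting on two structural observations about $U = \attr^+_k(\gamea, S^d)$ and its complement. The first observation is that $S \setminus U$ is $(1-k)$-closed: this is the standard fact that the complement of a $k$-attractor cannot be left by player $1-k$. Concretely, for $s \in S_k \cap (S \setminus U)$ every successor lies in $S \setminus U$ (otherwise $s$ would be drawn into the attractor), and for $s \in S_{1-k} \cap (S \setminus U)$ at least one successor lies in $S \setminus U$ (otherwise all of its successors would be in $U$ and $s$ itself would be attracted). As a byproduct this shows that the transition relation of $\gamea'$ is total, so $\gamea'$ is a well-defined parity game. The second observation is that no state of $U$ can belong to $A_{1-k}(\gamea)$, and more strongly, that any strategy witnessing membership in $A_{1-k}(\gamea)$ must avoid $U$ along every consistent play.

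For this second observation I would argue as follows. Suppose $\sigma$ is a strategy of player $1-k$ with $c(\play(\gamea,s,\sigma)) \subseteq \Lambda_{1-k}$, and suppose for contradiction that some consistent play reaches a state $u \in U$ at some index $i \geq 0$. Since $u \in \attr^+_k(\gamea,S^d)$, player $k$ has a positive attractor strategy from $u$ that forces a visit to a color-$d$ state in at least one further step, regardless of the moves of player $1-k$. Letting player $1-k$ continue to follow $\sigma$ while player $k$ switches to this attractor strategy from index $i$ onwards yields a play $\rho^\ast$ still consistent with $\sigma$ that visits a color-$d$ state at a non-initial position. As $d$ is the maximal color and $d \equiv k \not\equiv 1-k \pmod 2$, this forces $\max_{\ell>0} c(\rho^\ast_\ell) = d$ of the wrong parity, so $c(\rho^\ast) \notin \Lambda_{1-k}$, a contradiction. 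Hence every play consistent with $\sigma$ stays in $S \setminus U$; in particular $s \notin U$.

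The two inclusions then fall out. For $A_{1-k}(\gamea) \subseteq A_{1-k}(\gamea')$, take $s \in A_{1-k}(\gamea)$ with witness $\sigma$; by the second observation every consistent play stays inside $S \setminus U$, so these are exactly the plays of $\gamea'$ from $s$ under $\sigma$, whence $\sigma$ witnesses $s \in A_{1-k}(\gamea')$. For the reverse inclusion, take $s \in A_{1-k}(\gamea')$ with witness $\sigma'$; using the first observation (that $S \setminus U$ is $(1-k)$-closed) together with the fact that $\sigma'$ keeps the play inside $S \setminus U$ at the states controlled by player $1-k$, I would show by induction that in $\gamea$ player $k$ can never force the play out of $S \setminus U$, so every play of $\gamea$ from $s$ under $\sigma'$ coincides with a play of $\gamea'$ and hence lies in $\Lambda_{1-k}$, giving $s \in A_{1-k}(\gamea)$. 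The main obstacle is the second observation: one must be careful that the forced color-$d$ visit occurs strictly after the start (which is exactly why the positive attractor $\attr^+_k$, rather than the ordinary attractor, appears in the definition of $U$, since color-$d$ states may themselves belong to the winning core of $1-k$), and that splicing player $k$'s attractor strategy onto the play keeps it consistent with the fixed strategy $\sigma$ of player $1-k$.
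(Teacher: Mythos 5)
Your proposal is correct and takes essentially the same route as the paper's proof: both directions transfer a winning-core strategy between $\gamea$ and $\gamea' = \gamea \restriction (S \setminus U)$, using that $S \setminus U$ is $(1-k)$-closed (hence $\gamea'$ is total and plays under the transferred strategy coincide) for one inclusion, and that any strategy witnessing $s \in A_{1-k}(\gamea)$ must keep every consistent play out of $U$ for the other. Your explicit attractor-switching argument for the latter simply fills in the detail the paper compresses into its one-line appeal to Proposition~\ref{prop:cons_iff_jdom} via $d \equiv k \pmod 2$, so the two proofs agree in substance.
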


\begin{proposition}
\label{prop:core_k}
$A_{k}(\gamea) = A_{k}(\gamea'')$
\end{proposition}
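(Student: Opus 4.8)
The plan is to prove the two inclusions separately, writing $Z = \attr_{1-k}(\gamea, A_{1-k}(\gamea))$ so that $\gamea'' = \gamea \restriction (S \setminus Z)$. Before either inclusion I would record that $\gamea''$ is a genuine parity game: the complement of a $(1-k)$-attractor is always $k$-closed, and $k$-closedness guarantees both that every state of $S\setminus Z$ retains a successor inside $S \setminus Z$ (so the restricted relation is total) and that player $k$ can keep any play inside $S \setminus Z$. This $k$-closedness is the structural fact that drives both directions. I would also note that colors are unchanged by restriction, so membership of a state sequence in $\Lambda_k$ is the same whether it is read in $\gamea$ or in $\gamea''$.

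For the inclusion $A_k(\gamea'') \subseteq A_k(\gamea)$, I would take $s \in A_k(\gamea'')$ with a witnessing strategy $\sigma''$ in $\gamea''$ and simply play $\sigma''$ in $\gamea$ (extended arbitrarily on histories that leave $S\setminus Z$). Using $k$-closedness of $S \setminus Z$ --- player $1-k$'s states in $S\setminus Z$ have no edge leaving $S\setminus Z$, and at player $k$'s states $\sigma''$ chooses inside $S \setminus Z$ --- an induction shows that every $\sigma''$-compatible play from $s$ in $\gamea$ stays in $S \setminus Z$ and hence is already a play of $\gamea''$, so its colouring lies in $\Lambda_k$. Thus $s \in A_k(\gamea)$. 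This direction is routine.

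The inclusion $A_k(\gamea) \subseteq A_k(\gamea'')$ is the main obstacle. Given $s \in A_k(\gamea)$ with witness $\sigma$ (all $\sigma$-compatible plays from $s$ lie in $\Lambda_k \subseteq \Omega_k$), the key claim is that no $\sigma$-compatible play from $s$ ever enters $Z$. I would prove this by contradiction: if some $\sigma$-compatible play reaches a first state $t \in Z$ after a history $h$, then because $t \in \attr_{1-k}(\gamea, A_{1-k}(\gamea))$ player $1-k$ can force the continuation from $t$ into $A_{1-k}(\gamea)$, and from the state $u \in A_{1-k}(\gamea)$ so reached he can switch to a winning-core strategy guaranteeing $\Lambda_{1-k}$. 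Combining these two sub-strategies for player $1-k$ against the fixed $\sigma$ yields a play $\pi$ that is compatible with $\sigma$ from $s$ yet has a suffix in $\Lambda_{1-k} \subseteq \Omega_{1-k}$; by Lemma~\ref{lem:suffix_winning} then $c(\pi) \in \Omega_{1-k}$, contradicting $c(\pi) \in \Omega_k$ since $\Omega_k$ and $\Omega_{1-k}$ are disjoint. Hence every $\sigma$-compatible play from $s$ stays in $S \setminus Z$; in particular $s \notin Z$, and $\sigma$ restricts to a legal strategy of $\gamea''$ whose compatible plays from $s$ are exactly these plays, all in $\Lambda_k$. Therefore $s \in A_k(\gamea'')$.

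The delicate point I would be most careful with is this contradiction argument: one must splice player $1-k$'s attractor strategy and winning-core strategy together correctly and check that the resulting play remains compatible with $\sigma$ throughout, so that it simultaneously belongs to $\Omega_k$ (from $\sigma$) and to $\Omega_{1-k}$ (from the suffix). The disjointness of the two parity objectives together with the prefix-independence of the winner (Lemma~\ref{lem:suffix_winning}) is exactly what forces the contradiction. If convenient I would also invoke Proposition~\ref{prop:remove_winning} applied to $V = A_{1-k}(\gamea) \subseteq W_{1-k}(\gamea)$ to record the cleaner fact that $Z \subseteq W_{1-k}(\gamea)$, which immediately gives $A_k(\gamea) \cap Z = \emptyset$ via Proposition~\ref{prop:core_subset}; but the ``never leaves $S\setminus Z$'' statement above is strictly stronger and is what the second inclusion really needs.
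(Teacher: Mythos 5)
Your proposal is correct and takes essentially the same route as the paper's own proof: one inclusion by extending $\sigma''$ arbitrarily and using $k$-closedness of $S \setminus \attr_{1-k}(\gamea, A_{1-k}(\gamea))$ to get $\play(\gamea, s, \sigma) = \play(\gamea'', s, \sigma'')$, and the converse by showing that no $\sigma$-compatible play can enter the attractor because its states are winning for player $1-k$, after which $\sigma$ restricts to $\gamea''$. Your explicit splicing contradiction (attractor strategy into $A_{1-k}(\gamea)$, then a winning-core strategy, then Lemma~\ref{lem:suffix_winning} and disjointness of $\Omega_k$ and $\Omega_{1-k}$) simply spells out in detail what the paper asserts in one line, so the two arguments coincide in substance.
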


The situation is illustrated in Figure \ref{fig:recurse} where the winning regions and winning cores for the two players in $\gamea'$ are shown as well. Note that $A_{1-k}(\gamea)$ is contained in $W_{1-k}(\gamea')$ but that $A_{k}(\gamea)$ can contain states in $U$.
 
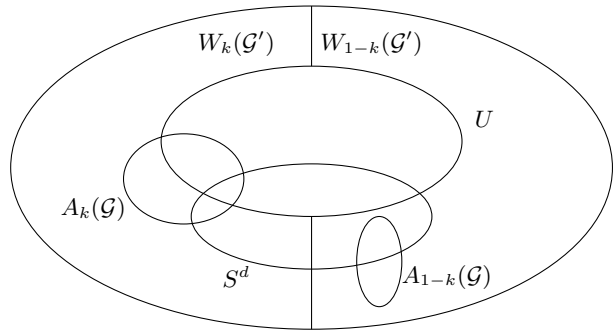
\begin{figure}
\begin{center}
\begin{tikzpicture}

\tikzstyle{every node}=[ellipse, draw=black,
                        inner sep=0pt, minimum width=3pt, minimum height=3pt]

\draw (6,6.35) ellipse (4cm and 2.15cm);                    
\draw (6,6.7) ellipse (2cm and 1cm);


\draw (6,5.7) ellipse (1.6cm and 0.7cm);

\draw (4.3,6.2) ellipse (0.8cm and 0.6cm);
\draw (6.9,5.1) ellipse (0.3cm and 0.6cm);

\draw (8.3,7.0) node [draw=none] (s4) {$U$};
\draw (5,4.9) node [draw=none] (s5) {$S^d$};

\draw (3.1,5.8) node [draw=none] (s0) {$A_{k}(\gamea)$};
\draw (7.8,4.9) node [draw=none] (s1) {$A_{1-k}(\gamea)$};

\draw (5,8.0) node [draw=none] (s2) {$W_{k}(\gamea')$};
\draw (6.8,8.0) node [draw=none] (s3) {$W_{1-k}(\gamea')$};


\draw (6,7.7) -- (6,8.5) {};
\draw (6,5.7) -- (6,4.2) {};


\end{tikzpicture}
\end{center}
\caption{Illustration of winning cores and winning regions in $\gamea$.}
\label{fig:recurse}
\end{figure}

\subsection{Memoryless strategies}

In this subsection we will show that from winning core states, player $j$ has a \emph{memoryless strategy} $\sigma$ which ensures that the play begins with an infinite number of consecutive $j$-dominating sequences. In fact, we will show something even stronger, namely the following.

\begin{theorem}
\label{theo:memless}
Let $\gamea$ be a parity game and $j$ be a player. There is a memoryless strategy for player $j$ such that

\begin{itemize}
\item $c(\play(\gamea, s, \sigma)) \subseteq \Lambda_j$ for every $s \in A_j(\gamea)$

\item $c(\play(\gamea, s, \sigma)) \subseteq \Omega_j$ for every $s \in W_j(\gamea)$

\item $c(\play(\gamea, s, \sigma)) \cap \Lambda_{1-j} = \emptyset$ for every $s \not \in A_{1-j}(\gamea)$ 
 
\end{itemize}
 
\end{theorem}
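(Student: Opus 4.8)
The plan is to derive all three properties at once from a \emph{single} uniform memoryless optimal strategy for player $j$ in a game equipped with a four-valued payoff, invoking the memoryless determinacy result of \cite{GZ05}. Concretely, I would assign to each play $\rho$ the payoff $3$ if $c(\rho)\in\Lambda_j$, the payoff $2$ if $c(\rho)\in\Omega_j\setminus\Lambda_j$, the payoff $1$ if $c(\rho)\in\Omega_{1-j}\setminus\Lambda_{1-j}$, and the payoff $0$ if $c(\rho)\in\Lambda_{1-j}$, with player $j$ maximising and player $1-j$ minimising. Since the game is finite every colour sequence lies in exactly one of these four classes, so the payoff is total, and by Proposition~\ref{prop:cons_iff_jdom} it is a function of the parity of $\max(\infin(c(\rho)))$ and the parity of the largest non-initial colour of $\rho$ only. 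Writing the payoff as $2f+g$, where $f$ indicates that the dominant recurrent colour has parity $j$ and $g$ indicates that the largest non-initial colour has parity $j$, I would move the colouring onto edges (the colour of $(s,t)$ being $c(t)$), so that $f$ and $g$ become the limsup-parity and the supremum-parity of the edge-colour sequence with no special treatment of the initial state.

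Assuming \cite{GZ05} yields a uniform memoryless optimal strategy $\sigma$ for player $j$, I would read off the three bullets from the value $v(s)$ of each state. By definition of the winning core, $s\in A_j(\gamea)$ exactly when player $j$ can force payoff $3$, so $v(s)=3$ there and $\sigma$, attaining the value, keeps every play in $\Lambda_j$; this is the first bullet. Since player $j$ can win the parity objective from $s$ iff $s\in W_j(\gamea)$, we have $v(s)\geq 2$ on $W_j(\gamea)$, so $\sigma$ keeps every play in $\Omega_j$, giving the second bullet. For the third, $s\notin A_{1-j}(\gamea)$ means player $1-j$ cannot force payoff $0$, so by determinacy $v(s)\geq 1$, and $\sigma$ then guarantees that every play avoids the bottom class $\Lambda_{1-j}$. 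All three derivations use only that $\sigma$ simultaneously attains $v(s)$ from every $s$, which is exactly the content of a \emph{uniform} memoryless optimal strategy.

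The main obstacle is verifying the two hypotheses of \cite{GZ05}, monotonicity and selectivity, for this payoff. Monotonicity I expect to be routine: prepending a finite prefix cannot change $f$ and can only enlarge the supremum entering $g$, and because $f$ carries the dominant weight $2$ the induced order on payoffs is preserved in all cases. Selectivity is the delicate point, and I expect essentially all the work to lie there. The danger is the supremum component $g$: under a shuffle of two colour sequences both the limsup and the supremum become the coordinatewise maximum, so the maximiser could seemingly combine a good recurrent parity from one sequence with a good transient maximal colour from the other and thereby exceed both component payoffs, which would break the maximiser direction of selectivity. The way I would rescue this is to note that selectivity need only be checked along the cyclic behaviours arising in actual plays: on a simple cycle the supremum and the limsup coincide, so the payoff collapses to $3$ times the parity indicator of the cycle's dominant colour, i.e.\ to an ordinary parity payoff, which is selective; the transient discrepancy between supremum and limsup is confined to the prefix and is absorbed by monotonicity. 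Equivalently, any memoryful play forcing $\Lambda_j$ can be \emph{serialised} into a lasso that first realises a dominant colour of parity $j$ and then settles into a cycle of limsup-parity $j$, so the extra strength conferred by the supremum never actually requires memory. Turning this serialisation intuition into a clean verification of the \cite{GZ05} conditions, or into a self-contained cycle-based memoryless determinacy argument, is the crux of the proof.
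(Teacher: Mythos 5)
Your high-level architecture is genuinely the paper's: the paper defines exactly your four-class order $\Lambda_{1-j} \vartriangleleft_j \Omega_{1-j}\setminus\Lambda_{1-j} \vartriangleleft_j \Omega_j\setminus\Lambda_j \vartriangleleft_j \Lambda_j$ as a preference relation, obtains a uniform memoryless optimal strategy from \cite{GZ05}, and reads the three bullets off optimality just as you do (that reading-off step in your second paragraph is correct). The gap is precisely at what you yourself flag as the crux, and it is a real gap, not a presentational one. Selectivity in the Gimbert--Zielonka framework is a property of the preference relation quantified over \emph{arbitrary} languages of finite factors; the theorem consumes it in that generality, so you are not entitled to ``check it only along the cyclic behaviours arising in actual plays.'' Restricting the check to simple cycles, on the grounds that optimal play can be serialised into a lasso, presupposes the memoryless determinacy you are trying to establish --- the serialisation claim (``any memoryful play forcing $\Lambda_j$ can be re-cut so that the dominant colour of the right parity is realised first'') is exactly the nontrivial content of the theorem, and you leave it unproven. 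Likewise, the remark that the prefix discrepancy between supremum and limsup is ``absorbed by monotonicity'' glosses over the one genuine interaction in this payoff: the largest colour of a prefix can demote a continuation from $\Lambda_j$ to $\Omega_j\setminus\Lambda_j$, or force the player to first climb to a yet larger colour of the right parity before settling into a good cycle, and handling that interaction is where all the work lies. As submitted, the proposal reduces the theorem to an open claim.

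It is worth seeing how the paper routes around exactly this difficulty. It does \emph{not} verify monotonicity and selectivity at all. Instead it uses the other half of \cite{GZ05} (Proposition~7, restated as Proposition~\ref{prop:solitaire}): if both one-player versions of the game admit memoryless optimal strategies, then so does the two-player game; by the symmetry of the preference relation under swapping $j$ and $1-j$, it suffices to treat games where player~$0$ owns every state. The substantive work is then Proposition~\ref{prop:memoryless}: an explicit construction of a memoryless optimal strategy in the one-player case, via the reward order $\prec_0$, a value $\val(s)$ (the $\preceq_0$-best achievable largest non-initial colour over winning plays from $s$), a distance function $\dist$ and a next-move choice realising it --- this construction is precisely your serialisation intuition made rigorous, including the delicate case $\val(s) \prec_0 0$ --- together with, on the opponent's winning region, a memoryless optimal strategy for the induced \emph{weak parity} condition (avoiding $\Lambda_{1-j}$ there amounts to making the largest colour even, and weak parity games are memoryless determined \cite{Cha08}); this last component, needed for your third bullet on losing states, is entirely absent from your sketch. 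Your monotone-plus-selective route does look completable --- in any shuffle the limsup class is witnessed by some single block that can be iterated purely, and a transient block carrying an even supremum can itself be iterated, so the feared counterexamples to selectivity do not seem to materialise --- but that case analysis, for all four classes and both players, is the proof, and it is the part you have not supplied.
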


That is, player $j$ has a memoryless strategy that ensures that the play begins with an infinite number of consecutive $j$-dominating sequences when the play starts in a winning core state. Moreover, it ensures that the play is winning when the play begins in a winning state for player $j$. Finally, it ensures that the play does not begin with an infinite number of $(1-j)$-dominating sequences when the play does not begin in a winning core state of player $1-j$.

In order to prove Theorem \ref{theo:memless} we will use a result from \cite{GZ05}. But first we need a few definitions. Let a \emph{preference relation} be a binary relation on infinite sequences of colors (from a finite set of colors) that is reflexive, transitive and total. Let $\sqsubseteq_0$ be a preference relation for player 0. Intuitively, for two infinite sequences $\alpha$ and $\alpha'$ of colors we write $\alpha \sqsubseteq_0 \alpha'$ when $\alpha'$ is at least is good for player 0 as $\alpha$. As we deal only with antagonistic games here, we assume that there is a corresponding preference relation $\sqsubseteq_1$ for player $1$ such that for all infinite sequences $\alpha, \alpha'$ of colors we have $\alpha \sqsubseteq_0 \alpha'$ if and only if $\alpha' \sqsubseteq_1 \alpha$. We write $\alpha \sqsubset_j \alpha'$ for player $j$ when $\alpha \sqsubseteq_j \alpha'$ and $\alpha' \not \sqsubseteq_j \alpha$.

An \emph{optimal strategy} $\sigma_j^*$ for player $j$ in a game $\gamea$ with preference relation $\sqsubseteq_j$ is a strategy such that for every state $s$, all strategies $\sigma_j$ and $\sigma_{1-j}$ of player $j$ and $1-j$ respectively the unique plays $\rho^* \in \play(\gamea, s, \sigma_j^*) \cap \play(\gamea, s, \sigma_{1-j})$ and $\rho \in \play(\gamea, s, \sigma_j) \cap \play(\gamea, s, \sigma_{1-j})$ satisfy $\rho \sqsubseteq_j \rho^*$.

We now define a total order $\trianglelefteq_j$ for player $j$ with corresponding strict order $\vartriangleleft_j$ on $\{\Lambda_{j}, \Lambda_{1-j}, \Omega_j \setminus \Lambda_j, \Omega_{1-j} \setminus \Lambda_{1-j} \}$ by
$$\Lambda_{1-j} \vartriangleleft_j \Omega_{1-j} \setminus \Lambda_{1-j} \vartriangleleft_j \Omega_j \setminus \Lambda_j \vartriangleleft_j \Lambda_j$$
Note that an infinite sequence $\alpha$ of colors in a parity game belongs to exactly one of the four sets above. We write $\kappa(\alpha)$ for the set that $\alpha$ belongs to. For instance, $\kappa(\alpha) = \Omega_0 \setminus \Lambda_0$ for the infinite sequence $\alpha = 2 3 2 2 2 2...$

Now, more specifically, let $\leq_j$ be a preference relation for player $j$ on infinite sequences $\alpha, \alpha'$ of colors induced by the order $\trianglelefteq_j$ as follows
$$\alpha \leq_j \alpha' \textup{ if and only if } \kappa(\alpha) \trianglelefteq_j \kappa(\alpha')$$
As a special case of Proposition 7 in \cite{GZ05} we have the following.

\begin{proposition}
\label{prop:solitaire}
 Let player 0 have preference relation $\leq_0$ and player 1 have preference relation $\leq_1$. If
 
 \begin{enumerate}
  \item every parity game $\gamea = (S, S_0, S_1, R, c)$ with $S = S_0$ has a memoryless optimal strategy for player 0 and
  
  \item every parity game $\gamea = (S, S_0, S_1, R, c)$ with $S = S_1$ has a memoryless optimal strategy for player 1
 \end{enumerate}
 then in every parity game $\gamea$ both player 0 and player 1 have memoryless optimal strategies.
 
\end{proposition}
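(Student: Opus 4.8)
The plan is to obtain the statement by a direct instantiation of the general reduction of Gimbert and Zielonka \cite{GZ05}, rather than by a self-contained argument. Their Proposition 7 asserts, for any antagonistic pair of preference relations over a finite set of colors, that memoryless optimal strategies in all \emph{one-player} games (player $0$ in games with $S = S_0$, and player $1$ in games with $S = S_1$) already guarantee memoryless optimal strategies for both players in \emph{every} game. Hypotheses (1) and (2) of the proposition are precisely the antecedents of that reduction for the relations $\leq_0$ and $\leq_1$, so once the framework of \cite{GZ05} is seen to apply, the conclusion is immediate.

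Accordingly, the only real step is to check that $\leq_0$ and $\leq_1$ fit this framework. First I would verify that each $\leq_j$ is a genuine preference relation. Since every infinite color sequence $\alpha$ lies in exactly one of the four classes $\Lambda_j$, $\Lambda_{1-j}$, $\Omega_j \setminus \Lambda_j$, $\Omega_{1-j} \setminus \Lambda_{1-j}$, and $\trianglelefteq_j$ is a total order on these four classes, the induced relation $\alpha \leq_j \alpha'$ iff $\kappa(\alpha) \trianglelefteq_j \kappa(\alpha')$ is reflexive, transitive and total. Second, I would confirm that $\leq_0$ and $\leq_1$ are antagonistic: the order $\trianglelefteq_1$ is exactly the reverse of $\trianglelefteq_0$ on the four classes, so $\alpha \leq_0 \alpha'$ holds if and only if $\alpha' \leq_1 \alpha$, as required of an antagonistic pair. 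Because the color set $\{1,\dots,d\}$ is finite, all hypotheses of \cite{GZ05} are met and its Proposition 7 yields the claim.

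There is, strictly speaking, no obstacle \emph{inside} this proposition: its antecedents (1) and (2) are assumed, and discharging them for the specific relations $\leq_0,\leq_1$ belongs to the single-player analysis to be carried out separately. The genuine difficulty arises only if one insists on a self-contained proof, that is, on reproving the reduction of \cite{GZ05}. That argument proceeds by first showing that one-player memoryless optimality for both players is equivalent to the relations being \emph{monotone} and \emph{selective}, and then establishing two-player memoryless optimality for monotone and selective relations by induction on the total number of states with at least two outgoing transitions. At such a state one splits the outgoing edges into two groups, forms two subgames each with strictly fewer choices, applies the induction hypothesis, and glues the resulting memoryless strategies into a single memoryless strategy whose local choice at the split state follows the better of the two groups. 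The hard part of this route is exactly the gluing step, where monotonicity and selectivity are needed to argue that the locally better choice stays globally optimal; this is precisely the work that \cite{GZ05} packages into Proposition 7, which is why invoking it directly is the approach I would take.
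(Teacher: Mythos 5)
Your proposal takes essentially the same approach as the paper: the paper offers no self-contained proof of this proposition but introduces it with the words ``As a special case of Proposition 7 in \cite{GZ05}'', which is exactly your route of instantiating the Gimbert--Zielonka reduction. Your added verifications --- that each $\leq_j$ is reflexive, transitive and total because $\kappa$ maps every color sequence to exactly one of the four classes totally ordered by $\trianglelefteq_j$, and that $\leq_0,\leq_1$ are antagonistic because $\trianglelefteq_1$ is the reverse of $\trianglelefteq_0$ --- are correct and, if anything, make the appeal to \cite{GZ05} more careful than the paper's own one-line citation.
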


That is, there exists memoryless optimal strategies in every game if and only if there exists memoryless optimal strategies in every game where one player controls all the states.

As the preference relations $\leq_j$ are defined symmetrically, Proposition \ref{prop:solitaire} tells us that if we can show that player 0 has a memoryless optimal strategy with preference relation $\leq_0$ in all parity games where player $0$ controls every state then Theorem \ref{theo:memless} follows. This is in fact the case.

\begin{proposition}
\label{prop:memoryless}
 Let $\gamea = (S, S_0, S_1, R, c)$ be a parity game with $S = S_0$. Then player 0 has a memoryless optimal strategy with preference relation $\leq_0$.
\end{proposition}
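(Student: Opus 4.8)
The plan is to use heavily that $S=S_0$: the game is then a one-player optimization problem, a memoryless strategy is simply a choice $\sigma(s)$ of a single successor for each $s$, and under any such choice every play is a lasso, a finite prefix followed by one simple cycle repeated forever. Because only four $\kappa$-classes occur and they form a chain under $\trianglelefteq_0$, each state $s$ has a well-defined optimal value $v(s)$, namely the $\trianglelefteq_0$-largest $\kappa(c(\rho))$ over all $\rho\in\play(\gamea,s)$, and the goal is a single memoryless $\sigma$ whose induced lasso from every $s$ attains $v(s)$. First I would record a lexicographic reformulation of $v$. Writing $m(\rho)=\max(\infin(c(\rho)))$ for the largest colour seen infinitely often and $M(\rho)=\max_{i>0}c(\rho_i)$ for the largest non-initial colour, we have $c(\rho)\in\Omega_0$ iff $m(\rho)$ is even and $c(\rho)\in\Lambda_0$ iff moreover $M(\rho)$ is even. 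Hence the chain $\Lambda_1\vartriangleleft_0\Omega_1\setminus\Lambda_1\vartriangleleft_0\Omega_0\setminus\Lambda_0\vartriangleleft_0\Lambda_0$ encodes exactly that player $0$ first prefers to win ($m$ even) and then, as a tie-breaker, prefers $M$ even.

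Next I would split $S$ into the four value regions and realize each by a memoryless reach-then-loop strategy. The primary component is standard: the win region $W_0(\gamea)$ consists of the states that can reach a cycle of even maximal colour, and one-player parity reasoning gives a memoryless strategy winning from $W_0(\gamea)$, while every play from $W_1(\gamea)=S\setminus W_0(\gamea)$ is losing. The tie-breaker I would control inside colour-restricted subgames: for an even colour $e$ put $\gamea_{\le e}=\gamea\restriction\{s:c(s)\le e\}$. A state has value $\Lambda_0$ exactly when, for some even $e$, there is a path in $\gamea_{\le e}$ from $s$ that meets a colour-$e$ state and reaches a cycle of even maximal colour; staying below $e$ then pins $M=e$ even while the cycle secures the win, and the realizing strategy is to head for the chosen colour and cycle and then loop. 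The region of value $\Omega_1\setminus\Lambda_1$ sits inside $W_1(\gamea)$ and is symmetric: there every reachable cycle has odd maximal colour, so player $0$ makes $M$ even by first passing through an even colour larger than that cycle's colours and only then settling into the losing cycle. The leftover states of $W_0(\gamea)$ and $W_1(\gamea)$ take values $\Omega_0\setminus\Lambda_0$ and $\Lambda_1$, where any winning (respectively arbitrary) memoryless choice is fine.

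I would then assemble these moves into a single $\sigma$ and verify it by the lasso computation: the play induced from $s$ must have cycle-maximum and overall-maximum of the prescribed parities, so its $\kappa$-class equals $v(s)$; to pass between the $M$-even/odd bookkeeping and the $\Lambda_j,\Omega_j$ formulation I would appeal to Lemma \ref{lem:suffix_winning} and Propositions \ref{prop:winning_iff_jdoms} and \ref{prop:cons_iff_jdom}. Since no play can exceed $v(s)$, attaining $v(s)$ certifies optimality, and memorylessness then holds by construction.

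The main obstacle is that the tie-breaking criterion is prefix-dependent: $M(\rho)$ counts the transient colours preceding the cycle, so the genuinely best continuation out of a state depends on the largest colour already seen. This is why plain parity memoryless determinacy does not apply, and why the tempting fix of recording the running maximum colour in the state, which turns the objective into a Muller/tail condition on the product where one-player optimal play is memoryless (being just reach a good cycle and loop), only yields a \emph{finite-memory} strategy on $\gamea$. The real content is to show this memory is dispensable: committing, per state, to one target even colour $e$ and one cycle of even maximal colour reachable inside $\gamea_{\le e}$, and routing acyclically toward them, fixes the overall maximum with no recourse to history. The delicate step is to make these commitments globally consistent, i.e.\ a single successor at each state that is simultaneously optimal from that state and from every state routed through it, which I would settle with an exchange argument on lassos: process the regions from the $\trianglelefteq_0$-best downward and show that at each state some value-preserving successor can be fixed, so that deleting the remaining out-edges never lowers any $v(s)$.
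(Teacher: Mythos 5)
You have located the crux correctly --- the tie-break $M(\rho)=\max_{i>0}c(\rho_i)$ is prefix-dependent, so the single successor fixed at each state must be simultaneously optimal for every state routed through it --- but at exactly that point the proposal stops at a name (``an exchange argument on lassos''), and the one concrete mechanism you do offer, processing the four $\kappa$-regions from the $\trianglelefteq_0$-best downward, cannot work, because the dangerous conflicts live \emph{inside} a single region. Concretely: let every state belong to player $0$, let $u \to x \to s$ with $c(x)=3$, $c(s)=1$, and let $s$ have two successors $t_1$ and $t_2$ that are self-loops of colours $2$ and $4$ respectively. Every state here has value $\Lambda_0$, and both choices at $s$ preserve $v(s)$; yet fixing $\sigma(s)=t_1$ drops $v(u)$ to $\Omega_0\setminus\Lambda_0$, since the play from $u$ then has $M=3$. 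So ``some value-preserving successor can be fixed'' is not a local statement about $s$, and best-region-first processing does not resolve it: one must prove that one branch dominates the other \emph{uniformly in the prefix maximum}, and that uniform-dominance lemma is the actual content of the proposition, not a finishing detail.

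The lemma is true, and the paper's proof is essentially its operationalization, which your sketch never reaches. Measure a branch by the $\preceq_0$-reward-order-maximal value of $m(\rho)$ (largest non-initial colour) over winning plays along it, where $\prec_0$ is the Vöge--Jurdzi\'nski reward order (large even colours best, then small odd ones); then an even maximum $M$ serves every prefix with maximum $\leq M$ and every even prefix, a larger even $M$ serves strictly more, and an odd $M$ serves only even prefixes exceeding it --- so the $\preceq_0$-maximal branch serves all prefixes at once, which is why choosing $t_2$ repairs the example above. The paper encodes this as $\val(s)$ and pairs it with a second ingredient your ``routing acyclically toward them'' presupposes rather than proves: a progress measure $\dist(s)$ (shortest history to a colour-$\val(s)$ state through strictly smaller colours, landing on a state with $\val \succeq_0 \val(s)-1$, resp.\ $\succ_0 \val(s)$ in the odd case), needed to rule out the induced play deferring its target colour forever; the verification then shows $\val$ is non-increasing along the play and eventually constant and even. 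Your handling of the losing region does match the paper --- on $W_1(\gamea)$ the tie-break is exactly the weak parity condition, settled by memoryless determinacy of weak parity games --- but without the uniform-dominance argument and the distance argument the proposal leaves precisely the two load-bearing steps of the proof unproved.
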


\subsection{A sequence that converges quickly to the winning core}

Let $A^i_j(\gamea)$ be the set of states from which player $j$ can ensure that the play begins with at least $i$ consecutive $j$-dominating sequences. First, note that this defines an infinite decreasing sequence
$$A^0_j(\gamea) \supseteq A^1_j(\gamea) \supseteq ...$$
of sets of states and that $A^i_j(\gamea) \supseteq A_j(\gamea)$ for all $i \geq 0$.

\begin{theorem}
\label{theo:n_jdoms}
$A^{n}_j(\gamea) = A_j(\gamea)$
\end{theorem}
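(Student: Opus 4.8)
The inclusion $A_j(\gamea) \subseteq A^n_j(\gamea)$ is already noted before the statement, so the whole content is the reverse inclusion $A^n_j(\gamea) \subseteq A_j(\gamea)$. The plan is to exhibit the sequence $(A^i_j(\gamea))_i$ as the iteration of a single monotone operator, argue that it must stabilize within $n$ steps for cardinality reasons, and finally identify its stable value with the winning core.

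First I would make the recursive structure explicit. For a target set $T \subseteq S$ let $B_j(T)$ be the set of states from which player $j$ has a strategy forcing a finite initial segment that is $j$-dominating and ends in a state of $T$. Using the one-state overlap convention between consecutive dominating sequences, a state lies in $A^{i+1}_j(\gamea)$ exactly when player $j$ can force a $j$-dominating segment landing in a state from which $i$ further consecutive $j$-dominating sequences can be forced; hence $A^{i+1}_j(\gamea) = B_j(A^i_j(\gamea))$, with $A^0_j(\gamea) = S$. The operator $B_j$ is monotone (a larger target can only help player $j$) and satisfies $B_j(\emptyset) = \emptyset$, since no segment can end in the empty set.

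Next comes the stabilization argument, which is the quantitative heart of the statement. The sets form a decreasing chain $S = A^0_j(\gamea) \supseteq A^1_j(\gamea) \supseteq \ldots$ of subsets of the $n$-element set $S$. If the first $n$ inclusions were all strict, then each step would drop the cardinality by at least one, forcing $A^n_j(\gamea) = \emptyset$ and thus $A^{n+1}_j(\gamea) = B_j(\emptyset) = \emptyset$; otherwise some $A^i_j(\gamea) = A^{i+1}_j(\gamea)$ with $i < n$. In either case $A^n_j(\gamea) = A^{n+1}_j(\gamea)$, and since $A^{i+1}_j(\gamea) = B_j(A^i_j(\gamea))$ is determined by its argument, equality at one step propagates: $A^n_j(\gamea) = A^{n+k}_j(\gamea)$ for all $k \geq 0$. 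Writing $T = A^n_j(\gamea)$, we obtain the fixpoint equation $T = B_j(T)$, and $T$ equals the limit $\bigcap_i A^i_j(\gamea)$.

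It remains to show $T \subseteq A_j(\gamea)$. Since $T = B_j(T)$, from every state of $T$ player $j$ can force a $j$-dominating segment ending again in $T$. Stitching these segment-forcing strategies together, by restarting the forcing strategy each time the play re-enters $T$, produces a single strategy under which every compatible play begins with an infinite number of consecutive $j$-dominating sequences; by Proposition \ref{prop:cons_iff_jdom} such plays have colorings in $\Lambda_j$, so every state of $T$ lies in $A_j(\gamea)$. Combined with the known inclusion $A_j(\gamea) \subseteq A^n_j(\gamea)$ this yields $A^n_j(\gamea) = A_j(\gamea)$. I expect the main obstacle to be this last step: verifying that the finitely many segment strategies can be legitimately concatenated into one strategy guaranteeing infinitely many dominating sequences. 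The returns to $T$ must be genuinely forced in finite time, which is exactly what membership in $B_j(T)$ provides, and one can either invoke the memoryless strategy of Theorem \ref{theo:memless} or argue directly by determinacy of the segment-reachability games.
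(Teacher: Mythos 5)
There is a genuine gap, and it sits exactly where you placed the weight of your argument: the claimed recursion $A^{i+1}_j(\gamea) = B_j(A^i_j(\gamea))$ is false in general. Only the inclusion $B_j(A^i_j(\gamea)) \subseteq A^{i+1}_j(\gamea)$ holds. The point is that membership in $A^{i+1}_j(\gamea)$ only requires that \emph{every compatible play} admits, a posteriori, a decomposition into $i+1$ consecutive $j$-dominating segments; the breakpoints may depend on the whole play and cannot in general be committed to online, so a strategy witnessing $A^{i+1}_j(\gamea)$ need not force a $j$-dominating prefix that ends, at a specific moment, in a state of $A^i_j(\gamea)$. Your iteration is precisely the paper's sequence $B^i_j(\gamea)$ from Section \ref{sec:approx} ($B^0_j = S$, $B^{i+1}_j = B_j(B^i_j)$), and the paper explicitly remarks that $A^1_j(\gamea) = B^1_j(\gamea)$ always but $A^2_j(\gamea) \neq B^2_j(\gamea)$ for some games, i.e.\ the recursion already fails at the second step. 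The game of Figure \ref{fig:not_dominion} is a concrete counterexample to your whole scheme: there $A_j$-side one has $A^i_0(\gamea) = \{s_0,s_3\}$ for all $i \geq 2$, yet $B_0(\{s_0,s_3\}) = \{s_3\}$ and the iteration collapses to $B_0(\gamea) = \emptyset$, because player 1 can park the play at $s_2$ forever so that no $0$-dominating prefix ever ends back in $\{s_0,s_3\}$, even though every play from $s_0$ does begin with infinitely many consecutive $0$-dominating sequences (the paper's discussion of the history $h = s_0 s_1 s_2 s_2 s_2 s_2$ right after the theorem illustrates exactly this re-chopping phenomenon). Consequently your fixpoint-plus-stitching conclusion only establishes $B_j(\gamea) \subseteq A_j(\gamea)$ — which is the paper's Proposition \ref{prop:b_subset_a}, the reason the algorithm is merely an under-approximation — and your ``equality propagates'' stabilization step also collapses, since without the recursion a single repetition $A^i_j = A^{i+1}_j$ gives no handle on later terms. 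A useful sanity check you missed: if your recursion were true, winning cores, and hence parity games, would be solvable in polynomial time, which the paper pointedly flags as the open difficulty immediately after the theorem.

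The paper's actual proof avoids any attempt to compute the sequence recursively. It takes $s \in A^n_j(\gamea) \setminus A_j(\gamea)$ and invokes Theorem \ref{theo:memless} to give player $1-j$ a \emph{memoryless} strategy $\sigma$ with $c(\play(\gamea,s,\sigma)) \cap \Lambda_j = \emptyset$; then a play compatible with $\sigma$ beginning with $n$ consecutive $j$-dominating sequences has $n+1$ breakpoints among $n$ states, so two breakpoints share a state, and pumping the intervening segment (legitimate because $\sigma$ is memoryless) yields an ultimately periodic play compatible with $\sigma$ whose coloring lies in $\Lambda_j$ — a contradiction. If you want to salvage your approach you would need the pigeonhole-and-pumping mechanism in some form; the operator $B_j$ cannot carry the argument.
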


\begin{proof} First note that $A^{n}_j(\gamea) \supseteq A_j(\gamea)$.

To show that $A^{n}_j(\gamea) \subseteq A_j(\gamea)$ suppose for contradiction that $s \in A^{n}_j(\gamea)$ and $s \not \in A_j(\gamea)$. By Theorem \ref{theo:memless} player $1-j$ has a memoryless strategy $\sigma$ in $\gamea$ such that $c(\play(\gamea, s, \sigma)) \cap \Lambda_j = \emptyset$.

Since $s \in A^{n}_j(\gamea)$ there exists a play $\rho \in \play(\gamea, s, \sigma)$ that begins with $n$ consecutive $j$-dominating sequences. Let $i_0 < ... < i_n$ be indices with $i_0 = 0$ such that $\rho_{i_k} \rho_{i_k+1} ... \rho_{i_{i+1}}$ is $j$-dominating for all $0 \leq k < n$. As there are $n+1$ indices and only $n$ different states in $\gamea$ there must exists two indices $u,v$ with $u < v$ such that $\rho_{i_u} = \rho_{i_v}$. Now, the play $\pi = \rho_0 ... \rho_{u} (\rho_{u+1} ... \rho_{v})^\omega$ belongs to $\play(\gamea, s, \sigma)$ as well since $\sigma$ is memoryless. This gives a contradiction since $c(\pi) \in \Lambda_j$ and $\play(\gamea, s, \sigma)$ contains no such play according to the definition of $\sigma$.
\end{proof}

This proposition implies that if there is a way to calculate $A^i_j(\gamea)$ from $A^0_j(\gamea), ..., A^{i-1}_j(\gamea)$ in polynomial time for a given $i$ then the winning core can be computed in polynomial time as the sequence converges after at most $n$ steps. This would also imply that parity games could be solved in polynomial time.

To illustrate why it is not necessarily easy to compute this in a simple way consider again the parity game in Figure \ref{fig:not_dominion} and the history $h = s_0 s_1 s_2 s_2 s_2 s_2$ which begins with the 5 consecutive $0$-dominating sequences $s_0 s_1$, $s_1 s_2$, $s_2 s_2$, $s_2 s_2$ and $s_2 s_2$. As player 1 controls every state of the game he might force the play after $h$ to continue with the suffix $s_3 s_0^\omega$. Now, the way we chopped up $h$ into $5$ consecutive $0$-dominating sequences cannot be extended such that the entire play $\rho = h \cdot s_3 s_0^\omega$ begins with an infinite number of consecutive $0$-dominating sequences as the color of $s_3$ is larger than all colors that appear later in $\rho$. However, if we pick the first $0$-dominating sequence to be $s_0 s_1 s_2 s_2 s_2 s_2 s_3$ then it is easy to see that $\rho$ begins with an infinite number of consecutive $0$-dominating sequences. Thus, during the play of a game we might not know how to chop up the play in a way which ensures that the play begins with an infinite number of consecutive $0$-dominating sequences when the play begins in a winning core state for player $0$. However, we know that it is possible for player $0$ to force that the play begins with an infinite number of consecutive $0$-dominating sequences.

\section{Complexity of winning core computation}
\label{sec:complexity_wc}
In this section we show how computation of winning cores can be used to solve parity games. Next, we provide a polynomial-time reduction of solving parity games to computing winning cores.

\subsection{Solving parity games by winning core computation}

By Proposition \ref{prop:core_subset} the winning core for player $j$ is a subset of the winning region. Thus, according to Proposition \ref{prop:remove_winning} we get the following corollary which forms the basis of a recursive algorithm for solving parity games by computing winning cores.

\begin{corollary}
 \label{cor:remove_core}
 
  Let $\gamea = (S, S_0, S_1, R, c)$ be a parity game, $A' = \attr_j(\gamea, A_j(\gamea))$ and $\gamea' = \gamea \restriction (S \setminus A')$. Then
 
 \begin{itemize}
  \item $W_j(\gamea) = A' \cup W_j(\gamea')$
  
  \item $W_{1-j}(\gamea) = W_{1-j}(\gamea')$
  
 \end{itemize}
\end{corollary}

Given an algorithm $\textsc{WinningCore}(\gamea, j)$ that computes the winning core $A_j(\gamea)$ for player $j$ in $\gamea$ we can compute winning regions in parity games using the algorithm in Figure \ref{fig:algo_parity}.

 \begin{figure}
 \centering
  \centering
  \algtext*{EndIf}
 \algtext*{EndFor}

\begin{algorithmic}
\State \textsc{ParityGameSolver}($\gamea$):


\State $A \gets \textsc{WinningCore}(\gamea, 0)$
\If{$A = \emptyset$} \Return $(\emptyset, S)$
\EndIf
  \State $A' = \attr_0(\gamea, A)$
  \State $(W_0,W_1) \gets \textsc{ParityGameSolver}(\gamea \restriction (S \setminus A'))$
  \State \Return $(A' \cup W_0, W_1)$

\end{algorithmic}
\caption{Solving parity games using winning core computation}
\label{fig:algo_parity}
\end{figure}

The algorithm first calculates the winning core for player 0. If it is empty then by Theorem \ref{theo:empty_iff} player 1 wins in all states. Otherwise, $A' = \attr_0(\gamea, A_0(\gamea))$ is winning for player 0 and further, the remaining winning states can be computed by a recursive call on $\gamea \restriction (S \setminus A')$ according to Corollary \ref{cor:remove_core}. Note that this game has a strictly smaller number of states than $\gamea$ as $A' \neq \emptyset$. Thus, the algorithm performs at most $n$ recursive calls. This implies that if winning cores can be computed in polynomial time then parity games can be solved in polynomial time.


\subsection{Reducing winning core computation to solving parity games}

We have seen how existence of a polynomial-time algorithm for computing winning cores would imply the existence of a polynomial-time algorithm for solving parity games. Here, we show the converse by a reduction from computing winning cores to solving parity games. 

We begin by introducing the notion of a \emph{product game} $\gamea_j^\dagger$ of a parity game $\gamea$ for player $j$.  Let $\gamea = (S, R, S_0, S_1, c)$ be a parity game with colors in $\{1,...,d\}$ and  $j \in \{0,1\}$ be a player. Construct from this a game $\gamea_j^\dagger = (S', S'_0, S'_1, R', c')$ such that $S' = S \times \{0,1,...,d\}$, $S'_j = S_j \times \{0,1,...,d\}$ for $j \in \{0,1\}$, $R' = \{((s,v),(s',v')) \in S' \times S' \mid (s,s') \in R \wedge v' = \max(v,c(s')) \}$. Finally, $c'(s,v) = c(s)$ if $v \equiv j \textup{ (mod } 2)$ and $c'(s,v) = v$ otherwise.

\mycut{
\begin{itemize}
 \item $S' = S \times \{0,1,...,d\}$
 
 \item $S'_0 = S_0 \times \{0,1,...,d\}$
 
 \item $S'_1 = S_1 \times \{0,1,...,d\}$
 
 \item $R' = \{((s,v),(s',v')) \in S' \times S' \mid (s,s') \in R \wedge v' = \max(v,c(s')) \}$
 
 \item $c'(s,v) = \left\{ \begin{array}{rl}
c(s) &\mbox{ if $v \equiv j \textup{ (mod } 2)$} \\
v &\mbox{ otherwise}
\end{array} \right. $
 
\end{itemize}
}
The idea is that the rules of $\gamea_j^\dagger$ when the play starts in $(s,0)$ are the same as in $\gamea$ when the play starts in $s$, but with two main differences. The first is that in $\gamea_j^\dagger$, the greatest color that has occured during the play (excluding the color of the initial state) is recorded in the state. The second is that the color of states in $\gamea_j^\dagger$ are as in $\gamea$ when the greatest color $e$ occuring so far in the play satisfies $e \equiv j \textup{ (mod } 2)$. Otherwise, the state is colored $e$.

We define a bijection $\gamma_\gamea: \path(\gamea) \times \{0,...,d\} \rightarrow \path(\gamea_j^\dagger)$ for $\rho = s_0 s_1 ... \in \path(\gamea)$ and $v \in \{0,...,d\}$ by
$$\gamma_\gamea(\rho, v) = (s_0,w_0) (s_1,w_1) ...$$
where $w_i = \max(v, \max_{0 < k \leq i} c(s_i))$. In particular, $w_0 = v$.

For a pair $(s,v) \in S \times \{0,...,d\}$ we define $\st(s,v) = s$ and $\val(s,v) = v$. This is extended to paths $\rho  = (s_0,v_0) (s_1,v_1) ...$ in $\gamea_j^\dagger$ such that the state sequence $\st(\rho) = s_0 s_1 ...$ and value sequence $\val(\rho) = v_0 v_1 ...$.

The following two lemmas show that $s \in S$ is in the winning core of player $j$ in $\gamea$ if and only if $(s,0)$ is a winning state for player $j$ in $\gamea_j^\dagger$. As $\gamea_j^\dagger$ has size polynomial in the size of $\gamea$ this gives a reduction from computing winning cores to computing winning regions.

\begin{lemma}
  Let $\gamea$ be a parity game and $s$ be a state. Then $s \in A_j(\gamea)$ implies $(s,0) \in W_j(\gamea_j^\dagger)$.
 
\end{lemma}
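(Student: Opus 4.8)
The plan is to prove that if $s \in A_j(\gamea)$ then $(s,0) \in W_j(\gamea_j^\dagger)$ by lifting a winning-core strategy from $\gamea$ to the product game. By definition of the winning core, there is a strategy $\sigma$ for player $j$ in $\gamea$ with $c(\play(\gamea, s, \sigma)) \subseteq \Lambda_j$, and by Theorem \ref{theo:memless} we may take $\sigma$ to be memoryless. I would use the bijection $\gamma_\gamea$ to transport $\sigma$ into a strategy $\sigma^\dagger$ for player $j$ in $\gamea_j^\dagger$: at a product state $(t,v)$ controlled by player $j$, let $\sigma^\dagger$ choose the successor dictated by $\sigma$ on the underlying state $t$ (moving to $(\sigma(t), \max(v, c(\sigma(t))))$). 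Since the first component of a play in $\gamea_j^\dagger$ just projects to a play in $\gamea$ via $\st(\cdot)$, every play $\pi \in \play(\gamea_j^\dagger, (s,0), \sigma^\dagger)$ satisfies $\st(\pi) \in \play(\gamea, s, \sigma)$, hence $c(\st(\pi)) \in \Lambda_j$.

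The main step is then to verify that $c(\st(\pi)) \in \Lambda_j$ forces $c'(\pi) \in \Omega_j$, i.e. that $\pi$ is winning for player $j$ in the product game. Here I would unwind the definitions of the value component and the recolouring. Writing $\pi = (s_0,w_0)(s_1,w_1)\ldots$ with $s_0 = s$, $w_0 = 0$, the values $w_i = \max_{0 < k \leq i} c(s_k)$ are nondecreasing and eventually stabilize at some limit value $w = \max_{i>0}(\infin(c(\st(\pi))))$, which is exactly the largest colour occurring infinitely often along the non-initial part of $\st(\pi)$. Because $c(\st(\pi)) \in \Lambda_j$, this largest infinitely-occurring colour $w$ is $\equiv j \pmod 2$ by the definition of $\Lambda_j$. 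I would then argue that from the point where the value stabilizes at $w$, all product states $(s_i, w)$ with $w \equiv j \pmod 2$ are recoloured to $c'(s_i, w) = c(s_i)$, so along the tail the colours in $\gamea_j^\dagger$ coincide with those in $\gamea$; consequently the largest colour occurring infinitely often in $c'(\pi)$ equals the largest colour occurring infinitely often in $c(\st(\pi))$, which is $\equiv j \pmod 2$, giving $c'(\pi) \in \Omega_j$.

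The one point requiring care — and the main obstacle — is handling the finitely many prefix states where the value $w_i$ has not yet reached its limit $w$, and more subtly the states on the stabilized tail that get recoloured to $w_i = w$ rather than to $c(s_i)$. I would observe that states recoloured to the value $w \equiv j \pmod 2$ contribute colour $w$, which is still of the correct parity $j$ and no larger than the colours genuinely achieved, so they do no harm to the $\Omega_j$ condition; and the finite prefix before stabilization is irrelevant to the $\max \infin$ computation by the same reasoning underlying Lemma \ref{lem:suffix_winning}. The boundedness clause in the definition of $\Omega_j$ is immediate since $c'$ takes finitely many values. Assembling these observations shows $c'(\pi) \in \Omega_j$ for every play $\pi$ compatible with $\sigma^\dagger$, so $\sigma^\dagger$ is winning for player $j$ from $(s,0)$ and therefore $(s,0) \in W_j(\gamea_j^\dagger)$, completing the proof.
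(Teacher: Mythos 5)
Your overall strategy is the paper's: lift the winning-core strategy $\sigma$ to a strategy in $\gamea_j^\dagger$, observe that every compatible product play projects under $\st(\cdot)$ to a play of $\sigma$ in $\gamea$ (hence with color sequence in $\Lambda_j$), argue that the value component stabilizes at a value of parity $j$ so that the recoloring makes the tail colors of the product play agree with those of the projected play, and finish by prefix-independence (Lemma \ref{lem:suffix_winning}). However, your key step contains a genuine error: the value component $w_i = \max_{0 < k \leq i} c(s_k)$ does \emph{not} stabilize at the largest color occurring infinitely often in $\st(\pi)$; it stabilizes at the largest non-initial color occurring \emph{at all} in $\st(\pi)$, which may occur only once (non-initial colors $4\,2\,2\,2\ldots$ give limit $4$, not $2$). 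To conclude that this limit has parity $j$ you must invoke the $j$-dominating clause of $\Lambda_j$, i.e.\ $\max_{i>0} \pi_i \equiv j \pmod{2}$ --- and your write-up never uses it: the parity claim you actually justify (largest infinitely-occurring color is $\equiv j \pmod 2$) follows already from $\Omega_j \supseteq \Lambda_j$. As written, your argument would therefore equally ``prove'' the false statement that $s \in W_j(\gamea)$ implies $(s,0) \in W_j(\gamea_j^\dagger)$. That statement fails concretely: for $j = 0$ and a play with non-initial colors $3\,2\,2\,2\ldots$ the color sequence is in $\Omega_0$ but not $\Lambda_0$, the value stabilizes at $3 \equiv 1 \pmod 2$, and the recoloring then paints every tail state with color $3$, so player $0$ loses the product play. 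The whole point of the product construction is that it characterizes the winning core rather than the winning region, and the distinction lives exactly in the step you got wrong. The repair is short and is what the paper does: let $e$ be the largest color occurring in the projected play $\rho$; since $c(\rho) \in \Lambda_j$ the $j$-dominating clause gives $e \equiv j \pmod 2$, the values satisfy $v_i = e$ from some index $\ell$ on, and hence $c'((s_i,v_i)) = c(s_i)$ for all $i \geq \ell$.

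A secondary, harmless confusion: you worry about ``states on the stabilized tail that get recoloured to $w$ rather than to $c(s_i)$,'' but under the definition of $c'$ no such states exist --- once the value has parity $j$, \emph{every} state is recolored to its original color $c(s_i)$; the recoloring to the value $v$ occurs only when $v \equiv 1-j \pmod 2$. Your fallback reasoning for this phantom case (color $w$ has parity $j$ and is maximal, so it cannot hurt the $\Omega_j$ condition) is sound but unnecessary.
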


\begin{proof}
Let $s \in A_j(\gamea)$. Then player $j$ has a strategy $\sigma$ in $\gamea$ such that $c(\play(\gamea, s, \sigma)) \subseteq \Lambda_j$. Now, construct from this a strategy $\sigma'$ for player $j$ in $\gamea_j^\dagger$ defined by
$$\sigma'(h) = (\sigma(\st(h)), \max(v_\ell,c(\sigma(\st(h)))))$$
for every history $h = (s_0,v_0) ... (s_\ell, v_\ell)$ in $\gamea_j^\dagger$.

Consider an arbitrary play
$$\rho' = (s_0,v_0) (s_1,v_1) ... \in \play(\gamea_j^\dagger, (s,0), \sigma')$$
from $(s,0)$ compatible with $\sigma'$. By the definition of $\sigma'$ we have that $\rho = s_0 s_1 ... \in \play(\gamea, s, \sigma)$ and thus $c(\rho) \in \Lambda_j$ as for every $i \geq 0$ such that $(s_i,v_i) \in S_j$ we have $s_{i+1} = \sigma(s_0 ... s_i)$. 

Let $e$ be the largest color that occurs in $\rho$. As $c(\rho) \in \Lambda_j$ we have $e \equiv j \textup{ (mod } 2)$. Thus, there exists an $\ell$ such that $v_i = e$ for all $i \geq \ell$. This implies that $c'((s_i,v_i)) = c(s_i)$ for all $i \geq \ell$ by the definition of $\gamea_j^\dagger$. Thus, the sequence of colors occuring in $\rho'_{\geq \ell}$ is the same as in $\rho_{\geq \ell}$ and therefore $c(\rho') \in \Omega_j$ using Lemma \ref{lem:suffix_winning}. As $\rho'$ was chosen arbitrarily from $\play(\gamea_j^\dagger, (s,0), \sigma')$ we have $(s,0) \in W_j(\gamea_j^\dagger)$.
\end{proof}

\begin{lemma}
  Let $\gamea$ be a parity game and $s$ be a state. Then $(s,0) \in W_j(\gamea_j^\dagger)$ implies $s \in A_j(\gamea)$.
 
\end{lemma}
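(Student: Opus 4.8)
The plan is to invert the construction of the previous lemma. The key structural fact I would exploit first is that in $\gamea_j^\dagger$ the value component of a state is completely determined by the state history: along any play starting in $(s,0)$, after the prefix $s_0 s_1 \ldots s_i$ the value is forced to be $w_i = \max_{0 < k \leq i} c(s_k)$. Consequently the plays of $\gamea_j^\dagger$ starting in $(s,0)$ are exactly the images $\gamma_\gamea(\rho,0)$ of plays $\rho$ of $\gamea$ starting in $s$. So from a winning strategy $\sigma'$ for player $j$ from $(s,0)$ in $\gamea_j^\dagger$ I would define a strategy $\sigma$ in $\gamea$ by projecting $\sigma'$ through the bijection, namely $\sigma(h) = \st(\sigma'(\gamma_\gamea(h,0)))$ for each history $h$ ending in a state of $S_j$. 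A routine check shows that if $\rho \in \play(\gamea, s, \sigma)$ then $\gamma_\gamea(\rho,0) \in \play(\gamea_j^\dagger, (s,0), \sigma')$, so $c'(\gamma_\gamea(\rho,0)) \in \Omega_j$ because $\sigma'$ is winning.

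Next I would fix an arbitrary $\rho = s_0 s_1 \ldots \in \play(\gamea, s, \sigma)$, write $\rho' = \gamma_\gamea(\rho,0)$, and analyse the largest non-initial color $e = \max\{c(s_i) \mid i \geq 1\}$. Since the running maxima $w_i$ are non-decreasing and bounded by $d$, there is an index $\ell$ with $w_i = e$ for all $i \geq \ell$. The argument then splits on the parity of $e$. If $e \equiv 1-j \pmod 2$, then by the definition of $c'$ every state $(s_i,w_i)$ with $i \geq \ell$ is recolored to $w_i = e$, so $c'(\rho')$ is eventually constantly $e$ and hence $c'(\rho') \in \Omega_{1-j}$; this contradicts $c'(\rho') \in \Omega_j$. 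Therefore $e \equiv j \pmod 2$ for every such $\rho$.

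It remains to see that $e \equiv j \pmod 2$ forces $c(\rho) \in \Lambda_j$. Here $w_i = e \equiv j \pmod 2$ for $i \geq \ell$, so $c'((s_i,w_i)) = c(s_i)$ for all $i \geq \ell$; thus the color sequences of $\rho'_{\geq \ell}$ and $\rho_{\geq \ell}$ coincide, and Lemma \ref{lem:suffix_winning} turns $c'(\rho') \in \Omega_j$ into $c(\rho) \in \Omega_j$. Combined with $\max\{c(s_i) \mid i \geq 1\} = e \equiv j \pmod 2$, this is exactly the condition $c(\rho) \in \Lambda_j$. Since $\rho$ was arbitrary, $c(\play(\gamea, s, \sigma)) \subseteq \Lambda_j$, i.e. $s \in A_j(\gamea)$.

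The hard part, conceptually, is recognizing why the recoloring makes the product game do the right thing: a play whose largest non-initial color has the wrong parity $1-j$ is eventually painted a single color of parity $1-j$ and so becomes losing for player $j$ in $\gamea_j^\dagger$, which is precisely what rules it out under a winning strategy; once the parity is correct the recoloring becomes invisible in the limit and ordinary parity winning transfers back. The remaining steps --- the deterministic value-tracking, the strategy projection, and the compatibility check --- are routine.
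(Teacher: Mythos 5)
Your proposal is correct and takes essentially the same route as the paper's own proof: the same projected strategy $\sigma(h) = \st(\sigma'(\gamma_\gamea(h,0)))$, the same transfer of an arbitrary play $\rho$ through the bijection to $\rho' = \gamma_\gamea(\rho,0) \in \play(\gamea_j^\dagger,(s,0),\sigma')$, and the same parity analysis of the recorded maximum value. Your explicit contradiction argument for why the largest non-initial color must have parity $j$ merely spells out what the paper compresses into ``by the definition of $\gamea_j^\dagger$'', and concluding $c(\rho) \in \Lambda_j$ directly from the definition of $\Lambda_j$ rather than citing Proposition~\ref{prop:cons_iff_jdom} is an inessential stylistic difference.
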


\begin{proof}

Suppose that $(s,0) \in W_j(\gamea_j^\dagger)$. Then player $j$ has a strategy $\sigma'$ in $\gamea_j^\dagger$ such that $c(\play(\gamea_j^\dagger, (s,0), \sigma')) \subseteq \Omega_j$. Define a strategy $\sigma$ of player $j$ for every history $h$ in $\gamea$ by
$$\sigma(h) = \st(\sigma'(\gamma_\gamea(h,0)))$$

Consider an arbitrary play $\rho = s_0 s_1 ... \in \play(\gamea, s, \sigma)$ from $s$ compatible with $\sigma$. By the definition of $\sigma$ and $\gamea_j^\dagger$ we have that
$$\rho'= \gamma_\gamea(\rho, 0)$$
belongs to $\play(\gamea_j^\dagger, (s,0), \sigma')$ and thus $c(\rho') \in \Omega_j$. This implies that the greatest color $e$ occuring in $\rho'$ satisfies $e \equiv j \textup{ (mod } 2)$ by the definition of $\gamea_j^\dagger$. Further, the greatest color $e'$ that occurs infinitely often in $\rho'$ also satisfies $e' \equiv j \textup{ (mod } 2)$. We have that $e$ and $e'$ are also the greatest color occuring and greatest color occuring infinitely often respectively in $\rho$. Using Proposition \ref{prop:cons_iff_jdom} this implies that $c(\rho) \in \Lambda_j$. As $\rho$ is an arbitrary play in $\play(\gamea, s, \sigma)$ we have $s \in A_j(\gamea)$.
 \end{proof}

This means that solving parity games can be done in polynomial time if and only if winning cores can be computed in polynomial time. We also have that computing winning cores is in $\np \cap \conp$ and $\up \cap \coup$ like parity games \cite{Jur98}.

\begin{theorem}
 Computing winning cores is in $\np \cap \conp$ and $\up \cap \coup$.
\end{theorem}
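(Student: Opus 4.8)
The plan is to reduce the decision version of winning core computation to the problem of deciding membership in a winning region of a parity game, which is already known to lie in the desired classes. First I would phrase the relevant decision problem precisely: given a parity game $\gamea$, a player $j$, and a state $s$, decide whether $s \in A_j(\gamea)$. Computing the whole winning core amounts to deciding this membership for each of the $n$ states, so it suffices to place this decision problem in $\np \cap \conp$ and $\up \cap \coup$.

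Next I would invoke the product game construction together with the two lemmas immediately preceding the statement. These lemmas together establish the equivalence $s \in A_j(\gamea) \iff (s,0) \in W_j(\gamea_j^\dagger)$. Since $\gamea_j^\dagger$ has state set $S \times \{0,\dots,d\}$, it has exactly $n(d+1)$ states and a transition relation of polynomial size, and it can clearly be built from $\gamea$ in polynomial time by reading off the definition of $R'$ and $c'$. Hence the map sending the instance $(\gamea, j, s)$ to the winning-region instance $(\gamea_j^\dagger, j, (s,0))$ is a polynomial-time many-one reduction from winning core membership to winning region membership in parity games.

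Finally I would conclude by closure under reductions. Deciding whether a given state belongs to $W_j$ of a parity game is in $\up \cap \coup$ by \cite{Jur98} and in $\np \cap \conp$ by \cite{EJS01}. Each of the classes $\np$, $\conp$, $\up$, and $\coup$ is closed under polynomial-time many-one reductions, so composing the reduction above with the known membership immediately yields that winning core membership, and therefore computing winning cores, lies in $\np \cap \conp$ and in $\up \cap \coup$.

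The argument is essentially a reduction bookkeeping exercise, so there is no deep obstacle once the two preceding lemmas are in hand; the only points requiring care are (i) verifying that the product game is genuinely of polynomial size and polynomial-time constructible, so that the reduction is valid, and (ii) appealing to the closure of $\up$ and $\coup$ under polynomial-time many-one reductions, which is the one closure property slightly less routine than that for $\np$ and $\conp$ but nonetheless standard.
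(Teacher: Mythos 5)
Your proposal is correct and matches the paper's argument exactly: the paper also derives this theorem directly from the two preceding lemmas establishing $s \in A_j(\gamea)$ if and only if $(s,0) \in W_j(\gamea_j^\dagger)$, noting that $\gamea_j^\dagger$ has size polynomial in $\gamea$, and then inherits the $\np \cap \conp$ and $\up \cap \coup$ bounds from solving parity games \cite{Jur98}. Your additional bookkeeping (stating the decision version explicitly and invoking closure of the four classes under polynomial-time many-one reductions) is exactly the routine detail the paper leaves implicit.
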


 This fact is important as it makes the search for a polynomial-time algorithm for computing winning cores a viable direction in the search for a polynomial-time algorithm for solving parity games. This had not been the case if computing winning cores was e.g. $\np$-hard (which is still possible, but only if $\np = \conp$).

\section{A polynomial-time approximation algorithm}
\label{sec:approx}
A natural approach for computing the winning core is to try to apply Theorem \ref{theo:n_jdoms} using an algorithm resembling the standard algorithm for solving B\" uchi games using repeated attractor computations \cite{Tho95}. The idea is to first compute the set of states from which player $j$ can ensure that the play begins with one $j$-dominating sequence, then use this to compute the set of states from which player $j$ can ensure that the play begins with two consecutive $j$-dominating sequences, then three consecutive $j$-dominating sequences and so on until convergence. However, this turns out not to be so simple to do efficiently. In this section we propose a polynomial-time algorithm using the intuition above, but which is only guaranteed to compute a subset of the winning core. However, as we will see, this algorithm turns out to be very fast in practice and solves many games completely. We will show to make it work using $O(d \cdot n^2 \cdot (n+m))$ time and $O(d+n+m)$ space.

\subsection{The basic algorithm}

For a parity game $\gamea$, a player $j$ and integer $i \geq 0$ we define sets $B^i_j(\gamea)$ as underapproximations of $A^i_j(\gamea)$ by $B^0_j(\gamea) = S$ and by letting $B^{i+1}_j(\gamea)$ be the set of states from which player $j$ can force the play to begin with a $j$-dominating sequence ending in $B^i_j(\gamea)$. More formally, for every $i \geq 0$ let
\\

$B_j^{i+1}(\gamea) = \{s \in B_j^i(\gamea) \mid \exists \textup{ a strategy } \sigma \textup{ for player } j .$
\\

$\forall \rho \in \play(\gamea, s, \sigma) . \exists k . \rho_{\leq k}$ is $j$-dominating and $\rho_k \in B^i_j(\gamea) \}$
\\
\\
Note that this sequence converges in at most $n$ steps since it is decreasing. Let the limit of this sequence be $B_j(\gamea)$.

\begin{proposition}
\label{prop:b_subset_a}
 $B_j(\gamea) \subseteq A_j(\gamea)$
\end{proposition}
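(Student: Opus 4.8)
The plan is to exploit the fact that the decreasing sequence $B^0_j(\gamea) \supseteq B^1_j(\gamea) \supseteq \dots$ stabilises after at most $n$ steps, so that $B_j(\gamea) = B^N_j(\gamea) = B^{N+1}_j(\gamea)$ for some $N \leq n$. Unfolding the definition of $B^{N+1}_j(\gamea)$, this equality yields the key \emph{fixed-point property}: from every state $s \in B_j(\gamea)$ player $j$ has a strategy $\sigma_s$ such that every play in $\play(\gamea, s, \sigma_s)$ has some prefix $\rho_{\leq k}$ that is $j$-dominating with $\rho_k \in B^N_j(\gamea) = B_j(\gamea)$ again. First I would record this property explicitly, emphasising that the endpoint $\rho_k$ of each such $j$-dominating prefix lands back inside $B_j(\gamea)$, so that a strategy of the same form is available from it.

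The heart of the argument is then a stitching construction. Fix $s \in B_j(\gamea)$ and build a single history-dependent strategy $\sigma$ as follows. Play $\sigma_s$ from $s$ until the least index $k_1$ at which the prefix seen so far is $j$-dominating and ends in a state $t_1 := \rho_{k_1} \in B_j(\gamea)$; by the fixed-point property such an index exists and is finite along every adversary response. At that point \emph{reset}, and from $t_1$ play $\sigma_{t_1}$ applied to the suffix of the history since $k_1$ (a legitimate history starting at $t_1$), until the least $k_2 > k_1$ with $\rho_{k_1}\dots\rho_{k_2}$ being $j$-dominating and $t_2 := \rho_{k_2} \in B_j(\gamea)$, and so on. Since each $t_\ell \in B_j(\gamea)$, the strategy $\sigma_{t_\ell}$ is always available, and the only memory required is the current reset index together with the local history since that reset, so $\sigma$ is a well-defined strategy for player $j$.

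Every play $\rho$ compatible with $\sigma$ from $s$ thus comes equipped with indices $0 = k_0 < k_1 < k_2 < \dots$ such that each segment $\rho_{k_\ell}\rho_{k_\ell+1}\dots\rho_{k_{\ell+1}}$ is $j$-dominating. This is exactly the statement that $\rho$ begins with an infinite number of consecutive $j$-dominating sequences, whence $c(\rho) \in \Lambda_j$ by the characterisation in Proposition \ref{prop:cons_iff_jdom}. As $\rho$ was arbitrary, $c(\play(\gamea, s, \sigma)) \subseteq \Lambda_j$, so $s \in A_j(\gamea)$; and since $s$ was an arbitrary state of $B_j(\gamea)$, this gives $B_j(\gamea) \subseteq A_j(\gamea)$.

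I expect the main obstacle to lie in the bookkeeping of the stitching construction rather than in any deep idea: one must verify that the reset indices $k_\ell$ are all finite along \emph{every} adversary play (which is precisely where the universal quantifier over $\play(\gamea, s, \sigma_s)$ in the definition of $B^{i+1}_j(\gamea)$ is indispensable, so that resets continue indefinitely), and that $\sigma$ is a genuine function of the observed history. The single-state overlap between consecutive $j$-dominating segments, already visible in Figure \ref{fig:dominating}, must be aligned with the definition of consecutive $j$-dominating sequences so that the indices $k_0 < k_1 < k_2 < \dots$ fit the required pattern; beyond this the argument is routine.
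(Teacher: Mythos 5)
Your proposal is correct and follows essentially the same route as the paper's own proof: both extract the fixed-point property of the stabilised sequence $B_j(\gamea) = B^{N}_j(\gamea) = B^{N+1}_j(\gamea)$ and then stitch the local strategies $\sigma_{s_0}, \sigma_{s_1}, \dots$ together at the successive reset points, so that every compatible play begins with an infinite number of consecutive $j$-dominating sequences and hence lies in $\Lambda_j$. Your write-up is in fact slightly more careful than the paper's, which leaves the fixed-point unfolding and the finiteness of the reset indices implicit.
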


\begin{remark}
Note that we do not always have $B_j(\gamea) = A_j(\gamea)$. For instance, this is not the case in the game in Figure \ref{fig:not_dominion} where $A_0(\gamea) = \{s_0,s_3\}$ but $B^1_0(\gamea) = \{s_0,s_1,s_3\}$, $B^2_0(\gamea) = \{s_0,s_3\}$, $B^3_0(\gamea) = \{s_3\}$ and $B^4_0(\gamea) = B_0(\gamea) = \emptyset$. The reason is that from $s_0$ player 0 cannot force the play to ever get back to the set $\{s_0,s_3\}$ as player 1 controls all states. It can be shown that $A^1_j(\gamea) = B^1_j(\gamea)$ always, but there are parity games where $A^2_j(\gamea) \neq B^2_j(\gamea)$.
\end{remark}

However, as we shall see later, this underapproximation of the winning core is very good as a tool to compute underapproximations of winning regions in parity games. And in practice, it is often good enough to compute the entire winning regions. Moreover, we will show that $B_j(\gamea)$ can be computed in polynomial time and linear space. To motivate the practicability we note that the underapproximation $B_j(\gamea)$ contains all fatal attractors for player $j$. It was shown in \cite{HKP13,HKP14} that just being able to compute fatal attractors is enough to solve a lot of games in practice. In Figure \ref{fig:more_than_fatal} it is an example that $B_j(\gamea)$ can contain even more states than just fatal attractors.

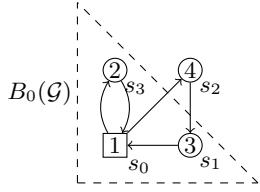
\begin{figure}
\begin{center}
\begin{tikzpicture}

\tikzstyle{every node}=[ellipse, draw=black,
                        inner sep=0pt, minimum width=9pt, minimum height=9pt]

\draw (5,3) node [rectangle, label = below right: $s_0$] (s0) {$1$};    
\draw (6,3) node [label = below right: $s_1$] (s1) {$3$};
\draw (6,4) node [label = below right: $s_2$] (s2) {$4$};
\draw (5,4) node [label = below right: $s_3$] (s3) {$2$};
\draw (4,3.7) node [draw=none] (s4) {$B_0(\gamea)$};

\path[->] (s0) edge (s2) {};
\path[->] (s0) edge [bend left] (s3) {};
\path[->] (s1) edge (s0) {};
\path[->] (s2) edge (s1) {};
\path[->] (s3) edge [bend left] (s0) {};

\path[-] (4.5,2.5) edge [dashed] (4.5,4.9);
\path[-] (4.5,2.5) edge [dashed] (6.9,2.5);
\path[-] (6.9,2.5) edge [dashed] (4.5,4.9);

\end{tikzpicture}
\end{center}

\caption{A parity game with no fatal attractors where $B_0(\gamea) = A_0(\gamea) = \{s_0,s_1,s_3\}$.}
\label{fig:more_than_fatal}
\end{figure}

Let $[1,d]_j = \{v \in \{1,...,d\} | v \equiv j \textup{ (mod } 2)\}$. We can now show the following proposition which provides us with a naive way to compute $B^{i+1}_j(\gamea)$ given that we know $B^i_j(\gamea)$. 

\begin{lemma}
\label{lem:prod_approx}
 Let $i \geq 0$ be an integer and $j$ be a player. Then $s \in B^{i+1}_j(\gamea)$ if and only if $(s,0) \in \attr_j(\gamea_j^\dagger, B^i_j(\gamea) \times [1,d]_j)$
\end{lemma}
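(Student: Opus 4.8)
The plan is to prove the equivalence by induction on $i$, after isolating the combinatorial heart of the statement as a single \emph{correspondence} between forcing short $j$-dominating prefixes in $\gamea$ and attractor reachability in the product game. Concretely, I would first establish the following claim for an arbitrary target set $X \subseteq S$ and state $s$: player $j$ has a strategy $\sigma$ such that every $\rho \in \play(\gamea, s, \sigma)$ has an index $k$ with $\rho_{\leq k}$ being $j$-dominating and $\rho_k \in X$, if and only if $(s,0) \in \attr_j(\gamea_j^\dagger, X \times [1,d]_j)$. Taking $X = B^i_j(\gamea)$, the left-hand side is exactly the forcing condition appearing in the definition of $B^{i+1}_j(\gamea)$.

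To prove the claim I would exploit the bijection $\gamma_\gamea(\cdot, 0)$ between paths from $s$ in $\gamea$ and paths from $(s,0)$ in $\gamea_j^\dagger$. The key observation is that along $\gamma_\gamea(\rho, 0)$ the value component $w_k$ equals $\max_{0 < \ell \le k} c(\rho_\ell)$, the largest color seen at a non-initial state; hence a history reaches a state of $X \times [1,d]_j$ precisely when its prefix has at least one transition, ends in $X$, and has greatest non-initial color of parity $j$ --- that is, precisely when $\rho_{\leq k}$ is $j$-dominating and $\rho_k \in X$. Since the product construction preserves state ownership ($S'_j = S_j \times \{0,\dots,d\}$) and the value component is a deterministic function of the underlying history, strategies transfer in both directions along $\gamma_\gamea$, so ``player $j$ forces reaching $X \times [1,d]_j$'' translates verbatim into membership in $\attr_j(\gamea_j^\dagger, X \times [1,d]_j)$. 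Note also that $(s,0)$ lies in the target only if $0 \in [1,d]_j$, which never holds, so reaching the target always consumes at least one transition, matching the requirement that a $j$-dominating sequence contain a transition.

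With the claim in hand the induction on $i$ proceeds as follows. In the base case $i = 0$ we have $B^0_j(\gamea) = S$, so $s \in B^1_j(\gamea)$ is equivalent to player $j$ forcing a $j$-dominating prefix (the condition $\rho_k \in S$ being vacuous), which by the claim with $X = S$ is $(s,0) \in \attr_j(\gamea_j^\dagger, S \times [1,d]_j)$. For the inductive step the forward direction is immediate: if $s \in B^{i+1}_j(\gamea)$ then it satisfies the forcing condition for $X = B^i_j(\gamea)$, so the claim gives $(s,0) \in \attr_j(\gamea_j^\dagger, B^i_j(\gamea) \times [1,d]_j)$.

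The one point that needs care --- and which I expect to be the main obstacle --- is the reverse direction, since the definition of $B^{i+1}_j(\gamea)$ additionally requires $s \in B^i_j(\gamea)$, whereas the attractor a priori knows nothing about this membership. Here I would combine monotonicity with the induction hypothesis: from $B^i_j(\gamea) \subseteq B^{i-1}_j(\gamea)$ and monotonicity of $\attr_j$ in its target, $(s,0) \in \attr_j(\gamea_j^\dagger, B^i_j(\gamea) \times [1,d]_j)$ implies $(s,0) \in \attr_j(\gamea_j^\dagger, B^{i-1}_j(\gamea) \times [1,d]_j)$, and the induction hypothesis (the lemma at index $i-1$) identifies the latter with $s \in B^i_j(\gamea)$. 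Together with the forcing condition supplied by the claim, this yields $s \in B^{i+1}_j(\gamea)$, completing the induction.
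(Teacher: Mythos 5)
Your proof is correct, and its combinatorial core is the same as the paper's: the paper also proves the equivalence by transferring strategies between $\gamea$ and $\gamea_j^\dagger$ along $\gamma_\gamea(\cdot,0)$ --- explicitly via the liftings $\sigma'(h) = (\sigma(\st(h)), \max(v_\ell, c(\sigma(\st(h)))))$ and $\sigma(h) = \st(\sigma'(\gamma_\gamea(h,0)))$ --- and observing, just as you do, that with initial value $0$ the value component records the largest non-initial color, so the target $B^i_j(\gamea) \times [1,d]_j$ is hit exactly at indices $k > 0$ where $\rho_{\leq k}$ is $j$-dominating and $\rho_k \in B^i_j(\gamea)$. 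Where you genuinely depart from the paper is the outer induction on $i$: the paper proves the lemma directly for each fixed $i$, and in its $(\Leftarrow)$ direction concludes $s \in B^{i+1}_j(\gamea)$ from the forcing condition alone, silently passing over the side condition $s \in B^i_j(\gamea)$ that is part of the definition of $B^{i+1}_j(\gamea)$. You correctly identified this as the delicate point and discharged it soundly: $B^i_j(\gamea) \subseteq B^{i-1}_j(\gamea)$ holds by definition, monotonicity of $\attr_j$ in its target then yields $(s,0) \in \attr_j(\gamea_j^\dagger, B^{i-1}_j(\gamea) \times [1,d]_j)$, and the induction hypothesis converts this to $s \in B^i_j(\gamea)$. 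So your write-up is, if anything, more complete than the paper's; the paper's statement is nonetheless true, since essentially your monotonicity-plus-induction argument shows that the forcing condition with target $B^i_j(\gamea)$ already implies $s \in B^i_j(\gamea)$, but the paper never records this step. Two further points you handle well and the paper leaves implicit: that $0 \notin [1,d]_j$ forces $k > 0$ (so a genuine transition is consumed, as $j$-domination requires), and that attractor membership coincides with the ability to force reaching the target, the standard equivalence the paper also invokes without comment. Factoring the correspondence out for an arbitrary target $X$ is a mild but harmless generalization of the paper's statement for $X = B^i_j(\gamea)$.
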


Note that Lemma \ref{lem:prod_approx} makes us able to compute $B_j(\gamea)$ in time $O(d \cdot n \cdot (n+m))$ and space $O(d \cdot(n+m))$. This is because the sequence converges in at most $n$ steps and in each step we just have to compute the attractor set in $\gamea_j^\dagger$ which has $O(d \cdot n)$ states and $O(d \cdot m)$ transitions.

\subsection{Improving the complexity}

We will now show how to improve the space complexity to $O(d+n+m)$ while keeping the same time complexity. Indeed, we will show how to compute $B_j(\gamea)$ without actually having to construct $\gamea_j^\dagger$ explicitly. This makes a very large difference in practice, especially when the number of colors is large.

First, we need the reward order $\prec_j$ for player $j$ on colors which was introduced in \cite{VJ00}. It is defined by
$$v \prec_j u \Leftrightarrow (v < u \wedge u \equiv j \textup{ (mod 2)}) \vee (u < v \wedge v \equiv 1-j \textup{ (mod 2)})$$
We let it be defined for $0$ in this way as well. Intuitively, the preference order tells us which color player $j$ would rather like to see during a play. For instance, if $d$ is even then
$$d-1 \prec_0 d-3 \prec_0 ... \prec_0 1 \prec_0 0 \prec_0 2 \prec_0 ... \prec_0 d-2 \prec_0 d$$
We can now show that the attractor set needed to compute $B^{i+1}_j(\gamea)$ in $\gamea_j^\dagger$ is upward-closed in the following sense.

\begin{lemma}
\label{lem:upward_closed}
 Let $s \in S$, $i \geq 0$ and $k \geq 0$ be such that $(s,k) \in \attr_j(\gamea_j^\dagger, B^i_j(\gamea) \times [1,d]_j)$. Then for all $k' \succ_j k$ we have
 $$(s,k') \in \attr_j(\gamea_j^\dagger, B^i_j(\gamea) \times [1,d]_j) $$
\end{lemma}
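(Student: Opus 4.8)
The plan is to prove the slightly stronger statement that $\attr_j(\gamea_j^\dagger, T)$, with $T = B^i_j(\gamea) \times [1,d]_j$, is closed upward in the second coordinate with respect to the non-strict reward order $\preceq_j$; the claim for $\succ_j$ is then the special case of strict inequality. The engine of the argument is an auxiliary monotonicity property tying the numerical maximum (which is exactly how the value component evolves along edges of $\gamea_j^\dagger$) to the reward order, so I would isolate that first.

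First I would establish the sub-lemma: for all values $a,b \in \{0,\dots,d\}$ with $a \preceq_j b$ and every color $c \in \{1,\dots,d\}$ we have $\max(a,c) \preceq_j \max(b,c)$, with $\max$ the usual numerical maximum. The proof splits on whether each of $a,b$ is $\leq c$. When both are $\leq c$ the two maxima equal $c$, and when both exceed $c$ they equal $a$ and $b$, so the hypothesis transfers directly. The mixed cases are where the parity bookkeeping happens: if, say, $b > c \geq a$, then the definition of $\prec_j$ shows $a \preceq_j b$ can hold only when $b$ has parity $j$, and in that situation $c \preceq_j b$ holds for every $c < b$ (an unfavorable $c$ sits below $0$ which sits below the favorable large $b$); the remaining mixed case $a > c \geq b$ is symmetric and forces $a$ to have parity $1-j$. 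This case analysis, checking the four sign/parity configurations, is the main obstacle and the only genuinely delicate step.

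Given the sub-lemma, I would prove by induction on the attractor rank $m$ the statement $(\star_m)$: for every $(t,\ell) \in \attr^m_j(\gamea_j^\dagger, T)$ and every $\ell' \succeq_j \ell$ one has $(t,\ell') \in \attr_j(\gamea_j^\dagger, T)$. For the base case $(t,\ell)\in T$ gives $t \in B^i_j(\gamea)$ and $\ell \equiv j \pmod 2$, so any $\ell' \succ_j \ell$ must (by the definition of $\prec_j$, the second disjunct being blocked by the parity of $\ell$) be numerically larger with parity $j$, whence $(t,\ell')\in T$ again. For the step take $(t,\ell)\in\attr^{m+1}_j$ and $\ell'\succ_j\ell$ (the subcase $(t,\ell)\in\attr^m_j$, and $\ell'=\ell$, being immediate). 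The successors of $(t,\ell')$ are precisely the $(t',\max(\ell',c(t')))$ for $(t,t')\in R$, sharing their state component with the successors $(t',\max(\ell,c(t')))$ of $(t,\ell)$, and the sub-lemma gives $\max(\ell,c(t')) \preceq_j \max(\ell',c(t'))$. If $t\in S_j$, choose the witness with $(t',\max(\ell,c(t')))\in\attr^m_j$ and apply the induction hypothesis to land $(t',\max(\ell',c(t')))$ in $\attr_j$; if $t\in S_{1-j}$, every successor $(t',\max(\ell,c(t')))$ lies in $\attr^m_j$, so the induction hypothesis places all successors of $(t,\ell')$ in $\attr_j$. In either case the fixpoint closure of $\attr_j$ yields $(t,\ell')\in\attr_j$, completing the induction; applying $(\star_m)$ at the rank of $(s,k)$ proves the lemma.
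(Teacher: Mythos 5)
Your proof is correct, but it takes a genuinely different route from the paper. The paper argues semantically, via strategies: from the attractor strategy $\sigma$ witnessing $(s,k) \in \attr_j(\gamea_j^\dagger, B^i_j(\gamea) \times [1,d]_j)$ it builds a projected strategy $\sigma_k$ on $\gamea$ and lifts it to a strategy $\sigma'$ from $(s,k')$, so that every play from $(s,k')$ shadows a $\sigma$-play from $(s,k)$ that hits the target at some index $q$; a case analysis on whether the shadow value $w_q$ equals $w_0$ (forcing $w_0 \equiv j \pmod 2$, hence $v_q = v_0$ with $v_0 \equiv j$) or $w_q$ is the maximum color seen so far (so $v_q = v_0$ or $v_q = w_q$) then places $(s_q, v_q)$ in $B^i_j(\gamea) \times [1,d]_j$. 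You instead argue purely combinatorially on the fixpoint computation: your sub-lemma that $a \preceq_j b$ implies $\max(a,c) \preceq_j \max(b,c)$ is exactly right (your four-case analysis is sound, the mixed cases correctly exploiting that $a \prec_j b$ with $a < b$ forces $b \equiv j \pmod 2$, and with $b < a$ forces $a \equiv 1-j \pmod 2$), and since every successor of $(t,\ell')$ in $\gamea_j^\dagger$ is $(t',\max(\ell',c(t')))$ with the same state component as the corresponding successor of $(t,\ell)$, your rank induction goes through in both the $S_j$ and $S_{1-j}$ subcases, using finiteness of the game to close under the attractor operator. Your approach buys two things: it avoids strategy constructions and the $\gamma_\gamea$/projection machinery entirely, and, more importantly, the rank-indexed statement $(\star_m)$ is precisely the invariant that justifies the space-efficient implementation the paper sketches right after the lemma (maintaining, per state, the $\prec_j$-minimal value in the attractor computed so far), so your proof doubles as a correctness argument for that algorithm. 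What the paper's route buys is economy of means in context: the projected strategies $\sigma_v$ are reused from the proof of Lemma \ref{lem:prod_approx}, and the argument stays uniformly at the level of plays, which matches the style of the surrounding proofs.
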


\begin{proof}
For a strategy $\sigma$ in $\gamea_j^\dagger$ and $v \in \{0,...,d\}$ define a strategy $\sigma_v$ for every history $s_0 ... s_\ell$ in $\gamea$ by
$$\sigma_v(s_0...s_\ell) = \st(\sigma((s_0,v) ... (s_\ell, \max(v, \max_{0 < \ell' \leq \ell} c(s_{\ell'})))))$$

 Suppose $(s,k) \in \attr_j(\gamea_j^\dagger, B^i_j(\gamea) \times [1,d]_j)$. Then there exists a strategy $\sigma$ for player $j$ in $\gamea_j^\dagger$ such that for every $\rho \in \play(\gamea, (s,k), \sigma)$ there exists $q$ such that $\rho_q \in B^i_j(\gamea) \times [1,d]_j$. Let $k' \succ_j k$ and let $\sigma'$ be a strategy for player $j$ in $\gamea_j^\dagger$ defined by
 \\
 
 $\sigma'((s_0,v_0) ... (s_\ell, v_\ell))  = (\sigma_{k}(s_0...s_\ell), \max(v_\ell, c(\sigma_{k}(s_0...s_\ell)))) $
 \\
 \\
 for every history $h = (s_0,v_0) ... (s_\ell,v_\ell)$ with $s_0 = s$ and $v_0 = k'$. 
 
 Now, consider a given play
 $$\rho' = (s_0,v_0) (s_1,v_1) ... \in \play(\gamea_j^\dagger, (s,k'), \sigma')$$
 By the definition of $\sigma'$ we have $s_0 s_1 ... \in \play(\gamea, s, \sigma_k)$. Further, the sequence
 $$\rho = (s_0,w_0) (s_1,w_1) ...$$
 where $w_0 = k$ and $w_{\ell+1} = \max(w_\ell, c(s_{\ell+1}))$ for all $\ell \geq 0$ belongs to $\play(\gamea, (s,k), \sigma)$. Thus, there exists $q$ such that $\rho_q \in B^i_j(\gamea) \times [1,d]_j$. This means that either

 \begin{enumerate}
  \item $w_0 \equiv j \textup{ (mod 2)}$ and $w_0 = w_q$ or
  
  \item $w_q = \max(c(s_\ell))_{1 \leq \ell \leq q} > w_0$
 \end{enumerate}

 In the first case we have $v_0 > w_0$ and $v_0 \equiv j \textup{ (mod 2)}$ since $w_0 \equiv j \textup{ (mod 2)}$ and $v_0 \succ_j w_0$ which implies $v_q = v_0$. Thus, in this case $(s_q,v_q) \in B^i_j(\gamea) \times [1,d]_j$.
 
 In the second case, if $v_0 \leq w_0$ then $(s_q,v_q) = (s_q,w_q) \in B^i_j(\gamea) \times [1,d]_j$ immediately. On the other hand, suppose $v_0 > w_0$. Since $v_0 \succ_j w_0$ this implies $v_0 \equiv j \textup{ (mod 2)}$ by the definition of $\succ_j$. Since either $v_q = v_0$ or $v_q = \max(c(s_\ell))_{1 \leq \ell \leq q} = w_q$ this implies $(s_q,v_q) \in B^i_j(\gamea) \times [1,d]_j$.
\end{proof}

We can use Lemma \ref{lem:upward_closed} to compute $\attr_j(\gamea_j^\dagger, B^i_j(\gamea) \times [1,d]_j)$ as follows. In each step of the attractor computation we store for each state $s \in S$ the $\prec_j$-smallest value $k$ such that $(s,k)$ belongs to the part of the attractor set computed so far. Thus, in each step we store an $n$-dimensional vector $\textbf{k} = (k_0,...,k_{n-1})$ of these values, one for each state. In each step of the attractor computation we compute the $\prec_j$-smallest values $\textbf{k'}$ in the next step of the attractor computation based on $\textbf{k}$. Using a technique similar to the way the standard attractor set can be computed in time $O(n+m)$ \cite{dAHK98} the computation of $\attr_j(\gamea_j^\dagger, B^i_j(\gamea) \times [1,d]_j)$ can be done in this fashion in time $O(d \cdot (n+m))$. Thus, $B_j(\gamea)$ can be computed using $O(n+m+d)$ space and $O(d \cdot n \cdot (n+m))$ time.

\subsection{Partially solving parity games}

Using a similar idea as in the algorithm from Section \ref{sec:complexity_wc} we present an algorithm for solving parity games partially which relies on the underapproximation $B_j(\gamea)$. It can be seen in Figure \ref{fig:algo_approx}.

 \begin{figure}
 \centering
  \centering
  \algtext*{EndIf}
 \algtext*{EndFor}

\begin{algorithmic}
\State \textsc{PartialSolver}($\gamea$):


\State $A \gets B_0(\gamea)$
\If{$A \neq \emptyset$}
  \State $A' = \attr_0(\gamea, A)$
  \State $(W_0,W_1) \gets \textsc{PartialSolver}(\gamea \restriction (S \setminus A'))$
  \State \Return $(A' \cup W_0, W_1)$
\EndIf
\State $A \gets B_1(\gamea)$
\If{$A \neq \emptyset$}
  \State $A' = \attr_1(\gamea, A)$
  \State $(W_0,W_1) \gets \textsc{PartialSolver}(\gamea \restriction (S \setminus A'))$
  \State \Return $(W_0, A' \cup W_1)$
\EndIf

\State \Return $(\emptyset, \emptyset)$

\end{algorithmic}
\caption{A partial solver for parity games based on winning cores}
\label{fig:algo_approx}
\end{figure}

This algorithm uses the procedure outlined in the previous subsection for computing the underapproximation $B_j(\gamea)$. It is guaranteed to return underapproximations of the winning regions according to Proposition \ref{prop:b_subset_a}. Further, as each call to the algorithm makes at most one recursive call to a game with fewer states there are at most $O(n)$ recursive calls in total. Thus, the algorithm for partially solving parity games runs in time $O(d \cdot n^2 \cdot (n+m))$. It can be implemented to use $O(n+m+d)$ space.

\subsection{Quality of approximation}

For approximation algorithms a widely used notion is that of approximation ratio (see e.g. \cite{WS11}) which is used to give guarantees on the value of an approximation.

A meaningful way to define approximation ratio in parity games is to say that an algorithm is an $\alpha$-approximation algorithm for $0 < \alpha \leq 1$ if the algorithm always decides the winning player of at least $\lceil \alpha\cdot n \rceil$ states where $n$ is the number of states in the game. The problem with this, however, is that if there exists a polynomial-time $\alpha$-approximation algorithm for some $0 < \alpha \leq 1$ then this algorithm can be used to solve parity games completely in polynomial time. Indeed, one could run such an algorithm and remove the attractor sets of the winning states it finds. Then, run the approximation algorithm on the remaining game and continue in the same fashion until the entire winning regions are computed.

This tells us that it will probably be hard to show that there exists a polynomial-time $\alpha$-approximation algorithm as this would show solvability of parity games in polynomial time. In particular, our partial solver is not an $\alpha$-approximation algorithm.

A game that the partial solver cannot solve is the one in Figure \ref{fig:not_dominion}. The reason is that from every state player 1 can force the play to leave as well as stay outside of the winning core for player 0. This simple example implies that the algorithm is not guaranteed to solve games completely on standard subclasses of games investigated in the litterature such as games with bounded tree-width \cite{Obd03}, bounded DAG-width \cite{BDHK06} and other games with restrictions on the game graph \cite{DKT12}. Though, the algorithm always solves B\"uchi games completely and it does so in time $O(nm)$.

Despite the lack of theoretical guarantees we will show that the algorithm performs remarkably well in practice, with respect to solving games completely and with respect to running time.

\section{Experimental results}

We present experimental results for the improved version of the winning core approximation algorithm presented in Section \ref{sec:approx}, it is called the \texttt{WC} algorithm for the remainder of this section.

The experimental results are both performed to investigate how often the algorithm solves games completely and to investigate the running-time of the algorithm in practice compared to existing parity game solvers. The algorithm has been implemented in OCaml on top of the \textsc{PgSolver} framework \cite{FL09}.

We both compare with results for state-of-the-art complete solvers implemented in the \textsc{PgSolver} framework, namely

\begin{itemize}
 \item \texttt{Zie}: Zielonkas algorithm \cite{Zie98}
 
 \item \texttt{DD}: Dominion decomposition algorithm \cite{JPZ06}
 
 \item \texttt{SI}: Strategy improvement algorithm \cite{VJ00}
 
 \item \texttt{SPM}: Small progress measures algorithm \cite{Jur00}
 
 \item \texttt{BS}: Big step algorithm \cite{Sch07}
\end{itemize}
and with the partial solver \texttt{psolB} from \cite{HKP13,HKP14} that is based on fatal attractor computation. The experiments with the \texttt{WC} algorithm and the other solvers from the \textsc{PgSolver} framework have been performed on a machine with an Intel$^\circledR$ Core$^{TM}$ i7-4600M CPU with 4 2.90GHz processors and 15.6GiB memory.  All optimizations of the \textsc{PgSolver} framework were disabled in all experiments. The \texttt{WC} algorithm uses the same basic data structures as the other solvers from the \textsc{PgSolver} framework. All results of the partial solver \texttt{psolB} are taken from \cite{HKP14}. Thus, one should be careful about these results as it was implemented in Scala and experiments were run on a different machine.

\subsection{Benchmark games}

Experiments have been performed on benchmark games from the classes \texttt{Clique Games}, \texttt{Ladder Games}, \texttt{Jurdzinski Games}, \texttt{Recursive Ladder Games}, \texttt{Model Checker Ladder Games}, \texttt{Towers of Hanoi}, \texttt{Elevator Verification} and \texttt{Language Inclusion} of the \textsc{PgSolver} framework. \texttt{WC} solved all these benchmark games completely. As reported in \cite{HKP14} the \texttt{psolB} partial solver solves all games completely except for the \texttt{Elevator Verification} games.

Comparison of running time of the complete solvers and \texttt{WC} can be seen in Figure \ref{fig:experiments_benchmark} for selected benchmarks\footnote{For the recursive ladder games some solvers were much better with an odd input parameter and some were much better with an even input parameter. Thus, for each input $k$ in the data set, the running time on both input $k$ and $k+1$ was measured and the worst result is displayed in the plot.}. It can be seen that in the experiments \texttt{WC} never performs much worse than the best state-of-the-art complete solvers and in some cases it vastly outperforms the complete solvers. This is also the case for the results not shown here. Thus, it seems to be very robust compared to the best complete solvers each of which have games on which they perform poorly compared to the rest.

In Table \ref{tab:benchmark_results} we compare \texttt{WC} to the best existing partial solver \texttt{psolB} \cite{HKP14} with respect to the size of benchmark games solvable within 20 minutes. \texttt{WC} vastly outperforms \texttt{psolB} in all cases considered solving games with between 1.65 and 421 times as many states in 20 minutes depending on the benchmark.

\begin{figure}
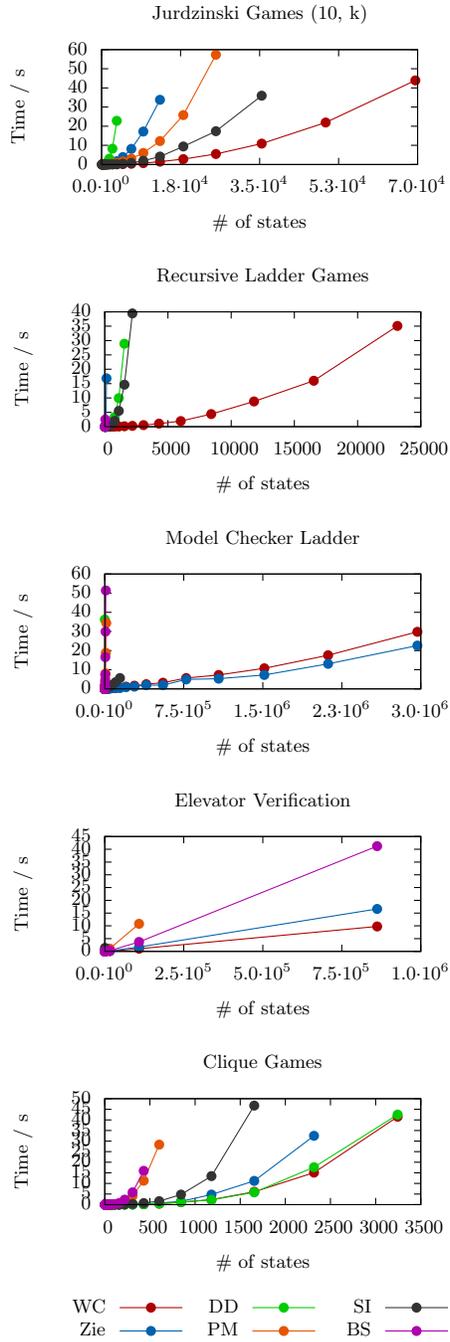

\centering
\scalebox{.8}{
 \input{jurdm=10.tex}
  }
  \scalebox{.8}{
 \input{recladder.tex}}
 \scalebox{.8}{
 \input{mcladder.tex}}
 \scalebox{.8}{
 \input{elevator.tex}}
\scalebox{.8}{
 \input{clique.tex}}

 \caption{Performance for benchmark games.}
 \label{fig:experiments_benchmark}
\end{figure}

\begin{table}
\begin{center}
 \begin{tabular}{l | r r | r r }
 Game & \multicolumn{2}{c}{\texttt{psolB}} & \multicolumn{2}{c}{\texttt{WC}}\\
 \hline
 & $k$ & $|S|$ & $k$ & $|S|$ \\
 \hline
  Clique & $5.232$ & $5.232$ & $8.979$ & $8.979$ \\
  Ladder & $1.596.624$ & $3.193.248$ & $7.308.357$ & $14.616.714$ \\
  Jur(10,k) & $2.890$ & $121.380$ & $4.784$ & $200.928$ \\
  Jur(k,10) & $4.380$ & $175.220$ & $96.881$ & $3.875.260$ \\
  Jur(k,k) & $200$ & $160.400$ & $635$ & $1.614.170$ \\
  RecLad & $2.064$ & $10.320$ & $14.008$ & $70.040$ \\
  MCLad & $12.288$ & $36.865$ & $5.178.332$ & $15.534.997$ \\
  Hanoi & $10$ & $236.196$ & $13$ & $6.377.292$ \\
  Elev & $6$ & $108.336$ & $8$ & $7.744.224$ \\
 \end{tabular}

\end{center}
\caption{The table shows the maximum input parameter $k$ and number $|S|$ of states for which the solvers terminated within 20 minutes. The numbers for \texttt{psolB} are from \cite{HKP14}.}
\label{tab:benchmark_results}

\end{table}

\subsection{Random games}

Although random games are not necessarily good representatives for real-life instances of parity games they can give us some indication of the quality of a partial solver. In order to compare with the results in \cite{HKP14} we have used the same program with the same parameters for generating random games. 

The games are generated using the \texttt{randomgame} function in the \textsc{PgSolver} framework which takes as input $n,d,\ell,u$ where $n$ is the number of states, the color of each state is chosen uniformly at random in $[1,d]$ and each node has a number of successors chosen uniformly at random in $[\ell,u]$ without any self-loops. 16 different configuration settings were chosen as in \cite{HKP14}: $n = 500$ for all parameter settings, $d \in \{5,50,250,500\}$ and $(\ell,u) \in \{(1,5),(5,10),(1,100),(50,250)\}$. 100.000 games were solved for each configuration and the results are shown in Table \ref{tab:random_results}.

\begin{table}
\begin{center}
 \begin{tabular}{c c r r r }
 $d$ & $(l,u)$ & {$\begin{array}{c}\textup{\# n.c.s.}\\\texttt{psolB}\end{array}$} & {$\begin{array}{c}\textup{\# n.c.s.}\\\texttt{lift(psolB)}\end{array}$} & {$\begin{array}{c}\textup{\# n.c.s.}\\\texttt{WC}\end{array}$}\\
 \hline
  5 & $(1,5)$ & 1275 & 233 & 258\\
  50 & $(1,5)$ & 1030 & 43 & 9\\
  250 & $(1,5)$ & 1138 & 36 & 16\\
  500 & $(1,5)$ & 1086 & 35 & 12 \\
  \hline
  5 & $(5,10)$ & 0 & 0 & 0\\
  50 & $(5,10)$ & 1 & 0 & 0\\
  250 & $(5,10)$ & 2 & 0 & 0\\
  500 & $(5,10)$ & 2 & 0 & 0 \\
 \end{tabular}

\end{center}
\caption{Table shows the number random of games out of 100.000 that were not completely solved (\# n.c.s.) by the solvers for each configuration. For all configurations with $(\ell,u) \in \{(1,100),(50,250)\}$ all solvers solved all games completely.}
\label{tab:random_results}

\end{table}

It can be seen that the algorithm only failed to solve 295 of the 1.600.000 games completely and thus solved 99.98\% of the games completely. For the 295 games that were not solved completely by the winning core algorithm it still found the winning player for 56\% of the states on average. Also note that the algorithm only failed to solve games with a very low out-degree as was the case for \texttt{psolB}. For the dense games it solved all the games completely.

Compared to \texttt{psolB} the \texttt{WC} solver does very well. The partial solver \texttt{lift(X)} is a generic solver from \cite{HKP14} which uses a partial solver X to improve in cases where X does not solve the complete game. It also runs in polynomial time but gives a very large overhead in practice as it potentially calls the solver X a quadratic number of times in the number of transitions of a game. Even compared with this generic optimization of \texttt{psolB} the \texttt{WC} solver does well with respect to solving games completely.

\section{Concluding remarks}

We have introduced winning cores and motivated their importance by showing a number of interesting properties about them. In particular, they are interesting to investigate due to the fact that they are not necessarily dominions and because emptiness of the winning core of a player is equivalent to emptiness of the winning region of the player. Further, we have provided a new algorithm for solving parity games approximately which increases the size of parity games that can be solved in practice siginificantly compared to existing techniques. 

I want to thank Michael Reichhardt Hansen and Valentin Goranko for helpful comments and discussions.

\bibliographystyle{plain}
\bibliography{biblio}

\appendix
\section{Proof of Proposition \ref{prop:winning_iff_jdoms}}

\begin{proof}[Proof of Proposition \ref{prop:winning_iff_jdoms}]

 $(\Rightarrow)$ Let $\rho$ be a play such that $c(\rho) \in \Omega_j$. Let $e \equiv j \textup{ (mod 2})$ be the greatest color occuring infinitely often in $\rho$. Since there are only a finite number of different colors, there can only be a finite number of indices $i$ such that $c(\rho_i) > e$. Let $\ell$ be the largest such index. Further, let $\ell < i_0 < i_1 < ...$ be an infinite sequence of indices such that $c(\rho_{i_k}) = e$ for all $k \geq 0$. Such a sequence exists since $e$ occurs infinitely often in $\rho$. Now, $\rho_{\geq i_0}$ begins with an infinite number of consecutive $j$-dominating sequences, namely the sequences $\pi_k = \rho_{i_k} \rho_{i_{k}+1} ... \rho_{i_{k+1}}$ for $k \geq 0$.

 $(\Leftarrow)$ Let $\rho$ be a play with a suffix $\rho_{\geq i_0}$ that begins with an infinite number of consecutive $j$-dominating sequences. Let $i_0 < i_1 < ...$ be an infinite sequence of indices such that $\pi_{\ell} = \rho_{i_\ell} \rho_{i_\ell + 1} ... \rho_{i_{\ell+1}}$ is $j$-dominating for every $\ell \geq 0$.
 
 Now, suppose for contradiction that $c(\rho) \in \Omega_{1-j}$. Then the greatest color $e$ that occurs infinitely often in $\rho$ satisfies $e \equiv 1-j \textup{ (mod } 2)$. Now, $e$ is the color of a non-initial state in $\pi_\ell$ for an infinite number of indices $\ell$. Since every such $\pi_\ell$ is $j$-dominating there is an infinite number of states in $\rho$ with a color $e' > e$ such that $e' \equiv j \textup{ (mod } 2)$. Since there are only finitely many different colors, there exists a particular color $e'' > e$ which is the color of infinitely many states in $\rho$. This gives a contradiction since $e$ was chosen as the greatest color that occurs infinitely often in $\rho$. This implies that $c(\rho) \in \Omega_j$.
\end{proof}

\section{Proofs of Proposition \ref{prop:core_1_k} and \ref{prop:core_k}}

First we need a general lemma about infinite games.

\begin{lemma}
 \label{lem:jclosed}
 
 Let $T \subseteq S$ be $j$-closed and $\sigma'$ be a strategy for player $j$ in $\gamea \restriction T$. Then there exists a strategy $\sigma$ for player $j$ in $\gamea$ such that
 $$\play(\gamea, s, \sigma) = \play(\gamea \restriction T, s, \sigma')$$
 for all $s \in T$. Moreover, if $\sigma'$ is memoryless then $\sigma$ can be chosen to be memoryless as well.
\end{lemma}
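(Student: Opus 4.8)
The plan is to construct $\sigma$ by extending $\sigma'$ in the obvious way. On any history $h = s_0 \dots s_k$ that stays entirely inside $T$ and ends in a state $s_k \in S_j \cap T$, I set $\sigma(h) = \sigma'(h)$; on every other history ending in an $S_j$-state (necessarily one that has already left $T$) I let $\sigma$ pick an arbitrary successor permitted by $R$, which exists since $R$ is total. This makes $\sigma$ a total strategy for player $j$ in $\gamea$. For the memoryless case I set $\sigma(s) = \sigma'(s)$ for $s \in S_j \cap T$ and let $\sigma(s)$ be an arbitrary $R$-successor for $s \in S_j \setminus T$; this is plainly memoryless whenever $\sigma'$ is.

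Next I would establish the inclusion $\play(\gamea \restriction T, s, \sigma') \subseteq \play(\gamea, s, \sigma)$ for $s \in T$. This direction is immediate: any such play uses only transitions in $R' \subseteq R$, remains in $T$ by the definition of $\gamea \restriction T$, and on each of its $S_j$-states $\sigma$ agrees with $\sigma'$, so it is a play of $\gamea$ from $s$ that is compatible with $\sigma$.

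The heart of the argument is the reverse inclusion, where I would show that every $\rho = s_0 s_1 \dots \in \play(\gamea, s, \sigma)$ with $s_0 = s \in T$ in fact never leaves $T$, by induction on the prefix length. Assuming $s_0, \dots, s_i \in T$, I split on the owner of $s_i$: if $s_i \in S_j \cap T$ then $s_{i+1} = \sigma(s_0 \dots s_i) = \sigma'(s_0 \dots s_i) \in T$, because the prefix is a history of $\gamea \restriction T$ and $\sigma'$ only names successors inside $T$; if $s_i \in S_{1-j} \cap T$ then the first clause of $j$-closedness forces every $R$-successor of $s_i$ into $T$, so again $s_{i+1} \in T$. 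Hence $\rho$ stays in $T$, every one of its transitions lies in $R'$, and since $\sigma$ and $\sigma'$ coincide on $T$-histories ending in $S_j$-states, $\rho$ is a play of $\gamea \restriction T$ compatible with $\sigma'$. Combining the two inclusions yields the claimed equality, and the identical reasoning applied to the memoryless $\sigma$ handles the final sentence.

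I expect the only genuine subtlety to be bookkeeping around the definition of $\sigma$ on histories that have already left $T$ — these are irrelevant to the statement but must be specified for $\sigma$ to be a total strategy — together with making sure \emph{both} clauses of the definition of $j$-closed are used: clause (2) guarantees that $\gamea \restriction T$ is a genuine (total) parity game, so that $\sigma'$ exists and names only $T$-successors, while clause (1) is exactly what blocks player $1-j$ from escaping $T$ in the inductive step. There is no real combinatorial obstacle here; the content is entirely in correctly invoking $j$-closedness at the opponent's states.
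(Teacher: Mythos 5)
Your proposal is correct and follows essentially the same route as the paper's proof: the identical construction of $\sigma$ (agree with $\sigma'$ on histories inside $T$, arbitrary elsewhere), the same easy inclusion $\play(\gamea \restriction T, s, \sigma') \subseteq \play(\gamea, s, \sigma)$, and the same induction on prefix length using clause (1) of $j$-closedness at opponent states and $\sigma = \sigma'$ at player-$j$ states for the reverse inclusion. Your added remark that clause (2) guarantees totality of $R'$ (so that $\gamea \restriction T$ is a well-defined game and $\sigma'$ names only $T$-successors) is a point the paper leaves implicit, but it does not change the argument.
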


\begin{proof}
 Define $\sigma$ by $\sigma(h) = \sigma'(h)$ for every history $h$ that only contains states in $T$ and arbitrarily for all other histories.
 
 First we have that every play $\rho = s_0 s_1 ... \in \play(\gamea \restriction T, s, \sigma')$ where $s \in T$ belongs to $\play(\gamea, s, \sigma)$ as well since $\sigma(s_0 ... s_\ell) = \sigma'(s_0 ... s_\ell) = s_{\ell+1}$ for every prefix $s_0 ... s_\ell$ of $\rho$ such that $s_\ell \in S_j$ and $(s_\ell, s_{\ell+1}) \in R$ whenever $s_\ell \in S_{1-j}$.
 
 On the other hand, for every play $\rho = s_0 s_1 ... \in \play(\gamea, s, \sigma)$ where $s \in T$ we have that $s_i \in T$ for every $i \geq 0$. This can be shown by induction as follows. For the base case we have that $s_0 \in T$. For the induction step we have that whenever $s_i \in S_{1-j} \cap T$ then there exists no $t \in S \setminus T$ with $(s_i,t) \in R$ since $T$ is $j$-closed. Further, whenever $s_i \in S_j \cap T$ and $s_0,...,s_i \in T$ then $\sigma(s_0 ... s_i) = \sigma'(s_0 .., s_i) \in T$. This also implies that $\rho \in \play(\gamea \restriction T, s, \sigma')$. Thus, $\play(\gamea, s, \sigma) = \play(\gamea \restriction T, s, \sigma')$ for $s \in T$.
 
 Note that if $\sigma'$ is memoryless then $\sigma$ can be chosen to be memoryless as well.
\end{proof}

\begin{corollary}
 \label{cor:winjclosed}
 
 If $T \subseteq S$ is $j$-closed in $\gamea$ then $W_j(\gamea \restriction T) \subseteq W_j(\gamea)$.
\end{corollary}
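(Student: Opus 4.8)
The plan is to deduce this directly from Lemma \ref{lem:jclosed}, which already packages the essential work. First I would fix an arbitrary state $s \in W_j(\gamea \restriction T)$; since the state space of $\gamea \restriction T$ is exactly $T$, this immediately gives $s \in T$, so the hypothesis of Lemma \ref{lem:jclosed} is available. By definition of the winning region there is a strategy $\sigma'$ for player $j$ in $\gamea \restriction T$ with $c(\play(\gamea \restriction T, s, \sigma')) \subseteq \Omega_j$.

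Next I would invoke Lemma \ref{lem:jclosed} on $T$ and $\sigma'$ to obtain a strategy $\sigma$ for player $j$ in the full game $\gamea$ satisfying $\play(\gamea, s, \sigma) = \play(\gamea \restriction T, s, \sigma')$. Applying the coloring map $c$ to both sides of this equality of play sets yields $c(\play(\gamea, s, \sigma)) = c(\play(\gamea \restriction T, s, \sigma')) \subseteq \Omega_j$, so $\sigma$ is winning for player $j$ from $s$ in $\gamea$ and hence $s \in W_j(\gamea)$. As $s$ was arbitrary, the inclusion $W_j(\gamea \restriction T) \subseteq W_j(\gamea)$ follows.

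I expect no genuine obstacle here: all the difficulty — namely that a player-$j$ strategy confined to a $j$-closed set can be realized in the full game without player $1-j$ being able to escape $T$ — is discharged by Lemma \ref{lem:jclosed}. The only points worth stating explicitly are that the winning states of the restricted game lie in $T$, so that the lemma applies, and that equality of the play sets transfers the winning condition verbatim, since the parity objective $\Omega_j$ depends only on the color sequences of the plays and not on any further structure of the game in which they are produced.
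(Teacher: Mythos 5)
Your proof is correct and coincides with the paper's intended derivation: the corollary appears immediately after Lemma~\ref{lem:jclosed} with no separate proof precisely because the argument is the one you give, namely extending a winning strategy $\sigma'$ of the restricted game to a strategy $\sigma$ in $\gamea$ via the lemma and transferring the winning condition through the equality $\play(\gamea, s, \sigma) = \play(\gamea \restriction T, s, \sigma')$, using that the restricted game inherits its coloring from $\gamea$. Your explicit remarks that $W_j(\gamea \restriction T) \subseteq T$ (so the lemma applies) and that $\Omega_j$ depends only on color sequences are exactly the right points to make precise.
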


We are now read to prove Proposition \ref{prop:core_1_k} and \ref{prop:core_k}.

\begin{proof}[Proof of Proposition \ref{prop:core_1_k}]
 
Suppose first that $s \in A_{1-k}(\gamea')$. Then there exists a strategy $\sigma'$ for player $1-k$ in $\gamea'$ such that every play $\rho \in \play(\gamea', s, \sigma')$ begins with an infinite number of consecutive $(1.k)$-dominating sequences. Let $\sigma$ be a strategy in $\gamea$ for player $1-k$ defined by $\sigma(h) = \sigma'(h)$ for histories $h$ that only contain states from $S'$. Let $\sigma$ be defined arbitrarily for all other histories. We now have that $\play(\gamea, s, \sigma) = \play(\gamea', s, \sigma')$ as player $k$ does not control any state in $S'$ with a transition to $U$ and $\sigma$ only prescribes taking transitions that make the play stay in $S'$ if no state outside $S'$ is reached. This implies that $s \in A_{1-k}(\gamea)$.

Suppose on the other hand that $s \in A_{1-k}(\gamea)$. Then there exists a strategy $\sigma$ for player $1-k$ such that every play $\rho \in \play(\gamea, s, \sigma')$ begins with an infinite number of consecutive $(1-k)$-dominating sequences. As $d \equiv k \textup{ (mod 2)}$ it follows from Proposition \ref{prop:cons_iff_jdom} that no state in a play $\rho \in \play(\gamea, s, \sigma')$ is contained in $U$. Thus, we can define a strategy $\sigma'$ in $\gamea'$ by $\sigma'(h) = \sigma(h)$ for every history with initial state $s$ and obtain $\play(\gamea, s, \sigma) = \play(\gamea', s, \sigma')$. This implies that $s \in A_k(\gamea')$.
\end{proof}

\begin{proof}[Proof of Proposition \ref{prop:core_k}]
 
Suppose first that $s \in A_{k}(\gamea'')$. Then there exists a strategy $\sigma''$ for player $k$ in $\gamea''$ such that every play $\rho \in \play(\gamea'', s, \sigma'')$ begins with an infinite number of consecutive $k$-dominating sequences. Let $\sigma$ be a strategy in $\gamea$ for player $k$ defined by $\sigma(h) = \sigma''(h)$ for histories $h$ that only contain states from $S''$. Let $\sigma$ be defined arbitrarily for all other histories. We now have that $\play(\gamea, s, \sigma) = \play(\gamea'', s, \sigma'')$ as player $1-k$ does not control any state in $S''$ with a transition to $S \setminus S''$ and $\sigma$ only prescribes taking transitions that make the play stay in $S''$ if no state outside $S''$ is reached. This implies that $s \in A_{k}(\gamea)$.

On the other hand suppose that $s \in A_{k}(\gamea)$. Then there exists a strategy $\sigma$ for player $k$ in $\gamea$ such that every play $\rho \in \play(\gamea, s, \sigma)$ begins with an infinite number of consecutive $k$-dominating sequences. Note that no such play has a state contained in $\attr_{1-k}(\gamea, A_{1-k}(\gamea')) = \attr_{1-k}(\gamea, A_{1-k}(\gamea))$ because all such states are winning for player $1-k$. Therefore, it is possible to define a strategy $\sigma''$ in $\gamea''$ for player $k$ by $\sigma''(h) = \sigma(h)$ for every history $h$ in $\gamea''$ with initial state $s$. Further, we get $\play(\gamea, s, \sigma) = \play(\gamea'', s, \sigma'')$ which implies $s \in A_{k}(\gamea'')$.
\end{proof}

\section{Proof of Proposition \ref{prop:memoryless}}

\begin{proof}[Proof of Proposition \ref{prop:memoryless}]

 First, we define the reward order $\prec_j$ for player $j$ on colors which was introduced in \cite{VJ00}. It is defined by
$$v \prec_j u \Leftrightarrow (v < u \wedge u \equiv j \textup{ (mod 2)}) \vee (u < v \wedge v \equiv 1-j \textup{ (mod 2)})$$

 Let $m(\rho)$ denote the largest non-initial color of a play $\rho$. For each state $s_0 \in W_0(\gamea)$ there exists a play $\rho = s_0 s_1 ...$ such that $\rho \in \Omega_0$. In particular, let $\val(s_0) = m(\rho)$ for a play $\rho \in \Omega_j$ from $s_0$ such that $m(\rho)$ is maximal with respect to the reward order $\preceq_0$.
 
 For states $s_0$ such that $\val(s_0) \succeq_0 0$ we define $\dist(s_0)$ to be the length of the shortest history $\rho = s_0 ... s_\ell$ from $s_0$ to a state $s_\ell$ with color $\val(s_0)$ such that for all $0 < i < \ell$ we have $c(s_i) < \val(s_0)$ and such that $\val(s_\ell) \succeq_0 \val(s_0) - 1$. Note that such a history must exist since $\val(s_0) \succeq_0 0$.
 
 For states $s_0$ such that $\val(s_0) \prec_0 0$ we define $\dist(s_0)$ to be the length of the shortest history $\rho = s_0 ... s_\ell$ from $s_0$ to a state $s_\ell$ with color $\val(s_0)$ such that for all $0 < i < \ell$ we have $c(s_i) < \val(s_0)$ and such that $\val(s_\ell) \succ_0 \val(s_0)$. Note that such a history must exist since $\val(s_0) \prec_0 0$ and $s_0 \in W_0(\gamea)$.
 
 For states $s$ with $\val(s) \succeq_0 0$ and $\dist(s) > 1$ there must exist a successor $t$ of $s$ such that $\val(t) = \val(s), c(t) < \val(s)$ and $\dist(t) = \dist(s) - 1$. Define $\nextstate(s) = t$ for an arbitrary such successor $t$.
 
 For states $s$ with $\val(s) \succeq_0 0$ and $\dist(s) = 1$ there must exist a successor $t$ of $s$ such that $c(t) = \val(s)$ and $\val(t) \succeq_0 \val(s) - 1$. Define $\nextstate(s) = t$ for an arbitrary such successor $t$.
 
 For states $s$ with $\val(s) \prec_0 0$ and $\dist(s) > 1$ there must exist a successor $t$ of $s$ such that $\val(t) = \val(s), c(t) < \val(s)$ and $\dist(t) = \dist(s) - 1$. Define $\nextstate(s) = t$ for an arbitrary such successor $t$.
 
 For states $s$ with $\val(s) \prec_0 0$ and $\dist(s) = 1$ there must exist a successor $t$ of $s$ such that $c(t) = \val(s)$ and $\val(t) \succ_0 \val(s)$. Define $\nextstate(s) = t$ for an arbitrary such successor $t$.
 
 Now, define a memoryless strategy $\sigma$ for player 0 by $\sigma(s) = \nextstate(s)$ for every $s \in W_0(\gamea)$. For states in $W_1(\gamea)$ there are no transitions to states in $W_0(\gamea)$ as $S = S_0$. As player 0 cannot win from a state in $W_1(\gamea)$ all he can hope to achieve is a play that is not in $\Lambda_1$ which can be obtained if and only if the largest color in the play is even. This is also known as the weak parity condition. As weak parity games are memoryless determined \cite{Cha08} player 0 can use a memoryless optimal strategy from the weak parity game obtained by restricting $\gamea$ to states in $W_1(\gamea)$. Let $\sigma$ play in this way from states in $W_1(\gamea)$.
 
 We will now show that $\sigma$ is a memoryless optimal strategy for player $0$ with preference relation $\leq_0$. As argued above this is already the case from states in $W_1(\gamea)$. Thus, we now focus on states in $W_0(\gamea)$.
 
 Let $\rho$ be the single play in $\play(\gamea, s_0, \sigma)$ for a state $s_0 \in W_0(\gamea)$. We will now show that $\rho \in \Omega_0$ and that $m(\rho) = \val(s_0)$. These two properties imply that $\sigma$ ensures that $\play(\gamea, s, \sigma) \subseteq \Omega_0$ for every $s \in W_0(\gamea)$ and that $\play(\gamea, s, \sigma) \subseteq \Lambda_0$ for every $s \in A_0(\gamea)$ as states $s$ in $A_0(\gamea)$ are exactly those states with $\val(s) \succeq_0 0$.
 
 Let $\rho = s_0 s_1 ...$ and let $i_0 < i_1 < ...$ be all indices such that $\dist(s_{i_k}) = 1$ for every $k \geq 0$. Notice that there are infinitely many such indices since $\dist(s_i) - 1 = \dist(s_{i+1})$ whenever $\dist(s_i) > 1$ and $\dist(s_i) \geq 1$ for all $i \geq 0$.
 
 As argued before, whenever $\dist(s_\ell) > 1$ we have $\val(s_\ell) = \val(s_{\ell + 1})$. When $\dist(s_\ell) = 1$ and $\val(s_\ell) \succeq_0 0$ then $\val(s_{\ell + 1}) \succeq_0 \val(s_\ell) - 1$ which implies that $\val(s_\ell) \geq \val(s_{\ell + 1})$. Finally, when $\dist(s_\ell) = 1$ and $\val(s_\ell) \prec_0 0$ then $\val(s_\ell + 1) \succ_0 \val(s_\ell)$ which implies that $\val(s_\ell) \geq \val(s_{\ell+1})$. Thus, we have
 $$\val(s_0) \geq \val(s_1) \geq \val(s_2) \geq ...$$
 For all $\ell$ such that $\dist(s_\ell) > 1$ we have $\val(s_{\ell+1}) = \val(s_\ell)$. However, when $\dist(s_\ell) = 1$ and $\val(s_\ell) \prec_0 0$ then $\val(s_\ell) > \val(s_{\ell+1})$. As there are infinitely many $\ell$ such that $\dist(s_\ell) = 1$ there can only be finitely many $\ell$ such that $\val(s_\ell) \prec_0 0$. Thus, there exists $q$ such that
 $$0 \preceq_0 \val(s_q) = \val(s_{q+1}) = \val(s_{q+2}) = ...$$
 Note that for all $\ell > 0$ we have $\val(\ell) \geq c(\ell)$. However, we have $\val(s_{i_k}) = c(s_{i_k + 1})$ for all $k \geq 0$. Let $p$ be the smallest index such that $i_p > q$. Then this implies that $c(s_{i_p + 1}) = \val(s_{i_p})$ is the largest color that occurs infinitely often in $\rho$. Thus, $\rho \in \Omega_0$. In addition, note that $c(s_{i_0 + 1}) = \val(s_{i_0}) = \val(s_0)$ is the largest non-initial color that occurs in $\rho$. Thus, $m(\rho) = \val(s_0)$. This concludes the proof that $\sigma$ is a memoryless optimal strategy.
 
\end{proof}

\section{Proof of Proposition \ref{prop:b_subset_a}}

\begin{proof}[Proof of Proposition \ref{prop:b_subset_a}]

 For every $s \in B_j(\gamea)$ there exists a strategy $\sigma_s$ for player $j$ such that for every $\rho \in \play(\gamea, s, \sigma_s)$ there exists $k$ such that $\rho_{\leq k}$ is $j$-dominating and $\rho_k \in B_j(\gamea)$.
 
 Now, define a strategy $\sigma$ for player $j$ in $\gamea$ on histories with initial state $s_0 \in B_j(\gamea)$ (and arbitrarily for all other initial states) as follows. Let $\sigma$ play like $\sigma_{s_0}$ until it reaches a state $s_1$ in a way such that the sequence of states from $s_0$ to $s_1$ is $j$-dominating and $s_1 \in B_j(\gamea)$. From this point on $\sigma$ plays like $\sigma_{s_1}$ would do if the play started in $s_1$ until a state $s_2$ is reached in a way such that the sequence of states from $s_1$ to $s_2$ is $j$-dominating and $s_2 \in B_j(\gamea)$. Let $\sigma$ prescribe continuing in this fashion indefinitely.
 
 With this definition we have that $\play(\gamea, s_0, \sigma) \subseteq \Lambda_j$ for every $s_0 \in B_j(\gamea)$ as every play must begin with an infinite number of consecutive $j$-dominating sequences. Thus, $s_0 \in B_j(\gamea)$ implies $s_0 \in A_j(\gamea)$.
 \mycut{
 $$\sigma(s_0 ... s_k) = \sigma_{s_\ell}(s_\ell ... s_k)$$
 for every history $s_0 ... s_k$ where $0 \leq \ell \leq k$ is the largest index such that $s_0 ... s_\ell$ is $j$-dominating and $s_\ell \in B_j(\gamea)$. We let $\ell = 0$ if there is no such index.
 
 Consider how the play evolves when player $j$ plays according to $\sigma$ when the play starts in a state $s_0 \in B_j(\gamea)$. Note first that $\sigma$ plays like $\sigma_{s_0}$ until the sequence of states played so far is $j$-dominating and the last state is in $B_j(\gamea)$. This is eventually the case due to the definition of $\sigma_{s_0}$. Suppose this last state is $s_1$. From this point on $\sigma$ plays like $\sigma_{s_1}$ would starting in $s_1$ and until the sequence of states played so far starting from $s_1$ is $j$-dominating and such that the last state is in $B_j(\gamea)$. This is also eventually the case due to the definition of $\sigma_{s_1}$. Continuing this argument we see that a given play of $\sigma$ starting in $s_0$ begins with an infinite number of consecutive $j$-dominating sequences, namely the sequences from $s_0$ to $s_1$, from $s_1$ to $s_2$ and so on. Thus, $\sigma$ is a witness that $s_0 \in A_j(\gamea)$.
 }
\end{proof}

\section{Proof of Lemma \ref{lem:prod_approx}}

\begin{proof}[Proof of Lemma \ref{lem:prod_approx}]
$(\Rightarrow)$ Suppose that $s \in B^{i+1}_j(\gamea)$. Then there exists a strategy $\sigma$ for player $j$ in $\gamea$ such that for every $\rho = s_0 s_1 ... \in \play(\gamea, s, \sigma)$ there exists $k > 0$ such that $s_0 ... s_k$ is $j$-dominating and $s_k \in B^i_j(\gamea)$. Define a strategy $\sigma'$ in $\gamea_j^\dagger$ by
$$\sigma'(h) =( \sigma(\st(h)), \max(v_\ell,c(\sigma(\st(h))))) $$
for every history $h = (t_0,v_0) ... (t_\ell,v_\ell)$ in $\gamea_j^\dagger$. Let $\rho' = (s'_0,v'_0) (s'_1,v'_1) ... $ be an arbitrary play in $\play(\gamea_j^\dagger, (s,0), \sigma')$. Then $s'_0 s'_1 ... \in \play(\gamea, s, \sigma)$. Thus, there exists $k$ such that $s'_0 ... s'_k$ is $j$-dominating and $s'_k \in B^i_j(\gamea)$. This implies that $v'_{k} \equiv j \textup{ (mod 2)}$ and thus $(s'_k,v'_k) \in B^i_j(\gamea) \times [1,d]_j$. As $\rho'$ was arbitrarily chosen in $\play(\gamea_j^\dagger, (s,0), \sigma')$ this means that $(s,0) \in \attr_j(\gamea_j^\dagger, B^i_j(\gamea) \times [1,d]_j)$.

$(\Leftarrow)$ For the other direction suppose that $(s,0) \in \attr_j(\gamea_j^\dagger, B^i_j(\gamea) \times [1,d]_j)$. Then there exists a strategy $\sigma'$ for player $j$ in $\gamea_j^\dagger$ such that for every $\rho' = (s_0,v_0) (s_1,v_1) ... \in \play(\gamea_j^\dagger, (s,0), \sigma')$ there exists $k$ such that $\rho'_k \in B^i_j(\gamea) \times [1,d]_j$. Define a strategy $\sigma$ for player $j$ in $\gamea$ by
$$\sigma(h) = \st(\sigma'(\gamma_\gamea(h,0)))$$
for every history $h$ in $\gamea$. Let $\rho = s_0 s_1 ...$ be an arbitrary play in $\play(\gamea, s, \sigma)$. By the definition of $\sigma$ and $\gamea_j^\dagger$ we have that
$$\rho'= \gamma_\gamea(\rho, 0) $$
belongs to $\play(\gamea_j^\dagger, (s,0), \sigma')$. Thus, there exists $k$ such that $\rho'_k \in B^i_j(\gamea) \times [1,d]_j$. This implies that $s_0 ... s_k$ is $j$-dominating and that $s_k \in B^i_j(\gamea)$. As $\rho$ was chosen arbitrarily from $\play(\gamea, s, \sigma)$ this means that $s \in B^{i+1}_j(\gamea)$.
\end{proof}

\section{Additional Experimental Results}

Experimental results for the benchmarks that was left of the main part of the paper due to space restrictions can be seen in Figure \ref{fig:experiments_benchmark_2} and \ref{fig:experiments_benchmark_3}. It can be seen that the \texttt{WC} algorithm is the best performing solver in many cases here as well and that it is close to the performance of the best complete solver in the remaining cases.

\begin{figure}
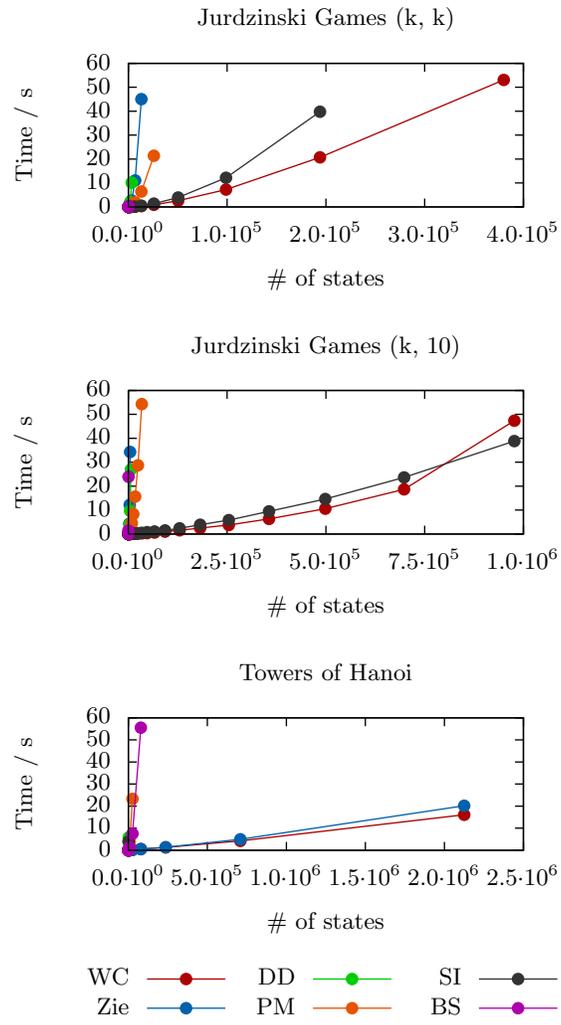

\centering
 \input{jurdnn.tex}
 
 \input{jurdn=10.tex}
 
 \input{hanoi.tex}

 \caption{Performance for benchmark games.}
 \label{fig:experiments_benchmark_2}
\end{figure}

\begin{figure}
\centering
 \input{langn=10.tex}
 
 \input{langm=10.tex}

 \input{ladder.tex}
 \caption{Performance for benchmark games.}
 \label{fig:experiments_benchmark_3}
\end{figure}

\end{document}